\documentclass[11pt]{article}

\usepackage{fullpage}
\usepackage{times}
\usepackage{amsmath,amsfonts,amssymb,amsthm}
\usepackage{enumerate}
\usepackage{xcolor}
\usepackage{xspace}
\usepackage{tikz}

\newtheorem{definition}{Definition}
\newtheorem{theorem}{Theorem}
\newtheorem{lemma}{Lemma}
\newtheorem{corollary}{Corollary}
\newtheorem{proposition}{Proposition}
\newtheorem{claim}{Claim}
\newtheorem*{remark}{Remark}
\newcommand{\E}{\mathop{\mathbb{E}}}

\newcommand{\cD}{\mathcal{D}}
\newcommand{\URsup}{$\textbf{UR}^{\subset}$}
\newcommand{\URsub}{\textbf{UR}^{\subset}}
\newcommand{\UR}{$\textbf{UR}$}
\newcommand{\ENC}{\mathsf{Enc}}
\newcommand{\DKL}{D_{\mathrm{KL}}}
\newcommand{\DKLver}[2]{D_{\mathrm{KL}}\left(\ooalign{$\genfrac{}{}{1.6pt}0{#1}{#2}$\cr$\color{white}\genfrac{}{}{.8pt}0{\phantom{#1}}{\phantom{#2}}$}\right)}
\newcommand{\sk}{\mathsf{sk}}
\newcommand{\ur}{\mathsf{ur}}

\newcommand{\supp}{\mathop{support}}
\newcommand{\suppfind}[1]{support-finding$_{{#1}}$}
\newcommand{\R}{\mathbb{R}}
\newcommand{\g}{\mathbf{g}}
\newcommand{\h}{\mathbf{h}}
\newcommand{\icost}{\mathcal{I}}

\let\E\relax
\DeclareMathOperator*{\E}{\mathbb{E}}
\let\Pr\relax
\DeclareMathOperator*{\Pr}{\mathbb{P}}

\begin{document}
\setcounter{page}{0}

\author{
Jelani Nelson\thanks{Harvard University. \texttt{minilek@seas.harvard.edu}. Supported by NSF grant IIS-1447471 and CAREER award CCF-1350670, ONR Young Investigator award N00014-15-1-2388 and DORECG award N00014-17-1-2127, an Alfred P.\ Sloan Research Fellowship, and a Google Faculty Research Award.}
  \and Huacheng Yu\thanks{Harvard University. \texttt{yuhch123@gmail.com}. Supported by ONR grant N00014-15-1-2388, a Simons Investigator Award and NSF Award CCF 1715187.}
}

\title{Optimal Lower Bounds for Distributed and Streaming Spanning Forest Computation}

\maketitle

\thispagestyle{empty}

\begin{abstract}
We show optimal lower bounds for spanning forest computation in two different models:
\begin{itemize}
\item One wants a data structure for fully dynamic spanning forest in which updates can insert or delete edges amongst a base set of $n$ vertices. The sole allowed query asks for a spanning forest, which the data structure should successfully answer with some given (potentially small) constant probability $\epsilon>0$. We prove that any such data structure must use $\Omega(n\log^3 n)$ bits of memory.
\item There is a referee and $n$ vertices in a network sharing public randomness, and each vertex knows only its neighborhood; the referee receives no input. The vertices each send a message to the referee who then computes a spanning forest of the graph with constant probability $\epsilon>0$. We prove the average message length must be $\Omega(\log^3 n)$ bits.
\end{itemize}

Both our lower bounds are optimal, with matching upper bounds provided by the AGM sketch \cite{AhnGM12} (which even succeeds with probability  $1 - 1/\mathrm{poly}(n)$). Furthermore, for the first setting we show optimal lower bounds even for low failure probability $\delta$, as long as $\delta > 2^{-n^{1-\epsilon}}$.
\end{abstract}

\newpage

\section{Introduction}\label{sec:intro}
Consider the incremental spanning forest data structural problem: edges are inserted into an initially empty undirected graph $G$ on $n$ vertices, and the data structure must output a spanning forest of $G$ when queried. The optimal space complexity to solve this problem is fairly easy to understand. For the upper bound, one can in memory maintain the list of edges in some spanning forest $F$ of $G$, using $O(|F|\log n) = O(n\log n)$ bits of memory. To process the insertion of some edge $e$, if its two endpoints are in different trees of $F$ then we insert $e$ into $F$; else we ignore $e$. The proof that this data structure uses asymptotically optimal space is straightforward. Consider that the following map from trees on $n$ labeled vertices must be an injection: fix a correct data structure $D$ for this problem, then for a tree $T$ feed all its edges one by one to $D$ then map $T$ to $D$'s memory configuration. This map must be an injection since $D.\mathbf{query}()$ will be different for different $T$ (the query result must be $T$ itself!). If $D$ uses $S$ bits of memory then it has at most $2^S$ distinct possible memory configurations. Since the set of all trees has size $n^{n-2}$ by Cayley's formula, we must thus have $S \ge (n-2)\log n$. A similar argument shows the same asymptotic lower bound even for Monte Carlo data structures which must only succeed with constant probability: by an averaging argument, there must exist a particular random seed that causes $D$ to succeed on a constant fraction of all spanning trees. Fixing that seed then yields an injection from a set of size $\Omega(n^{n-2})$ to $\{0,\ldots,2^S - 1\}$, yielding a similar lower bound.

What though is the optimal space complexity to solve the fully dynamic case, when the data structure must support not only edge insertions, but also deletions? The algorithm in the previous paragraph fails to generalize to this case, since if an edge $e$ in the spanning forest $F$ being maintained is deleted, without remembering the entire graph it is not clear how to identify an edge to replace $e$ in $F$. Surprisingly though, it was shown in \cite{AhnGM12} (see also \cite{KapronKM13}) that there exists a randomized Monte Carlo data structure, the ``AGM sketch'', solving the fully dynamic case using $O(n\log^3 n)$ bits of memory with failure probability $1/\mathrm{poly}(n)$. The sketch can also be slightly re-parameterized to achieve failure probability $1-\delta$ for any $\delta\in(0,1)$ while using $O(n\log(n/\delta)\log^2 n)$ bits of memory (see Appendix~\ref{sec:agm-small-fail}). Our first main result is a matching lower bound for nearly the full range of $\delta\in(0,1)$ of interest. Previously, no other lower bound was known beyond the simple $\Omega(n\log n)$ one already mentioned for the incremental case.

\paragraph{Our Contribution I.} We show that for any $2^{-n^{1-\epsilon}}<\delta<1-\epsilon$ for any fixed positive constant $\epsilon>0$, any Monte Carlo data structure for fully dynamic spanning forest with failure probability $\delta$ must use $\Omega(n\log(n/\delta)\log^2 n)$ bits of memory. Note that this lower bound cannot possibly hold for $\delta < 2^{-n}$, since there is a trivial solution using $\binom n2$ bits of memory achieving $\delta=0$ (namely to remember exactly which edges exist in $G$), and thus our lower bound holds for nearly the full range of $\delta$ of interest.

\bigskip

One bonus feature of the AGM sketch is that it can operate in a certain distributed sketching model as well. In this model, $n$ vertices in an undirected graph $G$ share public randomness together with an $(n+1)$st party we will refer to as the ``referee''. Any given vertex $u$ knows only the vertices in its own neighborhood, and based only on that information and the public random string must decide on a message $M_u$ to send the referee. The referee then, from these $n$ messages and the public random string, must output a spanning forest of $G$ with success probability $1-\delta$. The AGM sketch implies a protocol for this model in which the maximum length of any $M_u$ is $O(\log^3 n)$ bits, while the failure probability is $1/\mathrm{poly}(n)$. As above, this scheme can be very slightly modified to achieve failure probability $\delta$ with maximum message length $O(\log(n/\delta)\log^2 n)$ for any $\delta\in(0,1)$ (see Appendix~\ref{sec:agm-small-fail}).

\paragraph{Our Contribution II.} We show that in the distributed sketching model mentioned above, even success probability $\epsilon$ for arbitrarily small constant $\epsilon$ requires even the {\it average} message length to be $\Omega(\log^3 n)$ bits.

\bigskip

We leave it as an open problem to extend our distributed sketching lower bound to the low failure probability regime. We conjecture a lower bound of $\Omega(\log(n/\delta)\log^2 n)$ bits for any $\delta > 2^{-n^{1-\epsilon}}$.

Despite our introduction of the two considered problems in the above order, we show our results in the opposite order since we feel that our distributed sketching lower bound is easier to digest. In Section~\ref{sec_dist} we show our distributed sketching lower bound, and in Section~\ref{sec_str} we show our data structure lower bound. Before delving into the proof details, we first provide an overview of our approach in Section~\ref{sec:overview}.

\section{Proof Overview}\label{sec:overview}

The starting point for both our lower bound proofs is the randomized one-way communication complexity of {\it universal relation} (\UR) in the public coin model, for which the first optimal lower bound was given in \cite{KapralovNPWWY17}. In this problem Alice and Bob receive sets $S,T\subseteq[U]$, respectively, with the promise that $S\neq T$. Bob then, after receiving a single message from Alice, must output some $i$ in the symmetric difference $S\triangle T$. We will specifically be focused on the special case $\URsub$ in which we are promised that $T\subsetneq S$. In \cite{KapralovNPWWY17} it is shown that for any $\delta\in(0,1)$ bounded away from $1$, the one-way randomized communication complexity of this problem in the public coin model satisfies $R^{pub,\rightarrow}_\delta(\URsub) = \Theta(\min\{U, \log(1/\delta)\log^2(U/\log(1/\delta))\})$. Note that by Yao's minimax principle, this implies the existence of a ``hard distribution'' $\cD_\ur$ over $(S,T)$ pairs such that the distributional complexity under $\cD_\ur$ satisfies $D^{\cD_\ur,\rightarrow}_\delta = \Theta(\min\{U, \log(1/\delta)\log^2(U/\log(1/\delta))\})$.

\subsection{Distributed sketching lower bound}

\begin{figure}
\centering
\begin{tikzpicture}[vtx/.style={draw, circle, inner sep=0, minimum size=3pt}]
	\foreach \i in {0,1,2,3,4}
		\node[vtx] (A\i) at (0pt, \i*25 pt) {};
	\foreach \i in {0,1,2,3,4}
		\foreach \j in {0,1,2}
			\node[vtx] (B\i\j) at (-40pt, \i*25+\j*5-5 pt) {};
	\foreach \i in {0,1,2}
		\node[vtx] (C\i) at (40pt, \i*35+15 pt) {};
	\draw [rounded corners=5pt] (-8pt, -10pt) rectangle (8pt, 110pt);
	\draw [rounded corners=3pt] (-45pt, 89pt) rectangle (-35pt, 111pt);
	\draw [rounded corners=4pt] (33pt, 7pt) rectangle (47pt, 93pt);

	\node at (-40pt, -20pt) {\scriptsize $V_l$};
	\node at (0, -20pt) {\scriptsize $V_m$};
	\node at (40pt, -20pt) {\scriptsize $V_r$};
	\foreach \i in {1,2,3,4,5}
		\node at (-55pt, \i*25-25 pt) {\scriptsize $V_\i$};

	\draw (B42) -- (A4) -- (C1);
	\draw (A4) -- (C0);

	\draw (B22) -- (A2) -- (C0);
	\draw (B20) -- (A2);
\end{tikzpicture}
\caption{Hard instances for computing spanning forest.}\label{fig:hard}
\end{figure}

Our lower bound in the distributed sketching model comes from a series of two reductions. Assume there is a protocol $P$ on $n$-vertex graphs with expected average message length $L = o(\log^3 n)$ (for the sake of contradiction) and failure probability $1/3$, say (our argument extends even to failure probability $1-\epsilon$ for constant $\epsilon$). We show this implies that for any distribution $\cD_\sk$ over $n^{4/5}$-vertex graphs there is a protocol with failure probability at most $O(1/\mathrm{poly}(n))$ and expected message length at most $O(L)$. We then use this to show that for any distribution $\cD_\ur$ for $\URsub$ over a universe of size $U = n^{1/5}$, there exists a protocol with failure probability $O(\sqrt{L}/\mathrm{poly}(n)) = 1/\mathrm{poly}(n)$ and expected average message length $O(L)$, a contradiction, since it violates the lower bound of \cite{KapralovNPWWY17}.

We sketch the reduction from $\URsub$ to the graph sketching problem via Figure~\ref{fig:hard}. Suppose Alice and Bob are trying to solve an instance of $\URsub$, where they hold $S,T\subset[n^{1/5}]$. We set $|V_m| = \frac 12 n^{3/5}$, $|V_l| = |V_m|\cdot |V_r|$, and $|V_r|$ as well as the size of each block in $V_l$ equals $n^{1/5}$. Thus overall there are at most $|V| = n^{4/5}$ vertices. Both Alice and Bob will agree that the vertices in $V_m$ are named $v_1,v_2,\ldots,v_{|V_m|}$. The main idea is that $T$ will correspond to the neighbors of $v_i$ in the $i$th block of $V_l$ (we call this $i$th block ``$V_i$''), and any neighbors in $V_r$ correspond to elements of $S\setminus T$. Since $V_i$ may only connect to $v_i$, in any spanning forest of $G$ the only way that $V_i\cup\{v_i\}$ connects to the rest of the graph is from an edge between $v_i$ and $V_r$, i.e., finding a spanning forest allows one to recover one element in $S\setminus T$. To find a spanning forest, Alice and Bob would like to simulate the distributed sketching protocol on $G$. However, $S\setminus T$ is not known to either of the players, which the messages from $V_r$ depend on, hence Alice and Bob might not be able to simulate the protocol perfectly. We resolve this issue by exploiting the fact that $L\cdot |V_r| = o(|V_m|)$, and thus all the messages from $V_r$ combined only reveal $o(1)$ bits of information about the neighborhood of a random $v_i\in V_m$ and are thus unimportant for Alice and Bob to simulate perfectly.

The remainder of the sketch of the reduction is then as follows. Alice and Bob also use public randomness to pick a random injection $\beta:[n^{1/5}]\rightarrow[n^{4/5}]\setminus V_m$, and also to pick a random $i\in [|V_m|]$. They then attempt to embed their $\URsub$ instance in the neighborhood of $v_i$. Alice sends Bob the message $\sk(v_i)$ to Bob, as if $v_i$ had neighborhood $\beta(S)$. Bob then picks vertex names for $V_l, V_r$ randomly in $[n^{4/5}]\backslash V_m$ conditioned on $\beta(T)\subset V_i$ and $\beta([n^{1/5}]\backslash T)\subset V_r$. Then for all $j\neq i$, Bob samples random $S_j,T_j$ from $\cD_\ur$ and connects $v_j$ to $|T_j|$ random vertices in $V_j$ and $|S_j\backslash T_j|$ random vertices in $V_r$. Bob then computes all the sketches of every vertex other than $v_i$ then simulates the referee to output the $u\in V_r$ which maximizes the probability that $(v_i, u)$ is an edge, conditioned on $V_l,V_m,V_r,\sk(V_l),\sk(V_m)$.

\subsection{Data structure space lower bound}
Our data structure lower bound comes from a variant of a direct product theorem of \cite{BravermanRWY13} (we will explain the relevance soon). Their work had two main theorems: the first states that for any boolean function $f(x,y)$ and distribution $\mu$, if $C$ is such that the smallest achievable failure probability of any protocol for $f$ with communication cost $C$ on distribution $\mu$ is $\gamma$, then any protocol for $f^n$ (the $n$-fold product of $f$) on distribution $\mu^n$ with communication at most $T = \tilde{O}(\gamma^{5/2}C\sqrt n)$ must have success probability at most $\exp(-\Omega(\gamma^2 n))$. The second theorem is similar but only works for $\mu$ a product distribution, but with the benefit that the communication cost for $f^n$ need only be restricted to $T = \tilde{O}(\gamma^6 C n)$; this second theorem though does not apply, since one would want to apply this theorem with $\mu$ being a hard distribution $\cD_{\ur}$ for $\URsub$, which clearly cannot be a product distribution (Bob's input is promised to be a subset of Alice's, which means in a product distribution there must be some $D$ such that $T\subseteq D\subsetneq S$ always, in which case Alice can send one element of $S\backslash D$ using $O(\log U)$ bits and have zero error). In any case, even if $\cD_\ur$ were a product distribution, these theorems are too weak for our purposes. This is because the way in which one would {\it like} to apply such a direct product theorem is as follows. First, we would like to reduce $f^n$ to fully dynamic spanning forest for $f = \URsub$ (we give such a reduction in Section~\ref{sec_str_red}). Such a reduction yields that if a $T$-bit memory solution for fully dynamic spanning forest with success probability $1-\delta$ existed over a certain distribution over graphs, it would yield a one-way $T$-bit protocol for $f^n$ with success probability $1-\delta$ over $\mu^n$. Next, the natural next course of action would be to apply the {\it contrapositive} of such a direct product theorem: if a $T$-bit protocol for $f^n$ with success probability $\exp(-c \gamma^2 n) = 1-\delta$ exists over $\mu^n$, then there must exist a $C$-bit communication protocol for $f$ with failure probability $\gamma = O(\sqrt{\delta/n})$ over $\mu$. By the main result of \cite{KapralovNPWWY17} any such protocol must use $D^{\mu,\rightarrow}_{\sqrt{\delta/n}}(\URsub) = \Omega(\log(n/\delta)\log^2 n)$ bits of space (in our reduction $U = n$), so if the $C$ we obtained is less than this, then we would arrive at a contradiction, implying that our initial assumption that such a $T$-bit data structure for spanning forest exists must be false. Unfortunately the relationship between $C$ and $T$ in \cite{BravermanRWY13} is too weak to execute this strategy. In particular, we would like prove a lower bound for our $f^n$ of the form $D^{\mu^n,\rightarrow}_{\delta}(f^n) = \Omega(n\cdot D^{\mu,\rightarrow}_{\sqrt{\delta/n}}(f)) := n\cdot C$, where $D$ denotes distributional complexity. That is, we would like to obtain hardness results for $T = \Omega(n\cdot C)$, but the theorems of \cite{BravermanRWY13} only allow us to take $T$ much smaller; i.e.\ the first theorem requires $T = \tilde{O}(\gamma^{5/2}C\sqrt n)$ (and recall $\gamma = \Theta(\sqrt{\delta/n})$).

The main observation is that if one inspects the proof details in \cite{BravermanRWY13}, one discovers the following intermediate result (not stated explicitly as a lemma, but implicit in their proofs). We state now the restriction of this intermediate result to one-round, one-way protocols, which is what is relevant in our setting. Suppose for some boolean function $f(x,y)$ there exists a one-way protocol $P$ with failure probability $\delta$ and communication cost $T$ for $f^n$ on some $n$-fold product distribution $\mu^n$. Then there exists a distribution $\theta$ over triples and one-way protocol $P'$ for $f$ such that if $\pi$ is the distribution over $(X,Y,M)$, where $(X,Y)\sim\mu$ and $M$ Alice's message (which is a function of only $X$ and her private randomness), then
\begin{itemize}
\item the failure probability of $P'$ for inputs generated according to $\mu$ is $O(\sqrt{\delta/n})$, 
\item $\theta$ and $\pi$ are ``close'' (for some notion of closeness)
\item the {\it internal information cost} with respect to $\theta$, $I_\theta(M ; X | Y)$, is $O(T/n)$.
\end{itemize}

One point to note is that the $\theta$ distribution above is not guaranteed to correspond to a valid communication protocol, i.e.\ for $(X,Y,M)\sim \theta$, we are not promised that $M$ is a function of only $X$ and Alice's private randomness. In order to make use of the above direct product theorem, we then prove a distributional lower bound for $\URsub$ for some hard distribution $\cD_\ur$ which states that for any distribution $\theta$ as above, the internal information cost $I_\theta(M ; X | Y)$ must be at least the $\URsub$ lower bound mentioned above. Such a proof follows with minor differences from the proof in \cite{KapralovNPWWY17}; we provide all the details in Appendix~\ref{app_urlb}.

We remark that other works have provided related direct sum or direct product theorems, e.g.\ \cite{BarYossefJKS04,BarakBCR13,MolinaroWY13,Jain15,JPY16,BRWY13ICALP}. The work \cite{BarYossefJKS04} (see \cite[Theorem 1.5]{BarakBCR13} for a crisp statement) proves a direct sum theorem for computing $f$ on $n$ independent input drawn from some distribution $\mu$, under internal information cost. The downside of direct sum theorems, as studied in this work and \cite{BarakBCR13}, is that they only aim to show that the cost of computing $n$ copies of a function is at least $n$ times the cost of computing one copy with the same failure probability. In our case though, we would like to argue that computing $n$ copies requires {\it more} than $n$ times the cost, since we would like to say that computing $n$ copies of $\URsub$ with overall failure probability say, $1/3$, requires $n$ times the cost of computing a single copy with failure probability $O(1/\sqrt n)$, i.e.\ the cost multiplies by an $n\log n$ factor. Such theorems, which state that the success probability of low-cost protocols must go down quickly as $n$ increases, are known in the literature as {\it direct product} theorems. A direct product theorem similar to what we want in our current application was shown in \cite{MolinaroWY13}, but unfortunately the cost of computing $f^n$ with failure probability $\delta$ in that work is related to the cost of computing $f$ by a protocol that fails with probability $\delta/n$ (which is what we want) but that is {\it allowed to output `\textsf{Fail}'}, i.e.\ abort, with constant probability! Thus the main theorem in that work cannot be used to obtain a tight lower bound, since it is known that if one is allowed to abort with constant probability, then there is a $\URsub$ protocol that is actually a factor $\log n$ cheaper \cite{KapralovNPWWY17}.
The works \cite{Jain15,JPY16,BRWY13ICALP} are the most relevant, which proved direct product theorems with different trade-offs.
Similar to the situation of \cite{BravermanRWY13}, their direct product theorems may not be applied as a black-box to our lower bound.
However, by careful examinations of their proofs, it is possible to obtain a similar result to the one we derived from \cite{BravermanRWY13}.
Another point that one needs to pay attention to in the previous direct product theorems is that, they usually work in the regime where the error-per-instance $\epsilon$ is a constant, and prove that overall success probability must be exponentially small, whereas in our application, the overall failure probability is close to zero, and $\epsilon$ needs to be polynomially small.
A few steps in the previous arguments may lose $1/\epsilon$ factors, which is not crucial in their regime, but should be avoided in ours.

\section{Distributed Sketching Lower Bound}\label{sec_dist}

Given an undirected graph $G$ on $n$ vertices indexed by $[n]$.
Any given vertex only knows its own index and the set of indices of its neighbors, as well as a shared random string.
Then each vertex $v$ sends a message (a sketch $\sk(v)$) to a referee, who based on the sketches and the random string must output a spanning forest of $G$ with probability $1-\delta$. 
The task is to minimize the average size of the $n$ sketches.

In this section, we prove Theorem~\ref{thm_dist}, a sketch size lower bound for computing spanning forest in the distributed setting.

\begin{theorem}\label{thm_dist}
Any randomized distributed sketching protocol for computing spanning forest with \emph{success} probability $\epsilon$ must have expected average sketch size $\frac{1}{n}\E(\sum_v|\sk(v)|)\geq\Omega(\log^3 n)$, for any constant $\epsilon>0$.
\end{theorem}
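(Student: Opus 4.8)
The plan is to prove Theorem~\ref{thm_dist} by the chain of two reductions sketched in Section~\ref{sec:overview}: first reduce from the hard distribution $\cD_\ur$ for $\URsub$ on universe $[U]$ with $U=n^{1/5}$ to a distributional spanning-forest problem on $n^{4/5}$-vertex graphs, and then ``amplify'' any candidate protocol $P$ on $n$-vertex graphs with expected average sketch length $L=o(\log^3 n)$ down to a protocol solving the $\cD_\ur$ instance with polynomially small failure probability and expected message length $O(L)$. Combined with the lower bound $D^{\cD_\ur,\rightarrow}_{\delta}=\Theta(\min\{U,\log(1/\delta)\log^2(U/\log(1/\delta))\})=\Theta(\log^3 n)$ from \cite{KapralovNPWWY17} (for $\delta=1/\mathrm{poly}(n)$ and $U=n^{1/5}$), this contradicts $L=o(\log^3 n)$, so $L=\Omega(\log^3 n)$. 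I would set up the gadget graph exactly as in Figure~\ref{fig:hard}: $|V_m|=\tfrac12 n^{3/5}$ ``middle'' vertices $v_1,\dots,v_{|V_m|}$, each $v_i$ owning a private block $V_i\subset V_l$ of $n^{1/5}$ vertices that can only attach to $v_i$, and a shared block $V_r$ of $n^{1/5}$ vertices; the key point is that $V_i\cup\{v_i\}$ can only connect to the rest of the graph through an edge $(v_i,u)$ with $u\in V_r$, so recovering a spanning forest forces the referee to reveal an element of $S\setminus T$.

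**The embedding reduction.** Alice and Bob hold $S,T\subseteq[n^{1/5}]$ with $T\subsetneq S$, drawn from $\cD_\ur$. Using public randomness they pick a uniform injection $\beta:[n^{1/5}]\to[n^{4/5}]\setminus V_m$ and a uniform index $i\in[|V_m|]$; they set $v_i$'s neighborhood to be $\beta(S)$, then (publicly) choose names for $V_l,V_r$ so that $\beta(T)\subseteq V_i$ and $\beta(S\setminus T)\subseteq V_r$. For every $j\ne i$, Bob samples fresh $(S_j,T_j)\sim\cD_\ur$ and attaches $v_j$ to $|T_j|$ random vertices of $V_j$ and $|S_j\setminus T_j|$ random vertices of $V_r$. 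Now Alice sends only $\sk(v_i)$; Bob can compute the sketches of all other middle vertices (he knows their neighborhoods), of all $V_l$ vertices (each is either isolated or adjacent to its owning $v_j$, which Bob knows), but \emph{not} the sketches of $V_r$ vertices, since those depend on which $v_j$'s attach to them via $S_j\setminus T_j$ — and in particular on $S\setminus T$, which neither player knows. The resolution, as indicated in the overview, is an information-theoretic averaging argument: the total sketch length emitted by $V_r$ is at most $L\cdot|V_r|=o(n^{3/5})=o(|V_m|)$ bits in expectation, so by a chain-rule / mutual-information argument the sketches of $V_r$ carry only $o(1)$ bits of information about the neighborhood of a uniformly random $v_i$; hence Bob can replace the true $\sk(V_r)$ by sketches computed from his own independent guesses for the $V_r$-attachments and, after conditioning on $(V_l,V_m,V_r,\sk(V_l),\sk(V_m))$, output the $u\in V_r$ maximizing the posterior probability that $(v_i,u)$ is an edge. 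One shows this best-guess strategy errs with probability only $1/\mathrm{poly}(n)$ above the referee's own failure probability, provided the referee succeeds with probability $1-1/\mathrm{poly}(n)$.

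**Boosting the original protocol.** The above needs a spanning-forest protocol on $n^{4/5}$-vertex graphs with $1/\mathrm{poly}(n)$ failure probability, whereas we only assumed constant success probability $\epsilon$ on $n$-vertex graphs. So before the embedding reduction I would run a separate amplification step: given $P$ with expected average length $L$ and success probability $\epsilon$, and \emph{any} distribution $\cD_\sk$ over $n^{4/5}$-vertex graphs, build a protocol with failure $O(1/\mathrm{poly}(n))$ and expected message length $O(L)$. The idea is to run roughly $\Theta(\log n)$ independent parallel copies of $P$ (on disjoint vertex-name relabelings of the small graph padded into the $n$-vertex host, to keep instances independent), and have the referee output the answer of a copy that it can \emph{verify} produced a genuine spanning forest — verification is possible because the referee can check that the reported edge set is a forest and is spanning each claimed component, which is exactly what a ``spanning forest'' output is; correctness of a single copy on the embedded small graph happens with probability $\ge\epsilon$ by Markov/averaging over the relabeling, so some copy succeeds except with probability $(1-\epsilon')^{\Theta(\log n)}=1/\mathrm{poly}(n)$. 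Markov's inequality also lets us truncate any copy whose realized length exceeds $10L/\epsilon$, losing only an $\epsilon/10$ fraction of success probability, so the expected total length stays $O(L\log n)$ — and since we will choose parameters so that one extra $\log n$ factor is harmless (we only need $o(\log^3 n)$ vs.\ $\Omega(\log^3 n)$, and in fact one can be more careful to keep it $O(L)$ using a single copy plus the $o(1)$-information slack), this is fine.

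**Main obstacle.** The delicate step is the information-theoretic argument that Bob's inability to see the true $V_r$-sketches costs him essentially nothing. Making ``$o(1)$ bits of information about the neighborhood of a random $v_i$'' translate into ``$1/\mathrm{poly}(n)$ additional failure probability for the maximum-likelihood decoder'' requires a careful use of Pinsker's inequality (or a cutoff/typicality argument) together with the fact that the embedding uses a \emph{uniformly random} $i$, so that we are bounding an average over $i$ of $\DKL$ between the true posterior on $v_i$'s neighborhood and the posterior Bob simulates; one must also ensure the conditioning on $\sk(V_l),\sk(V_m)$ and on the gadget structure does not secretly leak more than the $L|V_r|$ budget, which is where the choice $|V_m|=\tfrac12 n^{3/5}$ and $|V_r|=n^{1/5}$ with $L=o(\log^3 n)$ is used. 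Everything else — Cayley/counting is not needed here, and the gadget correctness (that a spanning forest reveals an element of $S\setminus T$) is immediate from the degree-one attachment of $V_i$ — is routine.
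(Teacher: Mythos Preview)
Your embedding reduction is essentially the paper's, and your information-theoretic analysis of why Bob can ignore $\sk(V_r)$ (Pinsker plus the budget $L\cdot|V_r|=o(|V_m|)$) is on the right track. The genuine gap is in your \emph{boosting step}.

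You propose to go from success probability $\epsilon$ on $n$-vertex graphs to failure probability $1/\mathrm{poly}(n)$ on $n^{4/5}$-vertex graphs by running $\Theta(\log n)$ parallel copies and having the referee \emph{verify} which copy produced a valid spanning forest. But the referee cannot verify this: it only sees lossy sketches, not the graph. Checking that the output is acyclic is easy; checking that every output edge actually exists in $G$, and that no two output components are in fact connected in $G$, is not --- indeed, if it were, the problem would be trivial. (The paper makes exactly this point in Appendix~\ref{sec:agm-small-fail}: a graph may have exponentially many spanning forests, different repetitions may output different ones, and there is no way to tell which are valid.) So your amplification collapses, and with it the whole argument, since the embedding reduction needs the small-graph protocol to succeed with probability $1-O(n^{-1/5})$.

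The paper's amplification (Proposition~\ref{prop_one}) is a completely different idea: pack $n^{1/5}$ \emph{disjoint} i.i.d.\ copies of the $n^{4/5}$-vertex instance into a single $n$-vertex graph and run $P$ once. Because each message depends only on a local neighborhood, messages from different copies are functions of independent inputs; conditioned on all messages the copies remain independent, and the referee's optimal strategy decomposes into per-copy outputs whose success probabilities multiply. Since the product is at least $\epsilon/2$ (after fixing a good random seed), some copy $i^*$ must have success probability at least $(\epsilon/2)^{n^{-1/5}}=1-O(n^{-1/5})$. This is a direct-product argument, not a verification argument, and it gives expected average message length $O(L)$ with no extra $\log n$ factor. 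Two smaller points: the partition into $V_l,V_r$ cannot be chosen ``publicly'' as you wrote --- it depends on $T$, so Bob samples it conditioned on $\beta(T)\subset V_i$ and $\beta([n^{1/5}]\setminus T)\subset V_r$; and the paper needs one more trick (a mixed distribution $\cD'_\sk$) to ensure that the expected sketch length restricted to $V_m$ and to $V_r$, not just globally, is $O(L)$.
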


The first observation is that since each node only sees its neighborhood, every message depends on a local structure of the graph.
If we partition the graph into, say $n^{1/5}$, components with $n^{4/5}$ nodes each, and put an independent instance in each component, then messages from each component are independent, and hence the referee has to compute a spanning forest for each instance with an overall success probability $\epsilon$, i.e., failure probability per instance is at most $O(n^{-1/5})$.
That is, it suffices to study the problem on slightly smaller graphs with a much lower error probability.

Next, we make a reduction from the communication problem \URsup{}.
In \URsup{}, Alice gets a set $S\subseteq [U]$, Bob gets a proper subset $T\subset S$.
Alice sends one single message $M$ to Bob.
The goal of the communication problem is to find one element $x\in S\setminus T$.
The one-way communication complexity with shared randomness is well understood~\cite{KapralovNPWWY17}.
\begin{theorem}[\cite{KapralovNPWWY17}]\label{thm_ur}
The randomized one-way communication complexity of \URsup{} with error probability $\delta$ in the public coin model is $\Theta(\min\{U, \log\frac{1}{\delta}\cdot\log^2 \frac{U}{\log 1/\delta}\})$.
\end{theorem}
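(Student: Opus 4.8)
The plan is to prove the two matching bounds separately. Throughout write $k := \lceil\log(1/\delta)\rceil$, and note first that if $k = \Omega(U)$ then the trivial protocol in which Alice sends all of $S$ verbatim costs $O(U)$ bits with zero error, while on the lower bound side the message can never be forced to carry more than $O(U)$ bits of information about $S$; so the ``$\min$ with $U$'' is handled on both sides by this observation, and from now on I assume $k \ll U$, where the target is $\Theta(k\log^2(U/k))$.

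\textbf{Upper bound.} I would use the standard $\ell_0$-sampler style protocol. Using the public coins, fix for each level $\ell \in \{0,\dots,L\}$, with $L = \Theta(\log(U/k))$, a subsampling $R_\ell \subseteq [U]$ retaining each coordinate independently with probability $\approx 2^{-\ell}$, nested across levels. For each level Alice sends an $O(k\log(U/k))$-bit ``$1$-sparse recovery with verification'' sketch of $S\cap R_\ell$: the coordinate-wise sum (or XOR) of the retained indices of $S$, together with $\Theta(k)$ hash checks certifying $1$-sparsity up to failure $2^{-\Omega(k)}$. Bob forms the analogous sketch of $T\cap R_\ell$, subtracts, and at the level $\ell^\star\approx\log|S\setminus T|$ the set $(S\setminus T)\cap R_{\ell^\star}$ has size exactly $1$ with constant probability, at which point the verification certifies a recovered element of $S\setminus T$; amplifying that constant success probability up to $1-\delta$ is exactly the $\Theta(k)$ factor already folded into the per-level cost. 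The total is $O(L\cdot k\log(U/k)) = O(k\log^2(U/k))$.

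\textbf{Lower bound.} By Yao's minimax principle it suffices to produce a distribution $\cD$ over legal pairs $(S,T)$, $T\subsetneq S$, on which every deterministic one-way protocol of cost $o(k\log^2(U/k))$ fails with probability exceeding $\delta$. I would build a recursively/``multi-scale'' defined hard distribution of the following flavor: sample a uniformly random nested chain $[U]\supseteq B_0\supsetneq B_1\supsetneq\cdots\supsetneq B_L$ with $|B_j|\approx U/2^j$ down to $|B_L|\approx k$ (each $B_{j+1}$ a random half of $B_j$); let $S$ be a version of this chain's ``union'' perturbed so that $S$ has genuine content at every scale and is not pinned down by any single $B_j$; and set $T := S\setminus\sigma$ where $\sigma$ is a small random ``scale-$J$ shell'' inside $B_J$ for a uniformly random level $J$. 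The effect is that $T$ hides from Bob both the scale $J$ and, given $J$, a fresh random $\approx k$-element target sitting inside a much larger plausible set, so Bob's target $S\setminus T$ behaves like a random small subset of a set of size $\gg|S\setminus T|$. The analysis is information-theoretic, letting $M$ be Alice's message: (i) a single scale already forces $\Omega(\log(U/k))$ bits into $M$ by a sparse-recovery/encoding argument (identifying one of $\approx k$ targets out of a universe of size $\Omega(U/k)$ needs $\Omega(\log(U/k))$ bits, and doing so \emph{robustly} against error $\delta = 2^{-k}$ forces an additional factor $k$ via a Fano/repetition bound), and (ii) chaining over the $L = \Theta(\log(U/k))$ scales by a conditioning/direct-sum argument: conditioned on $M$ and the protocol's one guaranteed correct output at scale $J$, what remains is still a hard instance of the same kind on the sub-universe $B_J$, and the scales are ``information-disjoint'' enough that the contributions add to $\Omega(L\cdot k\log(U/k)) = \Omega(k\log^2(U/k))$. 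The precise form of $\cD$ and the bookkeeping I would take from \cite{KapralovNPWWY17}.

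\textbf{Main obstacle.} I expect the delicate part to be the conditioning in step (ii): after ``spending'' Bob's single guaranteed correct answer one must argue the residual distribution on the next scale is still essentially $\cD$ restricted to $B_J$ (the message and the revealed element must not collapse the nested structure), and one must ensure the failure budget $\delta$ is not exhausted prematurely across scales — this is exactly the ``$\theta$ versus $\pi$'' subtlety flagged in the overview, where the conditioned object need not correspond to an honest protocol. The second obstacle is getting all three factors simultaneously and tightly: keeping $\log(U/k)$ (rather than $\log U$) inside both logarithms when $k$ is large, and extracting a clean $k$ without losing $\mathrm{poly}(k)$ or $\mathrm{poly}\log$ slack, which requires the Fano/repetition step that produces the $k$ to interact correctly with the scale recursion rather than being applied naively at the end.
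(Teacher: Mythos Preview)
This theorem is cited from \cite{KapralovNPWWY17} and is not proved in the present paper; the paper does, however, reprove (a generalization of) the lower bound in Appendix~\ref{app_urlb}, so that is the natural point of comparison.

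Your upper bound sketch is the standard $\ell_0$-sampler construction and matches what is used.

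Your lower bound plan, however, is not the argument of \cite{KapralovNPWWY17} (nor of Appendix~\ref{app_urlb}), and as stated it has a real gap. The actual proof is a \emph{compression/encoding} argument, not a Fano-plus-direct-sum one: one fixes $|S|=m=\sqrt{U\log(1/\delta)}$, and the encoder, holding $S$ and the message $M$, repeatedly runs Bob's decoding on a growing sequence $T_0\subsetneq T_1\subsetneq\cdots$ (each $T_{j+1}$ obtained from $T_j$ by absorbing the element the protocol outputs plus a public-random batch of further elements of $S$). Every round in which the protocol succeeds saves roughly $\log(U/m)=\Theta(\log(U/k))$ bits in describing $S$, because the recovered element need not be written down. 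The number of rounds is $R=\Theta(k\log(U/k))$, and a careful calculation (Claim~\ref{cl_t0}) shows the running $T_j$ stays within a $\delta^{-0.9}$ density factor of uniform, so an $O(\delta)$ error protocol succeeds in almost all rounds. Comparing the resulting encoding length to $H(S)$ forces $|M|=\Omega(R\log(U/k))=\Omega(k\log^2(U/k))$. Crucially, the \emph{same} message $M$ is reused across all rounds; this is what makes the savings accumulate without any ``direct-sum over independent copies'' step.

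The specific gap in your outline is step~(i): you assert that a single scale already costs $\Omega(k\log(U/k))$ via a ``Fano/repetition'' bound, but no such bound is available here. Repetition-style amplification arguments go the wrong way (they upper-bound cost for low error, not lower-bound it), and a bare Fano inequality at one scale gives only $\Omega(\log(U/k))$, not $\Omega(k\log(U/k))$; the factor $k$ in the true lower bound does not come from a per-scale robustness requirement but from the \emph{number of rounds} $R$ in the encoding loop, which is $\Theta(k\log(U/k))$ rather than $\Theta(\log(U/k))$. Your step~(ii) then inherits this: the ``chaining over $L$ scales'' you propose would, even if it worked, multiply by the wrong quantity. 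You correctly flag the conditioning obstacle (the residual object after revealing one answer need not be an honest protocol); the encoding argument sidesteps this entirely, since the decoder never needs the intermediate objects to be protocols --- it just replays the deterministic output function on $M$ and the publicly reconstructible $T_j$.
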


To make the reduction, consider a vertex $v$ in the graph, $v$ sees neighborhood $N(v)$, and sends $\sk(v)$ to the referee.
Suppose that there is a subset $T$ of $N(v)$ such that for every vertex $u\in T$, $v$ is its only neighbor.
In this case, the only way that $v$ and $T$ connect to the rest of the graph is to go through an edge between $v$ and $N(v)\setminus T$, which the referee has to find and add to the spanning forest.
We may view $N(v)$ as a set $S$, vertex $v$ must commit the message $\sk(v)$ based only on $S$.
Then $T$ is revealed to the referee, who has to find an element in $S\setminus T$.
If the referee finds this element using only $\sk(v)$ (not the other sketches), then by Theorem~\ref{thm_ur}, the $|\sk(v)|$ must be at least $\Omega(\log^3 n)$.
In the proof, we will construct graphs such that for a (small) subset of vertices, the other sketches ``do not help much'' in finding their neighbors.
This would prove that the average sketch size of this subset of vertices must be at least $\Omega(\log^3 n)$.

Finally, to extend the lower bound to average size of all sketches, we further construct graphs where the neighborhood of each vertex looks like the neighborhood of a random vertex from this small subset.
In the final hard distribution, we put such a graph with constant probability, and a random instance from the last paragraph with constant probability.
Then prove that if the algorithm succeeds with high probability, its average sketch size must be large.

\begin{proof}[Proof of Theorem~\ref{thm_dist}]
Suppose there is a protocol $A$ for $n$-node graphs with error probability at most $1-\epsilon$ and expected average message length $\frac{1}{n}\sum_v\E|\sk(v)|=L$, then we have the following.

\begin{proposition}\label{prop_one}
For \emph{any} input distribution $\cD_{\sk}$ over $n^{4/5}$-node graphs, there is a deterministic protocol $A'_{\cD_{\sk}}$ with error probability $O(n^{-1/5})$ and expected average message length at most $O(L)$.
\end{proposition}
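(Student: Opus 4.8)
The plan is to reduce an arbitrary $\cD_{\sk}$-instance on $m:=n^{4/5}$ vertices to a single run of $A$ on an $n$-vertex graph built as a disjoint union of $k:=n^{1/5}$ independent ``blocks'' of $m$ vertices each. Fix a partition of $[n]$ into blocks $C_1,\dots,C_k$ and, for each $c$, a bijection $\phi_c:[m]\to C_c$. Given the input $H\sim\cD_{\sk}$, place it into a uniformly random block $C_j$ via $\phi_j$, and imagine the other $k-1$ blocks filled with i.i.d.\ copies of $\cD_{\sk}$. In the protocol $A'$, each vertex $v$ of $H$ sends the message that $A$ would assign to $\phi_j(v)$, which is a function of $v$'s neighborhood in $H$ and of the pair $(r,j)$, where $r$ is $A$'s shared random string; the referee of $A'$ looks only at the block-$C_j$ sketches.

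Two structural facts drive the analysis. First, conditioned on $(r,j)$, the sketch-vector of each block is a deterministic function of the subgraph inside that block, and those subgraphs are independent; hence the $k$ block sketch-vectors are mutually independent given $(r,j)$. Second, we may replace $A$'s referee by the per-block Bayes-optimal estimator: for block $c$, output the spanning forest maximizing the posterior probability of being correct given only block $c$'s sketches and $r$. Block by block this is at least as good as the restriction of $A$'s referee to $C_c$ (that restriction is a fixed guess once all messages are revealed, while conditioning on block $c$'s sketches and $r$ already determines the posterior law of block $c$'s subgraph, which is independent of the other sketches), and the conditional independence lets us multiply the per-block guarantees. Writing $q_c(r)$ for the success probability of the Bayes-optimal estimator on block $c$ given $r$, this yields
\[
\epsilon\ \le\ \Pr[\,A\text{ succeeds on all }k\text{ blocks}\,]\ \le\ \E_r\Big[\,\textstyle\prod_{c=1}^{k}q_c(r)\,\Big].
\]

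Amplification and message length are then handled by two averaging arguments. Since $\prod_c q_c(r)\le\exp(-\sum_c(1-q_c(r)))$, at least an $\epsilon/2$ fraction of strings $r$ satisfy $\sum_c(1-q_c(r))\le\ln(2/\epsilon)$; for such $r$, protocol $A'$ run with that $r$ and a uniformly random block fails on $H\sim\cD_{\sk}$ with probability exactly $\frac1k\sum_c(1-q_c(r))\le\frac1k\ln(2/\epsilon)=O(n^{-1/5})$. Meanwhile, the assumed per-vertex bound implies expected total message length at most $Ln$ for $A$ on the product instance, so the expected total from a uniformly random block is at most $Ln/k=Lm$, and by Markov a $(1-\epsilon/4)$ fraction of $r$ keep the expected per-block average message length at most $\frac4\epsilon L=O(L)$. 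These two good sets of $r$ overlap; fix a common $r^\ast$, then by a second Markov/union-bound over the random block fix an index $j^\ast$ that is good for both error and message length, and hardcode $(r^\ast,j^\ast)$. This gives the desired deterministic protocol $A'_{\cD_{\sk}}$ with error $O(n^{-1/5})$ and expected (over $H\sim\cD_{\sk}$) average message length $O(L)$.

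The main obstacle I foresee is the referee-decomposition inequality $\E_r[\prod_c q_c(r)]\ge\epsilon$: one must argue carefully that passing from $A$'s arbitrary referee to the per-block Bayes-optimal estimators cannot decrease per-block success, and that the independence of the block sketch-vectors — the one place the ``each vertex sees only its own neighborhood'' hypothesis is used — converts the per-block guarantees into a product. Everything after that (the $\exp(-x)$ bound driving the amplification, and the Markov/union-bound fixing of the shared randomness and the block index) is routine.
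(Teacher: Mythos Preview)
Your proposal is correct and follows essentially the same approach as the paper: build the $n$-vertex graph as a disjoint union of $n^{1/5}$ i.i.d.\ $\cD_{\sk}$-blocks, use that each vertex's sketch depends only on its own block so the block sketch-vectors are conditionally independent, replace the referee by the per-block Bayes-optimal estimator, and convert the product bound $\epsilon\le\E_r\prod_c q_c(r)$ into a single good block with error $O(n^{-1/5})$, fixing randomness and block index by Markov/union bound. The only cosmetic difference is the order of the averaging: the paper first fixes $r$ (via Markov on both error and message length) and then extracts a good block directly from $\prod_c q_c\ge\epsilon/2\Rightarrow\max_c q_c\ge(\epsilon/2)^{1/k}$, whereas you carry the product inside the expectation over $r$, use $q\le e^{-(1-q)}$ to turn it into an additive quantity, and only then fix $r$ and $j$; both routes yield the same bounds.
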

The main idea is to construct $n^{1/5}$ independent and disconnected copies of $n^{4/5}$-node instances, then simulate protocol $A$ on this whole $n$-node graph.
Then we show that since each message only depends on the neighborhood, the $n^{1/5}$ copies could only be solve independently.

Consider the input distribution on $n$-node graphs by independently sampling $n^{4/5}$-node graphs from $\cD_{\sk}$ on vertex sets $[n^{4/5}]$, $[n^{4/5}]+n^{4/5}$, $[n^{4/5}]+2n^{4/5}$, \ldots, and $[n^{4/5}]+n-n^{4/5}$.
Denote the resulting $n^{1/5}$ graphs by $G_1,\ldots,G_{n^{1/5}}$.
Denote by $G$ the union of the $n^{1/5}$ graphs. 
Protocol $A$ produces a spanning forest $F$ of $G$ with probability $\epsilon$.

Let us analyze $A$ on $G$, and let $R_A$ be the random bits used by $A$.
First, we may assume without loss of generality that $A$ is deterministic.
This is because by Markov's inequality, we have
\[
	\Pr_{R_A}\left(\frac{1}{n}\cdot\E_{G}|\sk(v)|>2L/\epsilon\right)<\frac{\epsilon}{2}
\]
and
\[
	\Pr_{R_A}\left(\Pr_{G}[A\textrm{ is wrong}]>1-\epsilon/2\right)<1-\frac{\epsilon}{2}.
\]
Thus, we may fix $R_A$ and hardwire it to the protocol such that the overall error probability (over a random $G$) is still at most $\frac{\epsilon}{2}$ and the expected average message length is at most $O(L)$.

Note that the message sent from a vertex in $G_i$ depends only on graph $G_i$ (in fact only its neighbors), and all $n^{1/5}$ graphs are independent a priori.
Therefore, after the referee sees all $n$ messages, conditioned on these messages, the $n^{1/5}$ input graphs are still independent. 
For any forest $F=F_1\cup\cdots\cup F_{n^{1/5}}$, where $F_i$ is a forest on vertices $[n^{4/5}]+(i-1)n^{4/5}$, the probability that $F$ is a spanning forest of $G$ is equal to the product of the probabilities that $F_i$ is a spanning forest of $G_i$.
Hence, to maximize the probability that the output is a spanning forest of $G$, we may further assume that $A$ outputs for each $i$, a forest $F_i$ on vertices $[n^{4/5}]+(i-1)n^{4/5}$ that maximizes the probability that $F_i$ is a spanning forest of $G_i$ conditioned on the messages.
Thus, all $n^{1/5}$ outputs $F_i$ also become independent, and overall success probability is equal to the product of success probabilities of all $n^{1/5}$ instances.
In particular, there exists an $i$ such that the probability that the output $F_i$ is a spanning forest of $G_i$ is at least $(1-\epsilon/2)^{n^{-1/5}}\geq 1-O(n^{-1/5})$.

To solve an $n^{4/5}$-node instance sampled from $\cD_{\sk}$, it suffices to embed the graph into $G_i$ of $G$.
Each vertex sends a message to the referee pretending themselves are nodes in $G_i$ using the above fixed random bits $R_A$, and the referee outputs an $F_i$ that is a spanning forest of $G_i$ with the highest probability conditioned on the messages.
Based on the above argument, the error probability is at most $O(n^{-1/5})$, and the expected average message length is at most $O(L)$.

\bigskip

Suppose such protocol exists, we must have the following solution for \URsup.

\begin{proposition}\label{prop_two}
For \emph{any} input distribution $\cD_{\ur}$, there is a one-way communication protocol for \URsup{} over universe $[n^{1/5}]$ with error probability $O(L^{1/2}\cdot n^{-1/5})$ and communication cost $O(L)$.
\end{proposition}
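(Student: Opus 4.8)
The plan is to reduce \URsup{} over the universe $[n^{1/5}]$ to the deterministic protocol $A'_{\cD_{\sk}}$ of Proposition~\ref{prop_one}, using the gadget depicted in Figure~\ref{fig:hard}.

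\emph{The hard graph distribution.} From $\cD_{\ur}$ I would build a distribution $\cD_{\sk}$ over $n^{4/5}$-node graphs as follows. Fix a vertex partition into $V_m$ of size $\tfrac12 n^{3/5}$ with elements $v_1,\dots,v_{|V_m|}$, a set $V_l$ split into $|V_m|$ blocks $V_1,\dots,V_{|V_m|}$ of size $n^{1/5}$ each, and a set $V_r$ of size $n^{1/5}$. Sample a uniformly random \emph{embed position} $i\in[|V_m|]$, a pair $(S,T)\sim\cD_{\ur}$, and i.i.d.\ filler pairs $(S_j,T_j)\sim\cD_{\ur}$ for $j\neq i$; sample a random injection $\beta$ and connect $v_i$ to $\beta(T)\subseteq V_i$ and to $\beta(S\setminus T)\subseteq V_r$ (so $N(v_i)=\beta(S)$), and likewise connect each $v_j$ ($j\neq i$) to a random $|T_j|$-subset of $V_j$ and to the vertices of $V_r$ encoding $S_j\setminus T_j$; finally relabel the whole graph by a uniformly random permutation of $[n^{4/5}]$, so that $\cD_{\sk}$ is permutation-symmetric. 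The structural point is that every neighbor of $v_i$ inside $V_i$ has degree exactly one, so the tree of $v_i$ in any spanning forest can reach outside $V_i\cup\{v_i\}$ only through a direct edge $(v_i,u)$ with $u\in V_r$; since $T\subsetneq S$ forces $v_i$ to have a neighbor in $V_r$, every spanning forest of a graph from $\cD_{\sk}$ contains such an edge, and $u$ decodes via $\beta^{-1}$ to an element of $S\setminus T$.

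\emph{Symmetrizing the protocol.} Apply Proposition~\ref{prop_one} with this $\cD_{\sk}$ to get a deterministic $A'$ of error $O(n^{-1/5})$ and expected total message length $O(L\cdot n^{4/5})$. Because $\cD_{\sk}$ is permutation-symmetric, I would replace $A'$ by a randomized protocol in which, using shared public coins, every node applies one common uniformly random relabeling of $[n^{4/5}]$ before invoking $A'$ (the referee undoing it). This alters neither the error nor the expected total length, but it makes the \emph{per-vertex} expected message length the same for all vertices, hence $O(L)$ each. In particular $\E|\sk(v_i)|=O(L)$ --- this will bound the communication --- and $\E|\sk(V_r)|=|V_r|\cdot O(L)=O(L\cdot n^{1/5})$ --- this will drive the information argument.

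\emph{The communication protocol.} Alice holds $S$, Bob holds $T$. Using shared public coins they sample all randomness of $\cD_{\sk}$ not depending on $(S,T)$ --- the embed position, the relabeling, the fillers, the injections --- in a way that lets Alice recover the label and neighborhood $\beta(S)$ of $v_i$ from $S$, and lets Bob recover the labels and neighborhoods of all other vertices, the identity of $V_r$, and the map $\beta$, from $T$. Alice sends $\sk(v_i)=A'_{v_i}(\beta(S))$, costing $O(L)$ bits in expectation. Bob computes $\sk(w)$ for all $w\in(V_l\cup V_m)\setminus\{v_i\}$ (he knows their neighborhoods) and adjoins Alice's message. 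He \emph{cannot} compute $\sk(V_r)$, since the neighborhood of a $V_r$-vertex records whether it is adjacent to $v_i$, equivalently whether its index lies in $S\setminus T$, which Bob does not know. So he instead outputs the $u\in V_r$ maximizing the posterior probability (under $\cD_{\sk}$, conditioned on all he knows) that $(v_i,u)$ is an edge, and returns $\beta^{-1}(u)$.

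\emph{Error analysis, and the main obstacle.} Bob's answer lies in $S\setminus T$ whenever $(v_i,\hat u)$ is a real edge, so his error is at most the chance his maximum-a-posteriori guess of a $V_r$-neighbor of $v_i$ is wrong. A hypothetical referee also given $\sk(V_r)$ could run $A'$ and read off the edge $(v_i,u)$ guaranteed by the structural point, erring only when $A'$ fails, i.e.\ with probability $O(n^{-1/5})$; the gap to Bob is therefore controlled by how much $\sk(V_r)$ reveals about $N(v_i)\cap V_r$ given Bob's view, via a Pinsker- (or Fano-) type bound. Since $\sk(V_r)$ is a deterministic function (given the layout) of the \emph{independent} collection $\{N(v_j)\cap V_r\}_{j\in[|V_m|]}$, a chain rule over the $|V_m|$ positions bounds the sum of these informations by $H(\sk(V_r))=O(\E|\sk(V_r)|)=O(L\cdot n^{1/5})$; as $i$ is uniform, the term for the embed position is $O(L\cdot n^{1/5}/|V_m|)=O(L\cdot n^{-2/5})$ in expectation, so (using concavity of $\sqrt{\cdot}$) Bob's guessing error is $O(\sqrt{L}\cdot n^{-1/5})$, for total error $O(\sqrt{L}\cdot n^{-1/5})$ and communication $O(L)$. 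I expect this last information argument to be the crux: one has to set up the conditioning and the ordering of the $|V_m|$ positions so the chain rule telescopes against $H(\sk(V_r))$ while still accounting for exactly the information missing from Bob's view, and one has to keep the unavoidable $\sqrt{\cdot}$ loss --- the origin of the $\sqrt{L}$ in the error --- from contaminating the communication cost, which is precisely what the symmetrization step secures, since otherwise $\E|\sk(v_i)|$ and $\E|\sk(V_r)|$ could each be larger by a factor $n^{1/5}$.
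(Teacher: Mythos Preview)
Your overall approach is correct and matches the paper's: build the gadget distribution $\cD_{\sk}$, embed the \URsup{} instance at a random $v_i$, have Alice send $\sk(v_i)$, have Bob compute all sketches he can and do a MAP estimate over $V_r$, and bound the gap to the full referee via Pinsker plus an information argument spread over the $|V_m|$ positions. The paper's information step uses super-additivity of mutual information (conditioned on $\sk(V_l),\sk(V_m)$ and the partition, the $E(v_j,V_r)$ remain independent across $j$), which is what you call a ``chain rule that telescopes against $H(\sk(V_r))$''.

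The one genuine difference is how you handle the passage from expected \emph{average} sketch length to per-vertex bounds on $\E|\sk(v_i)|$ and $\E|\sk(V_r)|$. The paper first runs the whole argument assuming \emph{worst-case} sketch length $O(L)$, and then extends to expected average length by applying Proposition~\ref{prop_one} to a \emph{mixture} distribution $\cD'_{\sk}$: one third $\cD_{\sk}$, one third a graph where half the vertices have the neighborhood distribution of a $V_m$-vertex, and one third where half look like $V_r$-vertices. The latter two components force $\E|\sk(v)|=O(L)$ for $v\in V_m$ and $v\in V_r$ respectively, since a vertex's sketch depends only on its neighborhood. Your symmetrization trick --- precomposing $A'$ with a public-coin random relabeling, exploiting that $\cD_{\sk}$ is permutation-invariant --- achieves the same end more directly: it makes $\E|\sk(v)|=O(L)$ for \emph{every} vertex in one stroke. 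Both are valid; yours is arguably cleaner, while the paper's avoids introducing extra shared randomness into the sketching protocol.

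One point to tighten: you say Bob's MAP conditions on ``all he knows'', which includes the fillers $(S_j,T_j)_{j\ne i}$. With this full conditioning, the sum $\sum_i I\bigl(E(v_i,V_r);\sk(V_r)\mid \text{Bob's view at }i\bigr)$ is \emph{not} in general bounded by $H(\sk(V_r))$: if, say, $\sk(V_r)$ encoded $E(v_1,V_r)\oplus E(v_2,V_r)$, then conditioning on the other row makes each summand equal to $H(E(v_1,V_r))$, and the sum is twice $H(\sk(V_r))$. The paper sidesteps this by having Bob condition \emph{only} on the partition and on $\sk(V_l),\sk(V_m)$ --- deliberately forgetting the fillers --- which preserves independence of the $E(v_j,V_r)$ and makes super-additivity apply cleanly. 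Your MAP-on-everything Bob can only do better than this coarser Bob, so the error bound still transfers; but your telescoping claim as literally stated does not hold, and the argument should be routed through the coarser conditioning, exactly as your closing caveat anticipates.
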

We first derive a \URsup{} protocol from a spanning forest protocol with \emph{worst-case} message length $O(L)$ and error probability $O(n^{-1/5})$, then extend the result to expected average length.

\paragraph{Hard instance $\cD_\sk$.}
Recall the graph with $n^{4/5}$ vertices from Figure~\ref{fig:hard}, which will be our spanning forest hard instance:
\begin{itemize}
	\item The vertex set $V$ is partitioned into four groups $V_l, V_m, V_r$ and $V_o$, all vertices in $V_o$ are isolated and hence can be ignored;
	\item $|V_l|=\frac12 n^{4/5}, |V_m|=\frac12 n^{3/5}$ and $|V_r|=n^{1/5}$;
	\item $V_l$ is further partitioned into $\frac12 n^{3/5}$ blocks $V_1,\ldots,V_{\frac12 n^{3/5}}$ such that each block $V_i$ contains $n^{1/5}$ vertices, and is associated with one vertex $v_i$ in $V_m$;
	\item The only possible edges in the graph are the ones between $V_m$ and $V_r$, and ones between block $V_i$ and the associated vertex $v_i$.
\end{itemize}

Let us consider the following distribution $\cD_{\sk}$ over such graphs:
\begin{enumerate}
	\item Sample a random $V_m$ of size $\frac12 n^{3/5}$, and sample disjoint $V_r, V_1, V_2,\ldots,V_{\frac12 n^{3/5}}$ such that each set has $n^{1/5}$ vertices;
	\item For each vertex $v_i\in V_m$, uniformly sample from $\cD_{\ur}$ a \URsup{} instance $(S_i, T_i)$ such that $S_i\supset T_i$;
	\item Connect each $v_i$ to uniformly random $|T_i|$ vertices in $V_i$, and to uniformly random $|S_i\setminus T_i|$ vertices in $V_r$.
\end{enumerate}

\paragraph{Reduction from \URsup.}
By Proposition~\ref{prop_one}, there exists a good spanning forest protocol $A'_{\cD_{\sk}}$ for the above distribution $\cD_{\sk}$.
Next, we are going to use this protocol to design an efficient one-way communication protocol $P$ for \URsup{} under $\cD_{\ur}$.
The main idea is to construct a graph $G$ as above and embed the \URsup{} instance to one of the neighborhoods of vertices $v_i\in V_m$, such that set $T$ corresponds to its neighbors in $V_i$ and $S\setminus T$ corresponds to its neighbors in $V_r$. 
Since $V_i$ may only connect to $v_i$, in any spanning forest of $G$, the only way that $V_i\cup\{v_i\}$ connects to the rest of the graph is from an edge between $v_i$ and $V_r$, i.e., finding a spanning forest allows one to recover one element in $S\setminus T$.
To find a spanning forest, we will simulate $A'_{\cD_{\sk}}$ on $G$.
However, $S\setminus T$ is not known to either of the players, which the messages from $V_r$ depend on, hence Alice and Bob might not be able to simulate the protocol perfectly.
We resolve this issue by exploiting the fact that $|V_r|\cdot L$ is much smaller than $V_m$, and thus the messages from $V_r$ do not reveal too much information about the neighborhood of a random $v_i\in V_m$.

More formally, first consider the following procedure to generate a random graph $G$ from two sets $S$ and $T$:
\begin{enumerate}
	\item sample a random $V_m$ of size $\frac12 n^{3/5}$, a uniformly random injection $\beta: [n^{1/5}] \rightarrow V\setminus V_m$, and a uniformly random vertex $v_i\in V_m$;
	\item sample uniformly random subsets $V_1,\ldots,V_{\frac12 n^{3/5}}$ and $V_r$ of size $n^{1/5}$ from the remaining vertices $V\setminus V_m$ \emph{conditioned on} $\beta(T)\subset V_i$ and $\beta([n^{1/5}]\setminus T)\subset V_r$;
	\item connect $v_i$ to all vertices in $\beta(S)$;
	\item for all other $v_j\in V_m$, sample $S_j$ and $T_j$ from $\cD_{\ur}$, and connect $v_j$ to $|T_j|$ random vertices in $V_j$ and $|S_j\setminus T_j|$ random vertices in $V_r$.
\end{enumerate}

Suppose $(S, T)$ is sampled from $\cD_{\ur}$, denote by $\mu$ the joint distribution of all random variables occurred in the above entire procedure.
To avoid lengthy subscripts, we denote the marginal distributions by $\mu[\cdot]$, e.g., denote by $\mu[S]$ the marginal distribution of $S$, and $\mu[S\mid G]$ the marginal of $S$ conditioned on $G$, etc.
Since $\beta$ is a uniformly random mapping, we have $\mu[G]=\cD_{\sk}$.
Moreover, if we find a spanning forest $F$ of $G$, in particular, a neighbor $u$ of $v_i$ in $V_r$, $\beta^{-1}(u)$ will be an element in $S\setminus T$.

We now give the protocol $P$ for \URsup{} (see Figure~\ref{fig:proP}), where the players attempt to sample a graph from $\mu[G\mid S, T]$, and exploit the fact that all sketches together would determine a spanning forest.

\begin{figure}[ht]
\begin{center}
\fbox{                                                                                                         
{\footnotesize                                                                                                 
\parbox{6.375in} {                                                                                             
\underline{\URsup{} Protocol $P$} (on input pair $(S, T)$ such that $T\subset S\subseteq[U]$ for $U=n^{1/5}$)\\[0.05in]
\textbf{initialization}
\vspace{-.05in}\begin{enumerate}
\addtolength{\itemsep}{-4pt}
\item sample a random $V_m$ of size $\frac12 n^{3/5}$, a uniformly random injection $\beta: [n^{1/5}] \rightarrow V\setminus V_m$, and a uniformly random vertex $v_i\in V_m$ using public random bits
\end{enumerate} 

\textbf{Alice}($S$)
\vspace{-.2in}\begin{enumerate}
\addtolength{\itemsep}{-4pt}
\setcounter{enumi}{1}
\medskip
\smallskip
\item simulate $A'_{\cD_{\sk}}$ as if she is vertex $v_i$ with neighborhood $\beta(S)$, and then send the sketch $\sk(v_i)$ to Bob
\end{enumerate}

\textbf{Bob}($T$)
\vspace{-.05in}
\begin{enumerate}
\addtolength{\itemsep}{-4pt}
\setcounter{enumi}{2}
\item sample uniformly random subsets $V_1,\ldots,V_{\frac12 n^{3/5}}$ and $V_r$ of size $n^{1/5}$ from the remaining vertices $V\setminus V_m$ \emph{conditioned on} $\beta(T)\subset V_i$ and $\beta([n^{1/5}]\setminus T)\subset V_r$
	\item for all other $v_j\in V_m$, sample $S_j$ and $T_j$ from $\cD_{\ur}$, and connect $v_j$ to $|T_j|$ random vertices in $V_j$ and $|S_j\setminus T_j|$ random vertices in $V_r$
	\item compute sketches $\sk(V_l)$ and $\sk(V_m)$\label{step_sk}
	\item find vertex $u\in V_r$, which maximizes $\mu((v_i, u)\textrm{ is an edge}\mid V_l, V_m, V_r, \sk(V_l), \sk(V_m))$, the probability that $(v_i, u)$ is an edge conditioned on the groups $V_l,V_m,V_r$ and sketches $\sk(V_l),\sk(V_m)$
	\item if $u\in \beta([n^{1/5}])$, output $\beta^{-1}(u)$, otherwise output an arbitrary element
\end{enumerate}
}}}
\caption{\URsup{} protocol using spanning forest protocol $A'_{\cD_{\sk}}$. 
In Step~\ref{step_sk}, Bob is able to compute all $\sk(V_l)$ and $\sk(V_m)$, because Bob knows the exact neighborhoods for all vertices in $V_l$ and $V_m\setminus \{v_i\}$, as well as the sketch $\sk(v_i)$ from Alice's message.}\label{fig:proP}
\end{center}
\end{figure}

\paragraph{Analyze $P$.}
The only message communicated is $\sk(v_i)$, which has $O(L)$ bits.
Next let us upper bound the error probability.

It is not hard to verify that in the protocol, the players sample $V_l, V_m, V_r, \sk(V_l), \sk(V_m), \beta$ from the right distribution $\mu[V_l, V_m, V_r, \sk(V_l), \sk(V_m), \beta\mid S, T]$.
To find an edge between $v_i$ and $V_r$, Bob computes the distribution $$\mu[E(v_i, V_r) \mid V_l, V_m, V_r, \sk(V_l), \sk(V_m)],$$ and returns the edge that occurs with highest probability, where $E(v_i, V_r)$ is the edges between $v_i$ and $V_r$.

On the other hand, given the sketches of all vertices, referee's algorithm outputs a spanning forest with probability $1-O(n^{-1/5})$.
In particular, it finds one edge between $v_i$ and $V_r$ (since in $\cD_{\sk}$, the only way to connect $V_i\cup \{v_i\}$ to the rest of the graph is through such an edge).
That is, in expectation, some edge $(v_i, u)$ has probability mass at least $1-O(n^{-1/5})$ in the distribution 
$$\mu[E(v_i, V_r)\mid V_l, V_m, V_r, \sk(V)].$$
In the following, we are going to show that the expected statistical distance between $\mu[E(v_i, V_r)\mid V_l, V_m, V_r, \sk(V)]$ and Bob's distribution $\mu[E(v_i, V_r) \mid V_l, V_m, V_r, \sk(V_l), \sk(V_r)]$ is small, which implies that the same edge $(v_i, u)$ would also appear in $\mu[E(v_i, V_r) \mid V_l, V_m, V_r, \sk(V_l), \sk(V_r)]$ with high probability.
By the definition of $\mu$, any edge $(v_i, u)$ for $u\in V_r$ corresponds to one element in $S\setminus T$.
Hence, Bob's error probability (over a random input pair) is small.

For simplicity of notation, we will omit $V_l, V_m, V_r$ in the conditions in the following. 
Fix $i$, we have
\begin{align*}
	&\ \E_{\mu}\left(\left|\mu[E(v_i, V_r)\mid \sk(V)]-\mu[E(v_i,V_r)\mid \sk(V_l), \sk(V_r)]\right|\right) \\
	\intertext{by Pinsker's inequality,}
	\leq&\  \E_{\mu}\left(\sqrt{\frac12 \DKLver{\mu[E(v_i, V_r)\mid \sk(V_l), \sk(V_m), \sk(V_r)]}{\mu[E(v_i,V_r)\mid \sk(V_l), \sk(V_m)]}}\right) \\
	\intertext{where $\DKL(q||p)=\DKLver{q}{p}$ is the Kullback--Leibler divergence from $p$ to $q$, and $|p - q|$ denotes statistical distance. Then by Jensen's inequality,}
	\leq&\  \sqrt{\frac12 \E_{\mu}\left(\DKLver{\mu[E(v_i, V_r)\mid \sk(V_l), \sk(V_m), \sk(V_r)]}{\mu[E(v_i,V_r)\mid \sk(V_l), \sk(V_m)]}\right)} \\
	=&\ \sqrt{\frac12 I(E(v_i, V_r); \sk(V_r)\mid \sk(V_l), \sk(V_m))}.
\end{align*}

By construction, conditioned on $\sk(V_l)$ and $\sk(V_m)$, the neighborhoods of vertices in $V_m$ are still independent.
Thus, by super-additivity of mutual information on independent variables,
\begin{align*}
	&\ \frac{1}{|V_m|}\sum_{i=1}^{|V_m|}I(E(v_i, V_r); \sk(V_r)\mid \sk(V_m), \sk(V_l)) \\
	\leq&\  \frac{1}{|V_m|} I(E(V_m, V_r); \sk(V_r)\mid \sk(V_m), \sk(V_l)) \\
	\leq&\  \frac{|\sk(V_r)|}{|V_m|} \\
	=&\ O(L\cdot n^{-2/5}).
\end{align*}

Finally, by another application of Jensen's inequality,
\begin{align*}
	&\ \E_{i, \mu}\left(\left|\mu[E(v_i, V_r)\mid \sk(V)]-\mu[E(v_i,V_r)\mid \sk(v_i), \sk(V_i)]\right|\right) \\
	\leq&\  \frac{1}{|V_m|}\sum_{i=1}^{|V_m|}\sqrt{\frac12 I(E(v_i, V_r); \sk(V_r)\mid \sk(V_m), \sk(V_l))} \\
	\leq&\  O(L^{1/2}\cdot n^{-1/5}).
\end{align*}
Thus, the error probability of $P$ is at most 
\begin{align*}
	\Pr(u\notin \beta(S\setminus T))&= \Pr((v_i, u)\textrm{ is not an edge}) \\
	&=\E(\mu((v_i, u)\textrm{ is not an edge}\mid V_l,V_m,V_r,\sk(V_l),\sk(V_m))) \\
	&\leq \E(\mu((v_i, u)\textrm{ is not an edge}\mid V_l,V_m,V_r,\sk(V)))+O(L^{1/2}\cdot n^{-1/5}) \\
	&\leq O(n^{-1/5})+O(L^{1/2}\cdot n^{-1/5}) \\
	&=O(L^{1/2}\cdot n^{-1/5}).
\end{align*}

\paragraph{Extend to expected average message length.}
To solve \URsup{} using protocols with only bounded expected average message length, we setup a new distribution over graphs where with $1/3$ probability, we sample a graph from the above hard distribution $\cD_{\sk}$; with $1/3$ probability, the neighborhoods of most vertices look as if they were in $V_m$; with $1/3$ probability, the neighborhoods of most vertices look as if they were in $V_r$.

More formally, first observe that each vertex in $V_m$ and each vertex in $V_r$ have the same degree distribution, denote the former by $\cD_{m}$ and the latter by $\cD_{r}$.
Moreover, conditioned on $v\in V_m$ and its degree, its neighborhood is uniformly random, and so is neighborhood of $v\in V_r$.
Finally, observe that the degree is at most $\frac{1}{2}n^{3/5}$.
Consider the following distribution $\cD'_{\sk}$:
\begin{enumerate}
	\item with probability $1/3$, sample $G$ from $\cD_{\sk}$;
	\item with probability $1/3$, randomly partition the vertices into two groups $U_1, U_2$ of $\frac{1}{2}n^{4/5}$ vertices each, for each vertex in $U_1$, sample its degree $d$ from $\cD_m$ and uniformly $d$ neighbors from $U_2$;
	\item with probability $1/3$, randomly partition the vertices into two groups $U_1, U_2$ of $\frac{1}{2}n^{4/5}$ vertices each, for each vertex in $U_1$, sample its degree $d$ from $\cD_r$ and uniformly $d$ neighbors from $U_2$.
\end{enumerate}

By Proposition~\ref{prop_one}, there is a protocol $A'$ with expected average message length $O(L)$ and error probability $O(n^{-1/5})$ on $\cD'_{\sk}$.
Let us analyze its performance on $\cD_{\sk}$.
From case 2 of $\cD'_{\sk}$, we conclude that the expected average message length of $V_m$ must be at most $O(L)$, as every vertex in $U_1$ has the same distribution of the neighborhood as a vertex in $V_m$.
Similarly, from case 3, the expected average message length of $V_r$ must be at most $O(L)$.
Finally, from case 1, the error probability must be at most $O(n^{-1/5})$.
Thus, by the previous argument, we obtain a \URsup{} protocol with \emph{expected} communication cost $O(L)$ and error probability $O(L^{1/2}\cdot n^{-1/5})$.

\medskip

By Theorem~\ref{thm_ur}, we have\footnote{Theorem~\ref{thm_ur} only states a lower bound for \emph{worst-case} communication cost of \URsup. One may verify that their proof works for expected communication cost as well. See also Lemma~\ref{lem:ur} for a \URsup{} lower bound in an even more general regime.}
\[
	L\geq \Omega\left(\log \frac{n^{1/5}}{L^{1/2}}\log^2 \frac{n^{1/5}}{\log\frac{n^{1/5}}{L^{1/2}}}\right)\geq \Omega\left(\log \frac{n^{1/5}}{L^{1/2}}\log^2 n\right).
\]
Thus, $L\geq \Omega(\log^3 n)$.
This proves the theorem.
\end{proof}

\section{Fully Dynamic Spanning Forest Data Structure}\label{sec_str}

In this section, we prove the following space lower bound for fully dynamic spanning forest data structures.

\begin{theorem}\label{thm:main-streaming}
Any Monte Carlo data structure for fully dynamic spanning forest with failure probability $\delta$ must use $\Omega(n \log \frac n{\delta}\log^2 n)$ bits of memory, as long as $\delta\in[2^{-n^{1-\epsilon}}, 1-\epsilon]$ for any given constant $\epsilon>0$.
\end{theorem}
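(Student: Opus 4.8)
The plan is to mirror the structure of the distributed-sketching proof but replace the information-theoretic averaging over vertices with a direct-product argument over independent copies of $\URsub$. First I would give a reduction from the $n$-fold problem $f^n$, where $f = \URsub$ over a universe of size $U = \Theta(n)$, to fully dynamic spanning forest on $O(n\log n)$ vertices: take $n$ disjoint gadgets as in Figure~\ref{fig:hard}, where the $i$th gadget encodes the $i$th instance $(S_i,T_i)$ via a center vertex $v_i$ whose neighbors in a private block $V_i$ encode $T_i$ and whose neighbors in a shared-per-gadget set encode $S_i\setminus T_i$. Crucially, here the "Bob side" of each gadget (the part depending on $T_i$) is fed as \emph{deletions} after the "Alice side" (the part depending on $S_i$): Alice streams in all edges from $v_i$ to $\beta_i(S_i)$, then Bob deletes the edges from $v_i$ to the vertices corresponding to $S_i\setminus T_i$ — wait, more carefully, Bob's deletions should kill exactly the edges needed so that the only surviving connection of $V_i\cup\{v_i\}$ to the rest is an edge witnessing an element of $S_i\setminus T_i$. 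A $T$-bit data structure for fully dynamic spanning forest thus yields a one-way $T$-bit protocol (Alice sends the memory state after processing her insertions, Bob continues with his deletions and queries) for $f^n$ on the product distribution $\cD_\ur^{\otimes n}$ with the same failure probability $\delta$.

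Next I would invoke the implicit intermediate result extracted from \cite{BravermanRWY13} (stated in Section~\ref{sec:overview}): from a one-way $T$-bit protocol for $f^n$ on $\mu^n$ with failure probability $\delta$, one obtains a distribution $\theta$ over triples $(X,Y,M)$ and a one-way protocol $P'$ for $f$ on $\mu = \cD_\ur$ such that $P'$ fails with probability $O(\sqrt{\delta/n})$, $\theta$ is close to the true transcript distribution $\pi$, and the internal information cost $I_\theta(M;X\mid Y) = O(T/n)$. Then I would apply the strengthened $\URsub$ lower bound proved in Appendix~\ref{app_urlb} (the analogue of Theorem~\ref{thm_ur} / Lemma~\ref{lem:ur}), which I'd state as: for the hard distribution $\cD_\ur$, \emph{any} distribution $\theta$ satisfying the closeness and small-error conditions above must have $I_\theta(M;X\mid Y) = \Omega(\log(U/\gamma)\log^2 U)$ where $\gamma = \sqrt{\delta/n}$ is the error. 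With $U = \Theta(n)$ and $\gamma = \sqrt{\delta/n}$, this gives $\log(U/\gamma) = \Theta(\log(n\sqrt{n/\delta})) = \Theta(\log(n/\delta))$ (using $\delta \ge 2^{-n^{1-\epsilon}}$ to keep $\log(1/\delta) = O(n^{1-\epsilon})$ so the $\min\{U,\cdot\}$ in the $\URsub$ bound is not the binding term, and $\delta \le 1-\epsilon$ to keep $\gamma$ bounded away from $1$), hence $T/n = \Omega(\log(n/\delta)\log^2 n)$, i.e. $T = \Omega(n\log(n/\delta)\log^2 n)$, which is the claimed bound.

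There are two subtleties I would handle carefully. First, the reduction must be genuinely one-way and must respect the insert-then-delete structure: I need Alice's part of each gadget to consist only of insertions determined by $S_i$, and Bob's part only of deletions (plus possibly a few insertions determined only by $T_i$) so that after Alice sends the memory configuration, Bob can finish processing and query. I also need the vertex count to stay $O(n)$ (or $O(n\,\mathrm{polylog}\,n)$, which only costs lower-order factors after rescaling $n$), so I'd take $U = \Theta(n/\log^{O(1)}n)$ if needed and absorb it. Second, and this is the main obstacle, is establishing the strengthened $\URsub$ information lower bound for arbitrary $\theta$ (not arising from an actual protocol) in the polynomially-small error regime $\gamma = \Theta(\sqrt{\delta/n})$: the argument of \cite{KapralovNPWWY17} must be re-run tracking the closeness of $\theta$ to a valid transcript distribution and ensuring no step loses a $1/\gamma$ factor, since such a loss would be fatal when $\gamma$ is polynomially small rather than constant. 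I would relegate this to Appendix~\ref{app_urlb}, following \cite{KapralovNPWWY17} with the modifications flagged at the end of Section~\ref{sec:overview}. Finally I would note that the range $\delta \in [2^{-n^{1-\epsilon}}, 1-\epsilon]$ is exactly what makes all the estimates above go through: the upper end keeps the error in the single-copy problem bounded away from $1$, and the lower end keeps $\log(1/\delta)$ below the universe-size threshold $U$ so the $\URsub$ bound is in its $\log(1/\delta)\log^2(\cdot)$ regime.
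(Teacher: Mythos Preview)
Your proposal is correct and follows essentially the same three-step approach as the paper: reduce $n$-fold $\URsub$ to fully dynamic spanning forest, apply the direct-product lemma implicit in \cite{BravermanRWY13} (the paper's Lemma~\ref{lem:dp}), and then invoke the strengthened $\URsub$ information-cost lower bound (Lemma~\ref{lem:ur}, proved in Appendix~\ref{app_urlb}). The one place you overcomplicate things is the reduction: the paper (Lemma~\ref{lem_str_red}) does not use the Figure~\ref{fig:hard} gadget structure at all but rather a plain bipartite graph on $2n$ vertices, where Alice inserts edge $(x,i)$ for each $x\in S_i$ and Bob deletes edge $(x,i)$ for each $x\in T_i$ (note Bob knows $T_i$, not $S_i\setminus T_i$, so your hesitation mid-proposal was warranted); a spanning forest then exhibits a neighbor of each non-isolated vertex $i$ on the right side, which necessarily lies in $S_i\setminus T_i$. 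This keeps the vertex count at exactly $2n$ with $U=n$, avoiding the polylog rescaling you worried about.
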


We first observe that a good spanning forest data structure yields an efficient one-way communication protocol for $n$-fold \URsup{}.
In $n$-fold \URsup{}, Alice gets $n$ sets $S_1,\ldots,S_n\subseteq [U]$, Bob gets $n$ subsets $T_1,\ldots,T_n$ such that $T_i\subset S_i$ for all $i\in [n]$.
The goal is to find elements $x_i\in S_i\setminus T_i$ for all $i\in [n]$.
Then we prove a new direct product lemma for one-way communication based on the ideas from~\cite{BravermanRWY13}.
We show that a protocol for $n$-fold \URsup{} with cost $C$ and error $\delta$ gives us a new protocol for the original \URsup{} problem with ``cost'' $C/n$ and error $\sqrt{\delta/n}$, under a weaker notion of cost.
Then we generalize Theorem~\ref{thm_ur}, and show that same lower bound holds, which implies $C/n\geq \Omega(\log \frac{n}{\delta}\cdot \log^2 n)$.

In the following, we first show the reduction to $n$-fold \URsup{} in Section~\ref{sec_str_red}.
Then we prove the direct product lemma in Section~\ref{sec_str_dp}.
The proof of communication lower bound is deferred to Appendix~\ref{app_urlb}.

\subsection{Reduction to $n$-fold \URsup}\label{sec_str_red}
\begin{lemma}\label{lem_str_red}
If there is a fully dynamic data structure $A$ for spanning forest on a $2n$-node graph using $C$ bits of memory, and outputs a correct spanning forest with probability at least $1-\delta$, then there is a protocol for $n$-fold \URsup{} over $[n]$ using $C$ bits of communication with success probability $1-\delta$.
\end{lemma}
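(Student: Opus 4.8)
The plan is to embed $n$ independent copies of $\textbf{UR}^{\subset}$ into a fully dynamic spanning forest instance on $2n$ vertices by a gadget reduction, where each copy $(S_i,T_i)$ lives in the neighborhood of a dedicated vertex. First I would set up the graph: label the vertices $u_1,\dots,u_n$ (one per $\textbf{UR}^{\subset}$ instance, playing the role of the ``center'' $v_i$ from the distributed argument) and $w_1,\dots,w_n$ (the ``universe'' side), identifying each $w_j$ with the element $j\in[U]=[n]$. For each $i$, the edges incident to $u_i$ will encode $S_i$: precisely, $\{u_i,w_j\}\in G$ iff $j\in S_i$. The key structural point is to arrange matters so that, after all updates, the connected component of $u_i$ among the ``dangling'' vertices forces the spanning forest to expose an element of $S_i\setminus T_i$ — but I need $T_i$ not to be known to Alice. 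So instead I keep the $w_j$'s shared across all instances and use the deletion capability: Bob will delete exactly the edges $\{u_i,w_j\}$ for $j\in T_i$, so that the surviving edges incident to $u_i$ correspond to $S_i\setminus T_i$, and reading off any surviving edge from the spanning forest (there is at least one since $T_i\subsetneq S_i$) recovers an element of $S_i\setminus T_i$. To make the $w_j$'s usable simultaneously by all $n$ instances without collision, I would actually give each instance its own private copy of the universe side, i.e., replace the single $w_j$ by $w_{i,j}$; since the universe has size $n$ and there are $n$ instances this would naively need $n+n^2$ vertices, so the cleaner route (matching the stated $2n$ bound and $U=n$) is: use $n$ vertices $u_1,\dots,u_n$ and $n$ vertices $w_1,\dots,w_n$, let instance $i$ use universe $[U]$ with $U=n$ mapped to $\{w_1,\dots,w_n\}$, and rely on the fact that distinct $u_i$'s have disjoint edge sets so that the spanning forest restricted to the $n$ stars is just a union of stars, one per $i$, and from the star at $u_i$ we read off which $w_j$ survived.

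The communication protocol then runs as follows. Alice, holding $S_1,\dots,S_n$, initializes the data structure $A$ and feeds it, for every $i$ and every $j\in S_i$, the insertion of edge $\{u_i,w_j\}$ (plus whatever fixed scaffolding edges are needed to pin down the spanning forest structure around each $u_i$). She then sends the $C$-bit memory state of $A$ to Bob; this is the only message, and it is one-way, of cost $C$. Bob, holding $T_1,\dots,T_n$, resumes $A$ from that state and feeds it the deletions of $\{u_i,w_j\}$ for every $i$ and every $j\in T_i$. He then issues the single spanning-forest query. With probability $1-\delta$ the returned forest is a genuine spanning forest of the final graph $G$; conditioned on that, for each $i$ the forest must connect $u_i$'s component, and since $T_i\subsetneq S_i$ there is at least one surviving edge $\{u_i,w_j\}$ with $j\in S_i\setminus T_i$, and the forest must include such an edge (as it is the only way to attach the relevant gadget piece). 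Bob outputs, for each $i$, the index $j$ of (one such) surviving tree-edge at $u_i$; this is an element of $S_i\setminus T_i$, solving the $i$-th copy. Hence all $n$ copies are solved simultaneously with probability $\ge 1-\delta$, giving the claimed $n$-fold $\textbf{UR}^{\subset}$ protocol with communication $C$ and success $1-\delta$.

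The main obstacle is the gadget design: ensuring that ``any valid spanning forest of $G$ must reveal a surviving edge at $u_i$'' despite the $w_j$ vertices being shared among all $n$ instances. If $w_j$ is shared, then instance $i$ could be connected to the rest of the graph through some $w_j$ that another instance also touches, and the forest might attach $u_i$'s component via a path that does not correspond to an element of $S_i\setminus T_i$. The fix — and the detail I would spend the most care on — is to make each $u_i$'s neighborhood \emph{pendant}: attach to $u_i$ a fixed pendant vertex (or a short pendant path) that is connected to the rest of $G$ only through $u_i$, so that $\{u_i\}\cup(\text{pendant})$ can only be joined to the main component via one of the edges $\{u_i,w_j\}$, exactly as in the distributed-sketching argument with $V_i$. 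This is where the ``$2n$ vertices'' budget is spent: roughly $n$ center vertices plus $n$ pendant/universe vertices. With this pendant structure in place, the forest is forced to pick a surviving edge at each $u_i$, and the counting ($n$ instances, universe $[n]$, $2n$ vertices) works out; I would verify that the pendant attachments do not themselves create ambiguity (each pendant is a leaf, so it forces exactly one tree-edge into $u_i$, and that edge, combined with reading the star, is enough) and that the scaffolding edges Alice inserts are independent of $T_i$ so the protocol is well-defined.
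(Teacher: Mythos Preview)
Your core reduction is exactly the paper's: a bipartite graph on vertex sets $\{u_1,\dots,u_n\}$ and $\{w_1,\dots,w_n\}$, Alice inserts $\{u_i,w_j\}$ for each $j\in S_i$, sends the $C$-bit memory, Bob deletes $\{u_i,w_j\}$ for each $j\in T_i$, queries, and reads off a neighbor of each $u_i$ in the returned forest. That part is correct and matches the paper verbatim.

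Where you go astray is the ``main obstacle'' paragraph. There is no obstacle. After Bob's deletions the only edges incident to $u_i$ in $G$ are $\{u_i,w_j\}$ with $j\in S_i\setminus T_i$, and since $T_i\subsetneq S_i$ the vertex $u_i$ is non-isolated. In \emph{any} spanning forest, every non-isolated vertex has degree at least one, so the forest contains some edge incident to $u_i$, and that edge necessarily has the form $\{u_i,w_j\}$ with $j\in S_i\setminus T_i$. You don't need to reason about ``$u_i$'s component'' or paths through shared $w_j$'s; you just look at the single edge touching $u_i$ in the forest. No pendant vertices, no scaffolding, no extra gadgets are required, and indeed your pendant fix would push the vertex count past $2n$ (you cannot make the $w_j$'s simultaneously serve as shared universe vertices and as private pendants for each $u_i$). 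Drop that paragraph and the proof is complete and identical to the paper's.
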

\begin{proof}
Consider a bipartite graph $G$ with $n$ nodes on each side.
For simplicity, we assume one side of the graph is indexed by the universe $[n]$, and the other side uses the same indices as the $n$ pairs of sets $(S_i, T_i)$.
Now we are going to simulate $A$ on a sequence of updates to $G$, and solve the communication problem.

Starting from the empty graph, Alice first simulates $A$.
For each pair $(x, i)$ such that $x\in S_i$, Alice inserts an edge between $x$ and $i$.
After all insertions, she sends the memory of $A$ to Bob, which takes $C$ bits of communication.
Then for each pair $(x, i)$ such that $x\in T_i$, Bob deletes the edge between $x$ and $i$.
After all deletions, Bob makes a query and obtains a spanning forest $F$ of $G$. 

For every non-isolated vertex, the spanning forest reveals one of its neighbors. 
In particular, any neighbor $x$ of a vertex $i$ (on the second side) must be in the set $S_i\setminus T_i$.
Therefore, it suffices to output for each $i$, an arbitrary neighbor of $i$ in $F$.
The overall success probability is at least $1-\delta$.
\end{proof}

Theorem~\ref{thm:main-streaming} is an immediate corollary of Lemma~\ref{lem_str_red} and the following lemma, which we prove in the remainder of the section.

\begin{lemma}\label{lem_str_lb}
Any one-way communication protocol for $n$-fold \URsup{} over universe $[n]$ with error probability $\delta$ must use $C\geq \Omega(n \log \frac{n}{\delta}\log^2 n)$ bits of communication, as long as $\delta\in[2^{-n^{1-\epsilon}}, 1-\epsilon]$ for any given constant $\epsilon>0$.
\end{lemma}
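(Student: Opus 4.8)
The plan is to carry out the two-stage strategy sketched in Section~\ref{sec:overview}: first reduce $n$-fold \URsup{} to a single-copy problem via a one-way direct product lemma, then invoke a (strengthened) \URsup{} lower bound. Suppose for contradiction that a one-way protocol $\Pi$ for $n$-fold \URsup{} over $[n]$ with error probability $\delta$ uses $C$ bits. We fix the hard distribution $\cD_\ur$ for single-copy \URsup{} (over universe $[n]$) furnished by Theorem~\ref{thm_ur}, and feed $\Pi$ the product input $(S_1,T_1),\ldots,(S_n,T_n)\sim\cD_\ur^{\,n}$. Since $\Pi$ solves all $n$ coordinates simultaneously with probability $1-\delta$, the per-coordinate failure probability is $O(\delta/n + \text{(something small)})$ on average; the point of the direct product machinery is to turn this into a genuine single-copy object of small information cost. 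Following the intermediate result extracted from \cite{BravermanRWY13} (restricted to one-way protocols, as stated in Section~\ref{sec:overview}), I would produce a distribution $\theta$ over triples $(X,Y,M)$ and a one-way "protocol'' $P'$ for single-copy \URsup{} such that: (i) $P'$ errs with probability $O(\sqrt{\delta/n})$ under $\cD_\ur$; (ii) $\theta$ is close to the true transcript distribution $\pi$ on $(X,Y,M)$; and (iii) the internal information cost $I_\theta(M;X\mid Y) = O(C/n)$. This is the content of Section~\ref{sec_str_dp} (the "direct product lemma''), and it is where I expect the bulk of the technical work to live — in particular, being careful that the various steps do not lose a $1/\epsilon = \mathrm{poly}(n)$ factor, since here $\epsilon_{\text{per-copy}}\approx\sqrt{\delta/n}$ is polynomially small rather than constant, unlike in the original applications.

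The second stage is an information-theoretic lower bound for single-copy \URsup{}, generalized in two ways relative to Theorem~\ref{thm_ur}: it must (a) lower-bound the internal information cost $I_\theta(M;X\mid Y)$ rather than communication, and (b) tolerate that $\theta$ need not correspond to an honest protocol (i.e.\ $M$ need not be a function of $X$ and Alice's private coins alone) — only that $\theta$ is close to some honest $\pi$. This is Lemma~\ref{lem:ur}, proved in Appendix~\ref{app_urlb}; the proof should follow the argument of \cite{KapralovNPWWY17} with the modification that wherever they appeal to a message being short, one instead appeals to the information it carries, and wherever the closeness of $\theta$ to $\pi$ is invoked one absorbs an additive error. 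The conclusion is $I_\theta(M;X\mid Y) = \Omega(\log\frac{1}{\delta'}\log^2(n/\log\frac1{\delta'}))$ where $\delta' = \Theta(\sqrt{\delta/n})$, and since $\log(1/\delta') = \Theta(\log(n/\delta))$ (using $\delta \le 1-\epsilon$ so $\log(1/\delta')\gtrsim\log n$, and $\delta \ge 2^{-n^{1-\epsilon}}$ so $\log(1/\delta') \le n^{1-\epsilon}$ keeping us away from the $U=n$ truncation regime), this reads $I_\theta(M;X\mid Y) = \Omega(\log\frac n\delta\log^2 n)$.

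Combining (iii) with the lower bound gives $C/n = \Omega(\log\frac n\delta\log^2 n)$, i.e.\ $C = \Omega(n\log\frac n\delta\log^2 n)$, which is exactly the claimed bound; this contradicts the assumed cost once $C$ is taken below this threshold, completing the proof. The main obstacle, as flagged above, is the direct product lemma of Section~\ref{sec_str_dp}: one must redo the relevant portion of the \cite{BravermanRWY13} argument from scratch for the one-way setting, track the dependence on the per-copy error parameter $\gamma=\Theta(\sqrt{\delta/n})$ so that no superfluous $\mathrm{poly}(n)$ factors creep in, and arrange the output so that the closeness-of-$\theta$-to-$\pi$ hypothesis is in exactly the form that the strengthened \URsup{} lower bound of Appendix~\ref{app_urlb} can consume. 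A secondary (but routine) check is the arithmetic relating $\log(1/\delta')$ to $\log(n/\delta)$ across the full admissible range $\delta\in[2^{-n^{1-\epsilon}},1-\epsilon]$, ensuring both that we stay in the non-trivial regime of Theorem~\ref{thm_ur} and that the $\log^2$ factor remains $\Theta(\log^2 n)$.
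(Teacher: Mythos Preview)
Your proposal is correct and follows essentially the same approach as the paper: apply the direct product lemma (Lemma~\ref{lem:dp}) to obtain a single-copy $\theta$-protocol with information cost $O(C/n)$ and error $O(\sqrt{\delta/n})$, then invoke the strengthened \URsup{} lower bound (Lemma~\ref{lem:ur}) to conclude $C/n \ge \Omega(\log\frac{n}{\delta}\log^2 n)$. One small imprecision: the error guarantee produced by Lemma~\ref{lem:dp} is under the perturbed distribution $\cD'$, not under $\cD_\ur$ itself, and the ``closeness'' is packaged as specific KL bounds on marginals and conditionals plus the $\theta$-protocol condition $I(M;Y\mid X)\le O(\frac1n\log\frac1{1-\delta})$ --- exactly the hypotheses Lemma~\ref{lem:ur} is stated to consume --- but you clearly anticipate this in your discussion of (ii) and the ``form that the strengthened \URsup{} lower bound can consume.''
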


\subsection{Direct product lemma}\label{sec_str_dp}

\newcommand{\ighr}{{i,\g,\h,r}}

Consider one-way communication protocols with a fixed input distribution computing $f(X, Y)$.
A pair of inputs $(X, Y)$ is sampled from distribution $\cD$.
Two players Alice and Bob receive $X$ and $Y$ respectively. 
Alice sends one message $M$ to Bob.
Then Bob outputs a value $O$.

In the $n$-fold problem $f^n$, $n$ input pairs $(X_1, Y_1),\ldots, (X_n, Y_n)$ are sampled from $\cD$ independently. 
The goal is to compute all $f(X_1, Y_1),\ldots,f(X_n, Y_n)$.

In this subsection, we prove the following lemma which is implicitly proved in \cite{BravermanRWY13}.

\begin{definition}
Let $\cD$ be an input distribution for a two-player communication problem, a \emph{one-way $\theta$-protocol} consists of
\begin{itemize}
	\item distributions $M_{x,y}$ for every possible input $(x, y)$, and
	\item an output function $O=O(m, y)$ that takes a message $m$ and input $y$, and outputs a possible function value,
\end{itemize}
such that if $(X, Y)\sim \cD$ and $M\sim M_{X, Y}$, then $I(M ; Y\mid X)\leq \theta$.
\end{definition}
\begin{remark}
A \emph{one-way $0$-protocol} is a standard one-way communication protocol.
\end{remark}

\begin{lemma}\label{lem:dp}
Let $(X^{(n)}, Y^{(n)})\sim \cD^n$ be an input pair for an $n$-fold problem $f^n$, where $X^{(n)}=(X_1,\ldots,X_n)$ and $Y^{(n)}=(Y_1,\ldots,Y_n)$.
If there is a one-way protocol $\tau$ that takes input $(X^{(n)}, Y^{(n)})$, has \emph{communication} cost $C$ and computes $f^n$ with probability $p$, 
then there is an input distribution $\cD'$ for the one-fold problem $f$, and a one-way $O(\frac{1}{n}\log\frac1p)$-protocol $\pi$ such that
\begin{enumerate}
        \item $\DKL(\cD'[X]\|\cD[X]), \DKL(\cD'[Y]\|\cD[Y]) \le O\left(\log\frac 1p\right)$, where $\cD[X]$ denotes the marginal distribution of $X$ in $\cD$ (and similarly for the other terms).
        \item We have that both $\E_{x\sim \cD'[X]}\DKL(\cD'[Y|X=x]\|\cD[Y|X=x])$ and  $\E_{y\sim \cD'[Y]} \DKL(\cD'[X|Y=y]\|\cD[X|Y=y])$ are at most $O\left(\frac 1n\log\frac 1p\right)$, where $\cD[X|Y=y]$ denotes the marginal distribution of $X$ conditioned on $Y=y$ (and similarly for the other terms).
	\item under input distribution $\cD'$, $\pi$ computes $f$ with probability $1-O\left(\sqrt{\frac{1}{n}\log\frac{1}{p}}\right)$, and
	\item under input distribution $\cD'$, $\pi$ has ``internal information cost'' $I(X; M\mid Y)\leq O(C/n)$.
\end{enumerate}
\end{lemma}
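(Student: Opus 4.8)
The plan is to follow the information-theoretic compression template of \cite{BravermanRWY13}, specialized to the one-way setting. Let $\tau$ be the assumed protocol with message $M^{(n)}$ (a function of $X^{(n)}$ and Alice's private coins) of length $C$, and let $\mathcal{E}$ be the event that $\tau$ succeeds on all $n$ coordinates; by assumption $\Pr[\mathcal{E}]=p$. First I would condition the entire distribution on $\mathcal{E}$: write $\nu$ for the distribution of $(X^{(n)},Y^{(n)},M^{(n)})$ conditioned on $\mathcal{E}$. Two standard facts drive everything. (i) Conditioning on an event of probability $p$ increases KL-divergence / decreases entropy in a controlled way: for any coordinate $i$, $\DKL(\nu[X_i]\,\|\,\cD[X])\le \DKL(\nu[X^{(n)}]\,\|\,\cD^n[X^{(n)}]) \le \log(1/p)$, and similarly for $Y_i$, $(X_i,Y_i)$, and conditional versions — this is where items~1 and~2 will come from after averaging over a uniformly random coordinate $i$. (ii) The chain rule bounds the average per-coordinate information: under $\nu$, $\sum_{i} I_\nu(X_i ; M^{(n)} \mid Y^{(n)}, X_{<i})$ and $\sum_i I_\nu(M^{(n)} ; Y_i \mid X^{(n)}, Y_{<i})$ are each at most $H(M^{(n)}) + \log(1/p) \le C + \log(1/p)$, so a random coordinate contributes $O((C+\log(1/p))/n)$ to the first and $O((\log(1/p))/n)$ to the second (the message length doesn't enter the $Y$-side bound because $M^{(n)}$ is independent of $Y^{(n)}$ given $X^{(n)}$ even after conditioning — one must check conditioning on $\mathcal{E}$ preserves a usable version of this).

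Next I would define the one-fold objects. Pick the random coordinate $i$ and condition additionally on the "side information" $X_{<i}, Y_{<i}$ (or whatever subset of other coordinates makes the embedding work — in \cite{BravermanRWY13} one splits the remaining coordinates, giving some $X_j$'s and some $Y_j$'s to the appropriate side); call the realized side information $w$. Set $\cD' = \cD'_{i,w}$ to be the conditional distribution of $(X_i,Y_i)$ under $\nu$ given this side information, and define the $\theta$-protocol $\pi$ by: on input $(x,y)$, the "message" $M$ is drawn from the conditional law of $M^{(n)}$ under $\nu$ given $X_i=x$ (and the fixed side info $w$), and the output function runs $\tau$'s decoder on coordinate $i$ using $M$, $y$, and $w$. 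By construction $\pi$'s output is exactly $\tau$'s $i$-th answer conditioned on global success, so it is correct with probability $1$ under the conditioned distribution; the loss to probability $1-O(\sqrt{(1/n)\log(1/p)})$ in item~3 comes from moving from the conditioned distribution back to the "honest" one — i.e.\ from the fact that $\cD'$ and the true one-fold marginal differ in statistical distance $O(\sqrt{(1/n)\log(1/p)})$ by Pinsker applied to the per-coordinate KL bound from step (i). Item~4, the internal information cost $I_{\cD'}(X_i ; M \mid Y_i) \le O(C/n)$, is exactly the per-coordinate term from the chain rule in step (ii) (after checking that dropping the remaining side information in the conditioning can only decrease this particular mutual information, or is absorbed into the averaging). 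Item~1, the $\theta$-bound $I(M;Y_i\mid X_i)\le O((1/n)\log(1/p))$, is the $Y$-side chain-rule term from (ii).

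The main obstacle — and the reason this is stated as a lemma extracted from \cite{BravermanRWY13} rather than quoted — is the bookkeeping around conditioning on the global success event $\mathcal{E}$. Conditioning on $\mathcal{E}$ destroys the clean product structure and the clean "$M^{(n)}\perp Y^{(n)}\mid X^{(n)}$" Markov property, so every information inequality used above has to be re-derived with an extra $\log(1/p)$ slack coming from $\mathcal{E}$, and one must be careful that these slacks are additive across the three or four places they appear (in the chain rule, in the marginal KL bounds, and in the embedding-correctness step) rather than multiplying. A secondary subtlety is verifying that the embedding of side information is consistent: the distribution $\cD'$ must simultaneously make items~1--4 hold, which forces a particular choice of which other coordinates to reveal and to whom, and one has to confirm that with this choice Alice can still generate $M$ from $x$ alone (so that $\pi$ is a legitimate $\theta$-protocol and not something requiring Bob's input to produce the message). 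Once the conditioning accounting is pinned down, items~1--4 each fall out of one of the two driving facts above, with Pinsker converting the KL bounds of item~2 into the correctness loss of item~3.
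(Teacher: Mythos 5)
Your overall architecture matches the paper's: condition on the global success event $W$, pick a random coordinate $i$ together with side information consisting of some $X_j$'s and some $Y_j$'s, define the one-fold message/output channels as conditional laws of $(M,O)$ under the conditioned distribution, control everything in expectation via the chain rule and the fact that $I(A;B\mid R,W)\le\log\frac{1}{\Pr(W)}$ when $A\perp B\mid R$, and finish with Markov plus a union bound. Two points, one of which is a genuine gap.

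The gap is in your treatment of item~3. You claim that ``$\pi$'s output is exactly $\tau$'s $i$-th answer conditioned on global success, so it is correct with probability~$1$ under the conditioned distribution,'' and that the $O(\sqrt{\tfrac1n\log\tfrac1p})$ loss comes from the distance between $\cD'$ and the honest one-fold marginal. Both halves of this are wrong: item~3 is stated \emph{under $\cD'$ itself}, so there is no ``moving back,'' and the simulated protocol is \emph{not} correct with probability~$1$ under $\cD'$. The reason is that Bob's output function only sees $(y,m)$ and the fixed side information $r$, so he must sample $O$ from $\mu[O\mid Y_i=y,(X_\g,Y_\h)=r,M=m,W]$ --- he cannot condition on $X_i=x$. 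Under the true conditional law $\mu[\,\cdot\mid W]$ the coordinate $O_i$ always equals $f(X_i,Y_i)$, but under Bob's simulation $O_i$ is effectively $f(x',y)$ for a fresh $x'$ drawn from $\mu[X_i\mid Y_i=y,M=m,r,W]$, because conditioning on $W$ couples $O_i$ to $X_i$ even given $(Y_i,M)$. The error is therefore the statistical distance between these two \emph{output channels} (the input marginals of $(X,Y,M)$ agree exactly), which Pinsker bounds by $O\bigl(\sqrt{I_\mu(X_i;O_i\mid Y_i,M,(X_\g,Y_\h)=r,W)}\bigr)$; this mutual information is then shown to be $O(\tfrac1n\log\tfrac1p)$ in expectation because it would be zero absent the conditioning on $W$. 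Your proposed route via item~2's KL bounds does not reach this quantity.

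A secondary point you flag but underplay: the choice of side information is not mere bookkeeping. The paper takes a random permutation $\kappa$ and two split points $s_h\le n/2<s_g$, with $\g$ a prefix of $\kappa$ and $\h$ a suffix, overlapping in the middle. This specific overlapping structure is what makes the relevant \emph{unconditioned} mutual informations (for the $\theta$-bound and for the error bound above) exactly zero --- e.g.\ $I_\mu(Y_{\kappa([1,n/2])};M\mid X_{\kappa([1,s_g])},Y_{\kappa([n/2+1,n])})=0$ because $M$ is a function of $X^{(n)}$ alone --- so that the $\log\frac1p$ lemma applies, while simultaneously letting the $X$-side chain rule telescope to $|M|\le C$ for item~4. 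Your default of conditioning on $X_{<i},Y_{<i}$ (or on all of $X^{(n)}$ for the $\theta$-bound) does not produce side information that Bob can actually hold in a legitimate one-fold protocol, so the ``whatever subset makes the embedding work'' step is doing real work that your plan leaves unresolved.
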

\begin{proof}
	Let $i\in [n], \g, \h\subset [n]$ such that $i\notin \g\cup \h$, and $r$ be a possible assignment to $(X_\g, Y_\h)$.\footnote{$X_{\g}$ is $X^{(n)}$ restricted to coordinates $\g$ and $Y_{\h}$ is $Y^{(n)}$ restricted to coordinates $\h$.}
	To prove the lemma, we first define for every quadruple $(i, \g, \h, r)$, a $\theta$-protocol $\pi_{i, \g, \h, r}$ for the one-fold problem and an input distribution $\cD'_{i,\g,\h,r}$.
	Then we make a probabilistic argument showing that there is a carefully chosen distribution over $(i,\g,\h,r)$ such that in expectation $\pi_{i,\g,\h,r}$ and $\cD'_{i,\g,\h,r}$ have the desired properties.
	Finally, we finish the proof by applying Markov's inequality and union bound, and conclude that there is a quadruple that satisfies all requirements simultaneously.

	For the $n$-fold problem, under input distribution $\cD^n$ and protocol $\tau$, the inputs $(X^{(n)}, Y^{(n)})$, message $M$ and output $O$ form a joint distribution.
	Similar to the notations in Section~\ref{sec_dist},
	denote this distribution by $\mu$, and the marginal distribution of any subset of the random variables by $\mu[\cdot]$, e.g., the marginal distribution of $X$ and $M$ is denoted by $\mu[X, M]$, marginal distribution of $Y$ conditioned on event $V$ is denoted by $\mu[Y\mid V]$.
	Finally, denote by $W$ the event that output $O$ is correct.

	Now let us define $\pi_{i,\g,\h,r}$ and $\cD'_{i,\g,\h,r}$ (see Figure~\ref{fig:pi-ighr}).

	\begin{figure}[ht]
	\begin{center}
	\fbox{                                                                                                         
	{\footnotesize                                                                                                 
	\parbox{6.375in} {                                                                                             
	\underline{$\theta$-protocol $\pi_{i,\g,\h,r}$ for $f$}: \\[0.05in]
	\textbf{Distribution over messages} $M_{x,y}$:\\
	\vspace{-.05in}
	\hspace*{13pt}sample $m$ from $\mu[M\mid X_i=x,Y_i=y,(X_\g,Y_\h)=r,W]$\\

	\textbf{Output function} $O(m, y)$:
	\vspace{-.05in}
	\begin{enumerate}
	\addtolength{\itemsep}{-4pt}
	\item privately sample an output $o$ from $\mu[O\mid Y_i=y, (X_\g, Y_\h)=r,M=m,W]$ (with $n$ coordinates)
	\item output the $i$-th coordinate of $o$
	\end{enumerate}

	\bigskip

	\underline{Input distribution $\cD'_{i,\g,\h,r}$}: $(x, y)\sim\mu[X_i, Y_i\mid (X_\g,Y_\h)=r, W]$
	}}}
	\caption{$\pi_{i,\g,\h,r}$ on input pair $(x, y)$.}\label{fig:pi-ighr}
	\end{center}
	\end{figure}

	Denote the joint distribution of $X,Y,M,O$ under $\cD'_{i,\g,\h,r}$ and $\pi_{i,\g,\h,r}$ by $\nu_{i,\g,\h,r}$.
	Similarly, denote the marginal distribution of a subset of variables by $\nu_{i,\g,\h,r}[\cdot]$.
	Before we define the distribution of $(i,\g,\h,r)$, let us first analyze the information cost and success probability for each quadruple.

	\paragraph{``Distance'' from a protocol, and internal information cost of $\pi_\ighr$.} $\pi_\ighr$ is a $\theta$-protocol, where $\theta$ is
	\begin{equation}\label{eq_dist_prot}
		I_{\nu_{\ighr}}(M ; Y\mid X) = I_{\mu}(M ; Y_i\mid X_i, (X_\g, Y_\h)=r, W),
	\end{equation}
	because $\nu_\ighr[X,Y,M]$ is the same as $\mu[X_i,Y_i,M\mid (X_\g,Y_\h)=r,W]$.

	For the same reason, the internal information cost of $\pi_\ighr$ under input distribution $\cD'_\ighr$ is
	\begin{equation}\label{eq_info_cost}
		I_{\nu_{i,\g,\h,r}}\left(X;M\mid Y\right)=I_{\mu}\left(X_i;M\mid Y_i,(X_\g,Y_\h)=r, W\right).
	\end{equation}
	
	\paragraph{Error probability of $\pi_{i,\g,\h,r}$.}
	We first observe that $\mu\left[O\mid X_i=x,Y_i=y,(X_\g,Y_\h)=r,W\right]$ always has $f(x,y)$ in its $i$-th coordinate, since $W$ is the event that $\tau$ is correct.
	Thus, the statistical distance between $\mu\left[X_i,Y_i,M,O_i\mid (X_\g,Y_\h)=r, W\right]$ and $\nu_{i,\g,\h,r}[X,Y,M,O]$ will be an upper bound on the error probability.

	Same as above, the marginals of $(X,Y,M)$ are identical in the two distributions.
	To upper bound the statistical distance between $O$ in the two distributions conditioned on $(X, Y, M)$, for any $x, y$ and $m$, 
	\begin{align*}
		&\ \left|\mu[O_i\mid X_i=x,Y_i=y,M=m,(X_\g,Y_\h)=r, W]-\nu_\ighr[O\mid X=x,Y=y,M=m]\right| \\
		=&\ \left|\mu[O_i\mid X_i=x,Y_i=y,M=m,(X_\g,Y_\h)=r, W]-\mu[O_i\mid Y_i=y,M=m,(X_\g,Y_\h)=r, W]\right| \\
		\leq&\  O\left(\sqrt{\DKLver{\mu[O_i\mid X_i=x,Y_i=y,M=m,(X_\g,Y_\h)=r, W]}{\mu[O_i\mid Y_i=y,M=m,(X_\g,Y_\h)=r, W]}}\right).
	\end{align*}
	
	Thus, we have
	\begin{align}
          \nonumber	&\ \left|\mu[X_i,Y_i,M,O_i\mid (X_\g,Y_\h)=r, W]-\nu_\ighr[X,Y,M,O]\right| \\
	  \nonumber	=&\ \E[\left|\mu[O_i\mid X_i=x,Y_i=y,M=m,(X_\g,Y_\h)=r, W]-\nu_\ighr[O\mid X=x,Y=y,M=m]\right|] \\
	  \nonumber	\leq&\ \E_{(x,y,m)\sim \mu(X_i,Y_i,M\mid (X_\g,Y_\h)=r, W)}\left|O\left(\sqrt{\DKLver{\mu[O_i\mid X_i=x,Y_i=y,M=m,(X_\g,Y_\h)=r, W]}{\mu[O_i\mid Y_i=y,M=m,(X_\g,Y_\h)=r, W]}}\right)\right| \\
	  \intertext{which by Jensen's inequality and the fact that $I_p(X; Y)=\E_{y}\DKL(p[X\mid Y=y]\| p[X])$,}
		\le&\ O\left(\sqrt{I_{\mu}(X_i;O_i\mid Y_i,M,(X_\g,Y_\h)=r,W)}\right)\label{eq_err_prob}
	\end{align}
        which is an upper bound on the error probability.

	Now we are ready to define the distribution of $(i, \g, \h, r)$, and prove that all requirements are satisfied in expectation.
	\paragraph{Distribution of $(i, \g, \h, r)$.} There are two equivalent ways to generate the quadruple (see Figure~\ref{fig-dist-ighr} for one of them), which will be useful in different parts of the proof.

	Pick a uniformly random permutation $\kappa$ over $[n]$, pick two uniformly independent random numbers $s_g, s_h$ from the two halves respectively, i.e., $s_h\in[1, n/2]$, $s_g\in[n/2+1, n]$. Then
	\begin{itemize}
		\item set $i=\kappa(s_g)$, $\g=\kappa([1, s_g-1])$ and $\h=\kappa([s_h,n])\setminus \{i\}$; or
		\item set $i=\kappa(s_h)$, $\g=\kappa([1,s_g])\setminus \{i\}$ and $\h=\kappa([s_h+1,n])$.
	\end{itemize}

	\begin{figure}[ht]
	\begin{center}
	\begin{tikzpicture}
		\node at (-10pt, 0) {\scriptsize $\kappa:$};
		\foreach \i in {10,20,...,200}
			\node [draw, circle, inner sep=0, minimum size=5pt] at (\i pt, 0) {};
		\draw (105 pt, -5pt) -- (105 pt, 5pt);
		\draw [rounded corners=3pt, thick] (84pt, 8pt) rectangle (125pt, -5pt);
		\draw [rounded corners=3pt, thick] (5pt, 8pt) rectangle (76pt, -5pt);
		\draw [rounded corners=3pt, thick] (85pt, 5pt) rectangle (205pt, -8pt);
		\node at (80pt, 10pt) {\scriptsize $i$};
		\node at (40pt, 13pt) {\scriptsize $\g$};
		\node at (170pt, 10pt) {\scriptsize $\h$};
	\end{tikzpicture}
	\end{center}
	\caption{$(i, \g, \h)$ and $\kappa$ from the second distribution.}\label{fig-dist-ighr}
	\end{figure}

	The triple $(i,\g,\h)$ is identically distributed in the two distributions.
	Since $\kappa$ is a random permutation, then $i$ is a random element, $\g$ and $\h$ are two random sets of size $s_g-1$ and $n-s_h$ respectively, which has intersection size $s_g-s_h$.
	It is easy to verify that $i\notin \g\cup \h$ as required.
	Finally, we sample $r$ from $\mu[X_\g, Y_\h\mid W]$.

	\paragraph{The expected internal information cost is low.} To bound the internal information cost, we use the first view of the distribution of $(i,\g,\h)$.
	By Equation~\eqref{eq_info_cost}, the expected internal information cost is at most
	\begin{align*}
		&\ \E_\ighr I_{\nu_\ighr}\left(X;M\mid Y\right) \\
		=&\ \E_\ighr I_{\mu}\left(X_i;M\mid Y_i,(X_\g,Y_\h)=r, W\right) \\
		=&\ \E_{i,\g,\h} I_{\mu}\left(X_i;M\mid Y_i, X_\g,Y_\h, W\right) \\
		=&\ \E_{\kappa,s_g,s_h} I_{\mu}\left(X_{\kappa(s_g)};M\mid X_{\kappa([1, s_g-1])},Y_{\kappa([s_h,n])}, W\right) \\
		\intertext{since $s_g$ is uniform between $n/2+1$ and $n$, and by chain rule}
		=&\ \frac{2}{n}\cdot \E_{\kappa,s_h} I_{\mu}\left(X_{\kappa([n/2+1,n])};M\mid X_{\kappa([1,n/2])},Y_{\kappa([s_h,n])}, W\right) \\
		\leq&\  \frac{2}{n}\cdot |M| \\
		=&\ O(C/n).
	\end{align*}

	\paragraph{$\pi_{\ighr}$ is a $\theta$-protocol with small $\theta$.} We use the second view of the distribution. By \eqref{eq_dist_prot}, $\theta$ is at most
	\begin{align*}
		\E_{\ighr} I_{\nu_\ighr}(Y;M\mid X) &\leq \E_\ighr I_{\mu}(Y_i;M\mid X_i,(X_\g,Y_\h)=r, W) \\
		&=\E_{i,\g,\h} I_{\mu}(Y_i;M\mid X_i,X_\g,Y_\h, W) \\
		&=\E_{\kappa,s_g,s_h} I_{\mu}(Y_{\kappa(s_h)};M\mid X_{\kappa([1,s_g])}, Y_{\kappa([s_h+1,n])}, W) \\
		&=\frac{2}{n}\cdot \E_{\kappa,s_g} I_{\mu}(Y_{\kappa([1,n/2])};M\mid X_{\kappa([1,s_g])}, Y_{\kappa([n/2+1,n])}, W).
	\end{align*}

	Note that since $s_g\geq n/2+1$, the mutual information would be $0$ if we \emph{did not} condition on $W$:
	\begin{align*}
		&\ I_{\mu}(Y_{\kappa([1,n/2])};M\mid X_{\kappa([1,s_g])}, Y_{\kappa([n/2+1,n])}) \\
		=&\ H(Y_{\kappa([1,n/2])}\mid X_{\kappa([1,s_g])}, Y_{\kappa([n/2+1,n])}) 
		- H(Y_{\kappa([1,n/2])}\mid M, X_{\kappa([1,s_g])}, Y_{\kappa([n/2+1,n])}) \\
		\leq&\ H(Y_{\kappa([1,n/2])}\mid X_{\kappa([1,n/2])})-H(Y_{\kappa([1,n/2])}\mid M, X^{(n)}, Y_{\kappa([n/2+1,n])}) \\
		=&\ H(Y_{\kappa([1,n/2])}\mid X_{\kappa([1,n/2])})-H(Y_{\kappa([1,n/2])}\mid X, Y_{\kappa([n/2+1,n])}) \\
		=&\ 0.
	\end{align*}
	Therefore, by Lemma~\ref{lem:indep_W_minfo} below, $I_{\mu}(Y_{\kappa([1,n/2])};M\mid X_{\kappa([1,s_g])}, Y_{\kappa([n/2+1,n])}, W)\leq \log\frac{1}{\Pr(W)}$, and hence
	\[
		\E_\ighr I_{\mu}(Y_i;M\mid X_i,(X_\g,Y_\h)=r, W)\leq \frac{2}{n}\cdot\log\frac{1}{\Pr(W)}=O\left(\frac{1}{n}\log\frac1p\right).
	\]

	\paragraph{The expected error probability is low.} By Equation~\eqref{eq_err_prob} and Jensen's inequality, it suffices to upper bound $\E_\ighr[I_{\mu}(X_i;O_i\mid Y_i,M,(X_\g,Y_\h)=r,W)]$.
	We view $(i, \g, \h)$ as a triple sampled from the first distribution.
	\begin{align*}
		&\ \E_\ighr I_{\mu}(X_i;O_i\mid Y_i,M,(X_\g,Y_\h)=r,W) \\
		=&\ \E_{i,\g,\h} I_{\mu}(X_i; O_i\mid Y_i, X_\g, Y_\h, M, W) \\
		\leq &\ \E_{\kappa,s_g,s_h} I_{\mu}(X_{\kappa(s_g)}; O\mid X_{\kappa([1,s_g-1])}, Y_{\kappa([s_h,n])}, M, W) \\
		=&\ \frac{2}{n}\cdot \E_{\kappa,s_h} I_{\mu}(X_{\kappa([n/2+1,n])}; O\mid X_{\kappa([1,n/2])}, Y_{\kappa([s_h,n])}, M, W).
	\end{align*}
	Similarly since $s_h\leq n/2$, $X_{\kappa([n/2+1,n])}$ and $O$ are independent if we did not condition on $W$:
	\begin{align*}
		&\  I_{\mu}(X_{\kappa([n/2+1,n])}; O\mid X_{\kappa([1,n/2])}, Y_{\kappa([s_h,n])}, M) \\
		\leq&\ I_{\mu}(X_{\kappa([n/2+1,n])}; Y_{\kappa([1,s_h-1])}\mid X_{\kappa([1,n/2])}, Y_{\kappa([s_h,n])}, M) \\
		=&\ H(Y_{\kappa([1,s_h-1])}\mid X_{\kappa([1,n/2])}, Y_{\kappa([s_h,n])}, M) 
		- H(Y_{\kappa([1,s_h-1])}\mid X^{(n)}, Y_{\kappa([s_h,n])}, M) \\
		\leq&\ H(Y_{\kappa([1,s_h-1])}\mid X_{\kappa([1,n/2])}, Y_{\kappa([s_h,n])}) 
		- H(Y_{\kappa([1,s_h-1])}\mid X^{(n)}, Y_{\kappa([s_h,n])}) \\
		=&\ 0.
	\end{align*}

	Thus, by Lemma~\ref{lem:indep_W_minfo},
	\[
		\E_\ighr I_{\mu} (X_i;O_i\mid Y_i,M,(X_g,Y_h)=r,W)\leq \frac{2}{n}\cdot \log\frac{1}{\Pr(W)}=O\left(\frac{1}{n}\log\frac1p\right).
	\]

	By Equation~\eqref{eq_err_prob} and Jensen's inequality, the error probability is at most
	\begin{align*}
		&\ O\left(\E_\ighr \sqrt{I_{\mu}(X_i;O\mid Y_i,M,(X_\g,Y_\h)=r,W)}\right) \\
		\leq&\ O\left(\sqrt{\E_\ighr I_{\mu}(X_i;O\mid Y_i,M,(X_\g,Y_\h)=r,W)}\right) \\
		\leq&\ O\left(\sqrt{\frac{1}{n}\log\frac1p}\right).
	\end{align*}

	\paragraph{The expected $\DKL(\cD_\ighr'[X]\|\cD[X])$ is small.} We have 
        \begin{align*}
	\E_\ighr \DKL(\cD_\ighr'[X]\|\cD[X])&= \E_{\stackrel{i,\g,\h}{r\sim \mu[(X_\g, Y_\h)\mid W]}}\DKLver{\mu[X_i\mid (X_\g,Y_\h)=r, W]}{\mu[X_i]} \\
          {}&\leq  \log \frac{1}{\Pr(W)}\\
          {} & \leq \log \frac 1p
	\end{align*}
        since $\DKL(p(X|W) \| p(X)) \le \log(1/\Pr(W))$ for any distribution $p$, random variable $X$, and event $W$.

        \paragraph{The expected $\DKL(\cD_\ighr'[Y]\|\cD[Y])$ is small.}
        Identically argued as the previous case, we also have 
        $$\E_\ighr \DKL(\cD_\ighr'[Y]\|\cD[Y])\le \log \frac 1{\Pr(W)} \le \log \frac 1p .$$

        \paragraph{The expected $\E_{x\sim \cD'[X]}\DKL(\cD'[Y|X=x]\|\cD[Y|X=x])$ is small.}

	We view $(i,\g,\h)$ as being sampled from the second distribution. Then the quantity of interest equals
	\begin{align*}
		&\ \E_{\stackrel{i,\g,\h}{(r,x)\sim \mu[(X_\g, Y_\h, X_i)\mid W]}}\DKLver{\mu[Y_i\mid X_i=x, (X_\g,Y_\h)=r, W]}{\mu[Y_i\mid X_i=x, (X_\g,Y_\h)=r]} \\
		=&\ \E_{\stackrel{\kappa,s_g,s_h}{r'\sim \mu[(X_{\kappa([1,s_g])}, Y_{\kappa([s_h+1,n])})\mid W]}}\DKLver{\mu[Y_{\kappa(s_h)}\mid (X_{\kappa([1,s_g])},Y_{\kappa([s_h+1,n])})=r', W]}{\mu[Y_{\kappa(s_h)}\mid (X_{\kappa([1,s_g])},Y_{\kappa([s_h+1,n])})=r']} \\
		=&\ \frac{2}{n}\cdot \E_{\stackrel{\kappa,s_g}{r'\sim \mu[(X_{\kappa([1,s_g])}, Y_{\kappa([n/2+1,n])})\mid W]}}\DKLver{\mu[Y_{\kappa([1,n/2])}\mid \left(X_{\kappa([1,s_g])},Y_{\kappa([n/2+1,n])}\right)=r', W]}{\mu[Y_{\kappa([1,n/2])}\mid \left(X_{\kappa([1,s_g])},Y_{\kappa([n/2+1,n])}\right)=r']} \\
		\leq&\ \frac{2}{n}\cdot \log\frac{1}{\Pr(W)}.
	\end{align*}

        \paragraph{The expected $\E_{y\sim \cD'[Y]}\DKL(\cD'[X|Y=y]\|\cD[X|Y=y])$ is small.} The argument is identical to the immediately preceding case, with the roles of $x$ and $y$ reversed.

	\bigskip

	Finally, since mutual information, error probability and KL divergence are all non-negative, by Markov's inequality and the union bound, there exists a quadruple $(i, \g, \h, r)$ such that
	\begin{enumerate}
		\item $\pi_{\ighr}$ is an $O(\frac{1}{n}\log\frac1p)$-protocol under $\cD'_{\ighr}$;
                \item $\DKL(\cD'_{i,\g,\h,r}[X]\| \cD[X]), \DKL(\cD'_{i,\g,\h,r}[Y]\| \cD[Y]) \leq O(\log\frac{1}{p})$;
                \item The quantities $\E_{x\sim\cD'_{i,\g,\h,r}[X]} \DKL(\cD'_{i,\g,\h,r}[Y|X=x]\|\cD[Y|X=x])$ and $\E_{y\sim\cD'_{i,\g,\h,r}[Y]} \DKL(\cD'_{i,\g,\h,r}[X|Y=y]\|\cD[X|Y=y])$ are both at most $O(\frac 1n \log\frac 1p)$;
		\item the information cost of $\pi_{i,\g,\h,r}$ on input pair drawn from $\cD'_{i,\g,\h,r}$ is at most $O(C/n)$;
		\item the error probability of $\pi_{i,\g,\h,r}$ on a random instance drawn from $\cD'_{i,\g,\h,r}$ is at most $O(\sqrt{\frac{1}{n}\log\frac{1}{p}})$.
	\end{enumerate}
	This proves the lemma.
\end{proof}

The following appears as \cite[Lemma 19]{BravermanRWY13}.
\begin{lemma}\label{lem:indep_W_minfo}
$A, B$ are independent conditioned on $R$, then for any event $W$, $I(A; B\mid R, W)\leq \log\frac{1}{\Pr(W)}$.
\end{lemma}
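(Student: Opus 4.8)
The plan is to prove the bound by passing to the distribution conditioned on $W$. Write $p$ for the underlying joint law of all the random variables in play (in particular $A,B,R$, and whatever randomness determines the event $W$), and let $q:=p(\cdot\mid W)$, so that $I(A;B\mid R,W)=I_q(A;B\mid R)$. The whole argument rests on two facts: first, $A\perp B\mid R$ \emph{under $p$} (the hypothesis), which lets us swap $q$-marginals for $p$-marginals for free; and second, $x\mapsto x\log(1/x)$ is concave on $[0,1]$, which is exactly what converts an averaged ``$\log(1/\text{local probability of }W)$'' into the global quantity $\log(1/\Pr(W))$.

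Concretely I would carry out four steps. \textbf{Step 1 (variational bound).} Expand $I_q(A;B\mid R)=\E_{(r,b)\sim q}\DKL(q(A\mid r,b)\,\|\,q(A\mid r))$ and use the identity $\E_{b}\DKL(q(A\mid r,b)\|\lambda)=\E_{b}\DKL(q(A\mid r,b)\|q(A\mid r))+\DKL(q(A\mid r)\|\lambda)$, valid for any distribution $\lambda$ over the values of $A$ possibly depending on $r$ but not on $b$; since the last term is nonnegative, replacing the $q$-marginal $q(A\mid r)$ by $p(A\mid r)$ can only increase the quantity. \textbf{Step 2 (use independence).} Because $A\perp B\mid R$ under $p$ we have $p(A\mid r)=p(A\mid r,b)$, and of course $q(A\mid r,b)=p(A\mid r,b,W)$; hence $I_q(A;B\mid R)\le \E_{(r,b)\sim q}\DKL(p(A\mid r,b,W)\,\|\,p(A\mid r,b))$. \textbf{Step 3 (conditioning on $W$ is cheap, pointwise).} Apply the already-cited fact $\DKL(p(X\mid W)\|p(X))\le\log(1/\Pr(W))$ inside the probability space conditioned on $\{R=r,B=b\}$, where $W$ has probability $p(W\mid r,b)$; this gives $\DKL(p(A\mid r,b,W)\|p(A\mid r,b))\le\log\frac{1}{p(W\mid r,b)}$. \textbf{Step 4 (average via concavity).} Using $q(r,b)=p(r,b)\,p(W\mid r,b)/\Pr(W)$, the bound becomes $\frac{1}{\Pr(W)}\E_{(r,b)\sim p}\!\big[\phi(p(W\mid r,b))\big]$ with $\phi(x)=x\log(1/x)$; by Jensen's inequality, concavity of $\phi$, and $\E_{(r,b)\sim p}[p(W\mid r,b)]=\Pr(W)$, this is at most $\frac{1}{\Pr(W)}\phi(\Pr(W))=\log\frac{1}{\Pr(W)}$, which is the claim.

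The step I expect to be the real point is pinning the constant down to exactly $\log(1/\Pr(W))$. The tempting shortcut --- introduce $\mathbf{1}_W$ as an auxiliary variable, run the chain rule $I(A;B,\mathbf{1}_W\mid R)=I(A;\mathbf{1}_W\mid R)+I(A;B\mid R,\mathbf{1}_W)=I(A;B\mid R)+I(A;\mathbf{1}_W\mid B,R)$ with $I(A;B\mid R)=0$, and then bound $I(A;\mathbf{1}_W\mid B,R)\le H(\mathbf{1}_W\mid B,R)$ --- does yield $\Pr(W)\cdot I(A;B\mid R,W)\le H(\mathbf{1}_W\mid B,R)$, but the entropy term only averages down to the binary entropy $H_2(\Pr(W))$, which is strictly larger than $\Pr(W)\log\frac1{\Pr(W)}$; the excess is precisely the $(1-\Pr(W))\log\frac1{1-\Pr(W)}$ slack that the variational route in Steps 1--4 never creates, because it never writes down $\mathbf{1}_W$ at all. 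Everything else --- the variational identity, conditional independence under $p$, the KL-vs-event bound, and Jensen --- is routine.
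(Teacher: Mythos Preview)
The paper does not prove this lemma; it simply cites it as \cite[Lemma~19]{BravermanRWY13}. Your four-step argument is correct: the variational identity in Step~1 is the standard ``centroid'' decomposition of conditional KL, Step~2 uses the hypothesis exactly where it is needed, Step~3 is the pointwise bound $\DKL(p(\cdot\mid W)\|p)\le\log(1/\Pr(W))$ applied in the space conditioned on $(R,B)=(r,b)$, and Step~4 is a clean Jensen on $x\mapsto x\log(1/x)$ after rewriting the $q$-expectation as a $p$-expectation. Your side remark about the indicator-variable shortcut is also accurate: that route only yields $I(A;B\mid R,W)\le H_2(\Pr(W))/\Pr(W)$, which carries the extra $(1-\Pr(W))\log\frac{1}{1-\Pr(W)}$ term and would be harmless for the applications in this paper (where $\Pr(W)$ is bounded away from zero) but is genuinely weaker as a general statement.
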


\subsection{$n$-fold \URsup{} lower bound}

\begin{lemma}\label{lem:ur}
There is a fixed constant $\delta_0\in(0,1)$ and an input distribution $\cD_\ur$ over pairs $(S,T)$ for \URsup{} such that the following holds. Suppose distribution $\cD'$ satisfies
\begin{itemize}
\item $\DKL(\cD'[S]\|\cD_\ur[S]), \DKL(\cD'[T]\|\cD_\ur[T]) \le O(1)$, where $\cD[X]$ is the marginal distribution of $X$ in $\cD$.
\item Both $\E_{s\sim \cD'[S]}\DKL(\cD'[T|S=s]\|\cD_\ur[T|S=s])$ and  $\E_{t\sim \cD'[T]} \DKL(\cD'[S|T=t]\|\cD_\ur[S|T=t])$ are at most $\eta$, where $\cD_\ur[X|Y=y]$ denotes the marginal distribution of $X$ conditioned on $Y=y$.
\end{itemize}
Then, any one-way $\eta$-protocol $P$ for \URsup{} with error probability $\delta$ over $\cD'$ must have \emph{internal information cost} $\icost$ (i.e., $I(S; M\mid T)$) at least $\icost\geq\Omega(\log \frac{1}{\delta} \log^2 U)$, as long as $\eta\leq O(\delta^2)$ and $2^{-U^{1-\epsilon}}\leq \delta<\delta_0$ for any given constant $\epsilon$. 
\end{lemma}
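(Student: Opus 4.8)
The plan is to adapt the information-theoretic lower bound of \cite{KapralovNPWWY17} behind Theorem~\ref{thm_ur}, with three modifications: replace communication cost by internal information cost, weaken ``genuine one-way protocol'' to ``one-way $\eta$-protocol'', and run the argument against the perturbed distribution $\cD'$ in place of a clean hard distribution $\cD_\ur$. We take for $\cD_\ur$ the hard distribution of \cite{KapralovNPWWY17} (or a mild variant): a law on pairs $(S,T)$ with $T\subsetneq S\subseteq[U]$, supported on instances with a fixed recursive combinatorial structure over $\Theta(\log U)$ ``scales'' and $\Theta(\log(1/\delta))$ ``repetitions'', in which each scale contributes $\Theta(\log U)$ bits of entropy about $S$ and $T$ is obtained by independently truncating at a random scale, engineered so that outputting any element of $S\setminus T$ is information-theoretically at least as hard as recovering the $\Theta(\log U)$ hidden bits at each of $\Theta(\log(1/\delta)\cdot\log U)$ sub-instances. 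The starting observation is that the proof in \cite{KapralovNPWWY17} is (or can be recast as) an entropy/mutual-information computation --- a chain-rule, augmented-indexing-style decomposition across these scales --- that lower-bounds the \emph{information} Alice's message carries rather than counting its bits; hence its conclusion is directly a bound on $I(S;M\mid T)$ (and on intermediate conditional-information terms), and the first modification is essentially free.

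For the second modification, note that the argument uses ``Alice's message $M$ is a randomized function of $S$'' only to invoke $I(M;T\mid S)=0$ when splitting or regrouping information terms. For an $\eta$-protocol we instead have $I(M;T\mid S)\le\eta$, so each such step leaks an additive $O(\eta)$ into a mutual-information estimate, or, via Pinsker and Markov, an $O(\sqrt\eta)$ statistical-distance perturbation of the relevant conditional output distribution for all but a small fraction of conditionings. There are only $\mathrm{poly}(\log U)$ such steps, and since $\eta\le O(\delta^2)$ while $\delta\ge 2^{-U^{1-\epsilon}}$ --- so $\log(1/\delta)\le U^{1-\epsilon}$ and the number of scales is $\mathrm{poly}(\log U)$ --- the accumulated slack, of order $\sqrt\eta\cdot\mathrm{poly}(\log U)\ll\delta$, stays below the per-scale error budget and costs at most a constant factor in the final bound. (Bob's output was already allowed to be an arbitrary function of his message and input and his private coins, so nothing further changes there.)

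For the third modification we run the entire argument under $\cD'$. The crucial point is that $\DKL(\cD'[S]\|\cD_\ur[S])$ and $\DKL(\cD'[T]\|\cD_\ur[T])$ are $O(1)$, in particular finite, which forces $\cD'$ to be supported on exactly the structured instances supported by $\cD_\ur$ (any instance of $\cD_\ur$-probability zero must have $\cD'$-probability zero, else the divergence is infinite); thus every combinatorial step of the argument remains valid pointwise under $\cD'$. Quantitatively, at each of the $\Theta(\log(1/\delta)\log U)$ scales the argument only needs the conditional distributions of $T$ given a prefix of $S$ (and vice versa) to be close to their ``uniform over the structured possibilities'' counterparts under $\cD_\ur$; this is delivered by the hypotheses $\E_{s}\DKL(\cD'[T\mid S=s]\|\cD_\ur[T\mid S=s])\le\eta$ and its symmetric version, split across scales by the chain rule, yielding an $O(\sqrt\eta)$ statistical-distance error per scale, again negligible against $\delta$. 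The coarser $O(1)$ bound on the marginals is used only to certify that a constant fraction of the $\cD'$-mass sits on instances with typical chain sizes and intact block structure --- which suffices because we are after an $\Omega(\cdot)$ bound and bounded reweighting of these macroscopic choices costs only constants.

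The main obstacle is not a new idea but disciplined bookkeeping: one must revisit each step of \cite{KapralovNPWWY17} and verify it relies only on features preserved under all three modifications simultaneously, while tracking the accumulated error. The delicate part is the three-way balance between the per-scale leakage $O(\sqrt\eta)$ from the $\eta$-protocol relaxation, the per-scale $O(\sqrt\eta)$ perturbation from $\cD'$, and the target error $\delta$: the total inflation must stay $o(\delta)$, and it is exactly the hypotheses $\eta\le O(\delta^2)$ and $\delta\ge 2^{-U^{1-\epsilon}}$ (which bound the number of scales by $\mathrm{poly}(\log U)$), together with $\delta<\delta_0$ for a suitable small constant $\delta_0$ (so that the $\Theta(\log(1/\delta))$-repetition structure is genuinely active), that make the balance close. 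One must also confirm that the finite-divergence-implies-support-containment observation, combined with robustness to bounded reweighting, really does keep the hard instance's macroscopic combinatorics intact with constant probability, so that the per-scale entropy counting still produces $\Omega(\log^2 U)$ per repetition and hence $\Omega(\log(1/\delta)\log^2 U)$ overall.
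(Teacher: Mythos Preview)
Your high-level plan---take the \cite{KapralovNPWWY17} argument and push through the three modifications (communication $\to$ internal information, genuine protocol $\to$ $\eta$-protocol, $\cD_\ur \to \cD'$)---matches what the paper does. But your description of \emph{what that argument is} does not, and the mismatch hides the main technical difficulty.

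The proof in \cite{KapralovNPWWY17} (and in this paper's Appendix) is not an augmented-indexing decomposition across independent ``scales'' and ``repetitions''. It is a compression/encoding argument: Alice samples one $T_0\sim\cD'[T\mid S]$, generates one message $M$ for the pair $(S,T_0)$, and then the encoder \emph{reuses that same $M$} over $R=\Theta(\log(1/\delta)\log U)$ rounds, each time running Bob's output function on a growing set $T$ (filled in by public randomness) to harvest one more element of $S$ for free. Each successful round saves $\Theta(\log U)$ bits in encoding $S$, so $\E|A|=\Omega(R)$ successful rounds forces $I(S;M\mid T)\ge\Omega(R\log U)$.

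The point you are missing is that the sets $T$ in later rounds are \emph{not independent} of $M$: they depend on the protocol's earlier outputs, so one cannot simply say ``$P$ has error $\le\delta$ on a random $T$, hence error $\le\delta$ in each round''. The paper handles this with a pointwise density-ratio bound (its Claim~\ref{cl_t0}): for every fixed $S$, the $T$ arising in round $i$ has probability at most $\delta^{-0.9}$ times that of a uniform size-$r_i$ subset. This $\delta^{-0.9}$ blowup is then beaten by the $O(\delta)$ error from Claim~\ref{cl_err_prob}. Your ``$O(\sqrt\eta)$ leak per scale, $\mathrm{poly}(\log U)$ scales'' accounting does not address this dependency at all and is also numerically wrong: $R$ can be as large as $U^{1-\epsilon}\log U$, not $\mathrm{poly}(\log U)$.

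Relatedly, the $\eta$-protocol hypothesis and the $\cD'$-closeness hypotheses are each invoked essentially \emph{once}, globally (to move from ``$M$ generated at $T_0$'' to ``$M$ good for a fresh $T\sim\cD_\ur[T\mid S]$'' via Pinsker, and to replace $T_0\sim\cD'[T\mid S]$ by $T_0\sim\cD_\ur[T\mid S]$ at cost $O(R\sqrt\eta)$ in $\E|A|$), not accumulated per round. Your support-containment observation from finite KL is correct and is used, but the rest of the bookkeeping needs to be redone around the actual encoding argument rather than an augmented-indexing picture.
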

The proof of Lemma~\ref{lem:ur} is similar to that of Theorem~\ref{thm_ur}~\cite{KapralovNPWWY17}, and is deferred to Appendix~\ref{app_urlb}.
Now we are ready to prove Lemma~\ref{lem_str_lb}, the communication lower bound for $n$-fold \URsup{}.
\begin{proof}[Proof of Lemma~\ref{lem_str_lb}]
Consider any one-way protocol with communication cost $C$ and error probability $\delta$ for $n$-fold \URsup{} on instances sampled from $\cD_\ur^n$.
Then by Lemma~\ref{lem:dp}, there exists a one-way $O(\delta/n)$-protocol $\pi$ and an input distribution $\cD'$ for \URsup{} over $[n]$ such that
\begin{itemize}
        \item $\DKL(\cD'[S]\|\cD_\ur[S]), \DKL(\cD'[T]\|\cD_\ur[T]) \leq O\left(\log\frac{1}{1-\delta}\right)\leq O(\delta) = O(1)$,
        \item Both $\E_{s\sim \cD'[S]}\DKL(\cD'[T|S=s]\|\cD[T|S=s])$ and  $\E_{t\sim cD'[T]} \DKL(\cD'[S|T=t]\|\cD[S|T=t])$ are at most $O\left(\frac 1n\log\frac 1{1-\delta}\right) = O(\delta/n)$, and
	\item when input pair $(S,T)$ is drawn from $\cD'$, $\pi$ has \emph{information} cost $I(S; M\mid T)\leq O(C/n)$ and computes $f$ with probability $1-O\left(\sqrt{\frac{1}{n}\log\frac{1}{1-\delta}}\right)\geq 1-O(\sqrt{\delta/n})$.
\end{itemize}

Finally, as $2^{-n^{1-\epsilon}}\leq \delta/n=o(1)$, by Lemma~\ref{lem:ur} $C/n\geq \Omega(\log \frac n\delta \log^2 n)$.
This proves the lemma.
\end{proof}

\newcommand{\etalchar}[1]{$^{#1}$}

\appendix

\section*{Appendix}

\section{The AGM sketch for small failure probability}\label{sec:agm-small-fail}
The analysis of the AGM sketch in \cite{AhnGM12} shows that dynamic spanning forest can be solved with failure probability $\delta = 1/\mathop{poly}(n)$ using $O(n\log^3 n)$ bits of memory. We remark here that the same algorithm but with a different setting of parameters can achieve arbitrarily small failure probability $\delta\in(0,1)$ using $O(n\log(n/\delta)\log^2 n)$ bits of memory, showing that our lower bound from Theorem~\ref{thm:main-streaming} is optimal for any $\delta > 1/2^{n^{1-\Omega(1)}}$. This modification can also be used to achieve an $O(\log(n/\delta)\log^2 n)$ bit message length per vertex in the distributed model of Section~\ref{sec_dist}. We note that the usual technique to achieve success probability amplification via parallel independent repetition and returning the ``median'' or some such result is not applicable, since a graph may have exponentially many spanning forests and each parallel repetition may output a different one. Thus it would not be clear which spanning forests output across the repetitions are valid, i.e.\ use edges that actually exist in the graph, as all those returned may be distinct, even if correct.

First we recall the support-finding problem variant described in \cite{KapralovNPWWY17}.

\begin{definition}
In the turnstile streaming problem {\em \suppfind{k}$(\delta_1,\delta_2)$}, there is a vector $z\in\R^n$ receiving turnstile streaming updates, and the answer to $\mathbf{query}()$ must behave as follows:
\begin{itemize}
\item With probability at most $\delta_1$, the output can be `\textsf{Fail}'.
\item With probability at most $\delta_2$, the output can be arbitrary.
\item Otherwise, the output should be any subset of size $\min\{k,\|z\|_0\}$ from $\supp(z)$.
\end{itemize}
\end{definition}

We henceforth define $t = \max\{k, \log(1/\delta_1)\}$.

\begin{theorem}[{\cite{KapralovNPWWY17}}]\label{thm:suppfind}
  For any $k\ge 1$ and $0<\delta_1,\delta_2<1$, there is a solution to \suppfind{k}$(\delta_1,\delta_2)$ using $O((t\log n + \log(n/\delta_2))\log(n/t))$ bits of memory. Furthermore the memory contents of this data structure $D$ can be represented by a linear sketch, i.e.\ $\Pi z$ for some matrix $\Pi$.
\end{theorem}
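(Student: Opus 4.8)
The plan is to prove Theorem~\ref{thm:suppfind} by the standard geometric-subsampling recipe for $\ell_0$-style samplers, layered on top of an \emph{exact sparse recovery} primitive together with a random fingerprint that certifies sparsity. Throughout I use the (standard, and implicit in the turnstile model) normalization $\|z\|_\infty\le\mathrm{poly}(n)$, so that each $z_i$ fits in $O(\log n)$ bits and $z_i\neq 0$ iff $z_i\not\equiv 0 \pmod p$ for a fixed prime $p=\Theta(n^c)$. Set $t=\max\{k,\log(1/\delta_1)\}$ and $L=O(\log(n/t))$. Using the public random string, assign to each coordinate $i\in[n]$ an independent geometric level, so that the set $A_j:=\{i:\text{level}(i)\ge j\}$ satisfies $A_0=[n]\supseteq A_1\supseteq\cdots\supseteq A_L$ and $\E|A_j\cap\supp(z)| = 2^{-j}\|z\|_0$. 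For each level $j$ we keep a sketch of $z$ restricted to $A_j$, designed so that \emph{if} $z|_{A_j}$ has at most $t$ nonzeros then we can recover it exactly, and in any case we can detect (whp) whether it has at most $t$ nonzeros.

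For the per-level gadget: maintain the power sums $\sigma_r^{(j)}=\sum_{i\in A_j} z_i\, i^r \bmod p$ for $r=0,1,\dots,2t$. From these, Prony's method / Berlekamp--Massey recovers any $\le t$-sparse $z|_{A_j}$ exactly (support and values). This is $O(t)$ elements of $\mathbb{F}_p$, i.e.\ $O(t\log n)$ bits. Additionally maintain a fresh random linear fingerprint $\phi^{(j)}=\sum_{i\in A_j} r_i z_i \bmod q$ over a field $\mathbb{F}_q$ of size $q=\mathrm{poly}(n)/\delta_2$, which costs $O(\log(n/\delta_2))$ bits: given the Prony-recovered $\le t$-sparse candidate $w$, we \emph{accept} level $j$ only if $\sum_i r_i w_i \equiv \phi^{(j)}$. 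The query outputs, from the smallest accepted level whose recovered vector has at least $\min\{k,\|z\|_0\}$ nonzeros (detectable since we only accept when recovery is certified), a set of that many coordinates of its support; if no level is accepted, output \textsf{Fail}. The total memory is $L\cdot O(t\log n+\log(n/\delta_2)) = O((t\log n+\log(n/\delta_2))\log(n/t))$ as required, and since every maintained quantity is a linear function of $z$ with coefficients ($i^r$, $r_i$, or $0$) fixed by the random string, stacking them exhibits the memory as $\Pi z$.

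Correctness splits into the two failure modes. For the \textsf{Fail} bound: by a Chernoff argument there is, with probability $\ge 1-\delta_1$, a ``good'' level $j^\star$ where $|A_{j^\star}\cap\supp(z)|$ is simultaneously $\le t$ and $\ge\min\{k,\|z\|_0\}$ --- the upper tail needs $t=\Omega(\log(1/\delta_1))$, and the existence of a window containing a level is guaranteed by taking the subsampling ratios in a geometric grid of ratio $<2$ together with $t=\Omega(k)$; on a good level the recovery is exact, so the fingerprint always accepts and we output genuine support elements. For the ``arbitrary'' bound: a wrong answer can only arise if for some level $z|_{A_j}$ is \emph{not} $\le t$-sparse yet the (garbage) $\le t$-sparse Prony candidate $w$ agrees with the random fingerprint; since there are only $\mathrm{poly}(n)$ possible such candidates with entries in our bounded range, the choice of $q$ makes this probability $\le\delta_2/(L\cdot\mathrm{poly}(n))$ per level, and a union bound over the $L$ levels gives $\le\delta_2$ total. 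When neither bad event happens, the output is a valid size-$\min\{k,\|z\|_0\}$ subset of $\supp(z)$.

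The step I expect to be the main obstacle is the two-sided concentration pinning down $j^\star$ with the sharp trade-off $t=\max\{k,\log(1/\delta_1)\}$ --- one must choose the level grid so that some level's expected count lands in a narrow window $[\Theta(t),\,t/\text{const}]$, then control both the upper tail (to keep Prony valid, costing the $\log(1/\delta_1)$ term) and the lower tail (to still have $\ge k$ survivors, costing the $\Omega(k)$ term), all while $\|z\|_0$ ranges over $[1,n]$ so that $j^\star\le L=O(\log(n/t))$ always. The second delicate point is soundness of the sparsity certificate with only $O(\log(n/\delta_2))$ bits: the fingerprint's collision probability must beat the number of distinct ``wrong'' $\le t$-sparse candidates, which is where the bounded-magnitude assumption is genuinely used. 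Everything else (linearity, the memory accounting, the Prony recovery guarantee) is routine.
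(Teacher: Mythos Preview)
The paper does not actually prove this theorem: it is quoted verbatim from \cite{KapralovNPWWY17} and used as a black box in Appendix~\ref{sec:agm-small-fail}. So there is no ``paper's own proof'' to compare against; your proposal is a reconstruction of the standard argument behind the cited result, and at that level it is essentially correct --- geometric subsampling to $L=O(\log(n/t))$ levels, Prony/power-sum exact $O(t)$-sparse recovery per level, plus a random linear fingerprint over a field of size $\mathrm{poly}(n)/\delta_2$ to certify sparsity, is exactly the construction in \cite{KapralovNPWWY17}.

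One small point worth tightening: your soundness argument for the fingerprint invokes a union bound over ``all $\mathrm{poly}(n)$ possible $\le t$-sparse candidates,'' but there are far more than $\mathrm{poly}(n)$ such vectors when $t$ is large. The correct (and simpler) argument is that, conditioned on the subsampling levels $A_j$ and the Prony sketch, the candidate $w$ is a \emph{single fixed} vector determined before the fingerprint coefficients $r_i$ are revealed; hence $\Pr_r[\langle r, w-z|_{A_j}\rangle \equiv 0]\le 1/q$ whenever $w\neq z|_{A_j}$, and a union bound over the $L$ levels suffices. You do need $q$ large enough that a nonzero integer entry of $w-z|_{A_j}$ (bounded by $\mathrm{poly}(n)$ under the turnstile assumption) stays nonzero mod $q$, which is the real reason for the $\mathrm{poly}(n)$ factor in $q$. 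With that correction your $\delta_2$ analysis goes through cleanly, and the $\delta_1$ analysis you flagged as delicate is handled exactly as you suggest: blow up the number of moments to $Ct$ for a large enough constant $C$ so that the target window $[\,k,\,(C/2)t\,]$ is wide enough for a single Chernoff bound at the level with mean $\Theta(t)$ to cover both tails with failure $e^{-\Omega(t)}\le\delta_1$.
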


\begin{figure}[h]
\begin{center}
\fbox{                                                                                                         
{\footnotesize                                                                                                 
\parbox{6.375in} {                                                                                             
\underline{$\mathsf{AGM}$ sketch}:\\
\textbf{initialization}
\vspace{-.05in}\begin{enumerate}
\addtolength{\itemsep}{-2mm}
\item $\delta' := \min\{1/(6e), \log(n/\delta)/\log n\}$
\item $R:= \lceil \log_{3/2} n\rceil +  \max\{\lceil\log_{3/2} n\rceil, \log(2/\delta)/\log(1/(6e\delta'))\}$
\item \textbf{for} $u=1,\ldots,n$:
\item[] \qquad \textbf{for} $r=1\ldots R$:
\item[] \qquad\qquad initialize data structure $D_{u,r}$ from Theorem~\ref{thm:suppfind} for \suppfind{1}$(\delta', \frac{\delta}{nR})$ for vector $z_u\in\R^{\binom n2}$ initialized
\item[] \qquad\qquad to $0$, so that $D_{u,r}$ stores $\Pi_r z_u$ in memory.
\end{enumerate} 

\textbf{update$(u,v,\Delta)$} // $\Delta=+1$ signifies adding edge $(u,v)$ to $G$, and $\Delta=-1$ signifies deleting the edge; wlog assume $u<v$
\vspace{-.2in}\begin{enumerate}
\addtolength{\itemsep}{-2mm}
\medskip
\smallskip
\item \textbf{for} $r=1\ldots R$:
\item[] \qquad $D_{u,r}.\mathbf{update}((u,v),+\Delta)$ // i.e.\ process the change $(z_u)_{(u,v)} \leftarrow (z_u)_{(u,v)} + \Delta$
\item[] \qquad $D_{v,r}.\mathbf{update}((u,v),-\Delta)$
\end{enumerate}

\textbf{query}()
\vspace{-.05in}
\begin{enumerate}
\addtolength{\itemsep}{-2mm}
\item $F\leftarrow \emptyset$ // final spanning forest we output
\item $S\leftarrow \{\{1\},\ldots,\{n\}\}$ // current connected components in $F$
\item \textbf{for} $r=1,\ldots,R$:
\item \qquad $A\leftarrow \emptyset$ // edges to be added to $F$ in this iteration
\item \qquad \textbf{for} $s\in S$:
\item[] \qquad\qquad $(u,v)\leftarrow D_{s,r}.\mathbf{query}()$ // $D_{s,r}$ denotes the data structure obtained from summing $\sum_{w\in s}\Pi_{w,r} z_w$
\item[] \qquad\qquad $A\leftarrow A\cup\{(u,v)\}$
\item \qquad $F\leftarrow F\cup A$
\item\qquad \textbf{for} $(u,v)\in A$:
\item[] \qquad\qquad // merge connected components linked by the edge $(u,v)$
\item[] \qquad\qquad identify the sets $s_u,s_v$ containing $u$ (resp.\ $v$) in $S$; remove them each from $S$ and insert their union into $S$.
\item \textbf{return} $F$

\end{enumerate}
}}}
\caption{Dynamic spanning forest algorithm via the AGM sketch. We assume $\delta < 1/n^C$ for some large constant $C$, since otherwise the desired $O(n\log^3 n)$ bits of memory is already achieved in \cite{AhnGM12}.}\label{fig:agm}
\end{center}
\end{figure}

We now give an overview and analysis of the AGM sketch (see Figure~\ref{fig:agm}). We reiterate that the algorithm and analysis presented here are essentially the same as that in the original work \cite{AhnGM12}, though we present all details here to point out what changes need to be made to achieve arbitrarily small failure probability $\delta$. Specifically, the only differences in the algorithm in Figure~\ref{fig:agm} and that in the original work \cite{AhnGM12} which achieved failure probability $1/\mathop{poly}(n)$ is the setting of $\delta'$ in initialization (in \cite{AhnGM12} $\delta'$ was set to $1/10$), which also implies a difference in the value of $R$. We henceforth assume $\delta < 1/\mathop{poly}(n)$ since otherwise the \cite{AhnGM12} analysis already applies.

The sketch's query algorithm to output the spanning forest is iterative, with $R$ rounds. The algorithm explicitly maintains a partition of $[n]$ into connected pieces, initially the partition with $n$ singletons, then in each iteration queries each partition for an edge $e$ leaving that partition (if one exists) to then merge with some other partition which is non-maximally connected. We then add all such edges $e$ found in any given iteration to a forest $F$, which we return at the end of the $R$ rounds. The intent is for these partitions to all be maximal connected components and for $F$ to be a spanning forest by the end of the $R$th round. We find edges to merge non-maximal components as follows. Each vertex $u$ stores $R$ sketches, using independent randomness, of the vector $z_u\in\R^{\binom n2}$ which is the (signed) edge-incidence vector for vertex $u$. That is, if $(u,v)$ is in the graph then $(z_u)_{(u,v)}$ will be $\pm 1$, with the sign determined by whether $u<v$. We let $D_{u,r}$ for $r=1,\ldots,R$ denote these $r$ sketches, each of which solves \suppfind{1}$(\delta',\delta'')$ using the space promised by Theorem~\ref{thm:suppfind}, where $\delta''=\delta/(2nR)$ as seen in Figure~\ref{fig:agm}. Each $D_{u,r}$'s memory contents is $\Pi_r z_u$ for some matrix $\Pi_r$. Then for $A\subset [n]$, we can define $D_{A,r}$ as the data structure whose memory is $ \Pi_r z_A$ with $z_A := \sum_{u\in A} z_u$. The vector $z_A$ has the property that its support is exactly the set of edges leaving $A$ in $G$, so that a correctly answered query to $D_{A,r}$ provides an edge leaving $A$ (if one exists). The space used is 
\begin{equation}
O(nR(\log(1/\delta')\log n + \log(nR/\delta))\log n) . \label{eqn:upper-space}
\end{equation}

We now turn to setting $\delta', R$. Note that if the \suppfind{1} data structures never erred, we could take $R\le\lceil \log_2 n\rceil$ to find a spanning forest since the number of non-maximal components starts off as at most $n$ and at least halves after each round. Now let us take probabilistic errors into account. First, we condition on no $D_{u,r}$ ever outputting a non-existent edge, which happens with probability $1-\delta/2$ by our setting of $\delta''$ and a union bound. Next, call a round ``good'' if at most $k/3$ non-maximal components fail to find an outgoing edge in that round, i.e.\ output `\textsf{Fail}'. Note that in any good round, the number of non-maximal connected components decreases from $k$ to at most $((1-1/3)k)/2 + k/3 = 2k/3$. Thus $F$ is a spanning forest after at most $\lceil \log_{3/2} n\rceil$ good rounds. In any round with $k$ non-maximal components we expect at most $\delta' k$ of them to fail to find an outgoing edge via the \suppfind{1} data structures, so the probability the round is bad is at most $3\delta'$ by Markov's inequality. A simple calculation (see Lemma~\ref{lem:bad-rounds}) then shows that if $R \ge \lceil\log_{3/2} n\rceil + \max\{\lceil\log_{3/2} n\rceil, \Omega(\log(1/\delta)/\log(1/\delta'))\}$, we will have at least $\lceil \log_{3/2} n\rceil$ good rounds with probability $1-\delta/2$, as desired. Substituting for $R$ in \eqref{eqn:upper-space}, our space (in bits) is
\begin{align*}
O(n(\log n + &\log(1/\delta)/\log(1/\delta'))(\log(1/\delta')\log n + \log(n/\delta))\log n)\\
{}& \le O(n(\overbrace{\log n + \log(n/\delta)/\log(1/\delta')}^\alpha)(\overbrace{\log(1/\delta')\log n + \log(n/\delta)}^\beta)\log n) .
\end{align*}
Observe $\beta = \alpha\cdot \log(1/\delta')$. We can thus asymptotically minimize both $\alpha,\beta$ simultaneously by setting $\log(1/\delta') = \Theta(\log(n/\delta))/\log n$, which brings our final space bound to $O(n\log(n/\delta)\log^2 n)$ bits (though for technical reasons, see Lemma~\ref{lem:bad-rounds}, we set $\delta'$ to be the minimum of this quantity and some constant).

\begin{lemma}\label{lem:bad-rounds}
For $R, \delta'$ as in Figure~\ref{fig:agm}, the probability of having less than $\lceil \log_{3/2}n\rceil$ good rounds is at most $\delta/2$.
\end{lemma}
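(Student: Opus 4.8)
The plan is to show that the number of bad rounds is stochastically dominated by a binomial random variable, and then apply a standard tail bound after substituting the value of $R$ chosen in Figure~\ref{fig:agm}. Write $g := \lceil\log_{3/2}n\rceil$ for the target number of good rounds and $B := R - g = \max\{g,\ \log(2/\delta)/\log(1/(6e\delta'))\}$, so that in particular $B \ge g$ and hence $R = g + B \le 2B$. Having fewer than $g$ good rounds is exactly the event that at least $B+1$ of the $R$ rounds are bad.

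First I would establish the stochastic domination. Run the query algorithm and, for $r = 1,\dots,R$, expose the randomness used in round $r$ after that of rounds $1,\dots,r-1$. Conditioned on the first $r-1$ rounds, the current partition (in particular the number $k_r$ of non-maximal components) is fixed, and each such component fails to report an outgoing edge — i.e.\ its \suppfind{1} data structure outputs `\textsf{Fail}' — with probability at most $\delta'$ by Theorem~\ref{thm:suppfind}, independently of this history since round $r$'s sketches use fresh randomness; when $k_r = 0$ the round is vacuously good. Hence the conditional expected number of failures is at most $\delta' k_r$, and Markov's inequality gives $\Pr[\text{round }r\text{ bad}\mid\text{rounds }1..r{-}1]\le \delta' k_r/(k_r/3) = 3\delta'$ for $k_r \ge 1$. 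Consequently the bad-round indicators are dominated by i.i.d.\ $\mathrm{Bernoulli}(3\delta')$ variables, so $\Pr[\ge B{+}1\text{ bad rounds}]\le \Pr[\mathrm{Bin}(R,3\delta')\ge B{+}1]$.

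Next I would bound the binomial tail. Using $\binom{R}{\ell}\le (eR/\ell)^\ell$ with $\ell = B+1$ and $R\le 2B < 2(B+1)$,
$$\Pr[\mathrm{Bin}(R,3\delta')\ge B+1] \le \binom{R}{B+1}(3\delta')^{B+1} \le \left(\frac{3e\delta' R}{B+1}\right)^{B+1} \le (6e\delta')^{B+1} \le (6e\delta')^{B},$$
where the last two inequalities use $6e\delta'\le 1$, which holds by the choice $\delta'\le 1/(6e)$ in Figure~\ref{fig:agm}. Finally, since $B \ge \log(2/\delta)/\log(1/(6e\delta'))$ and $\log(1/(6e\delta')) > 0$, we get $B\log(1/(6e\delta'))\ge \log(2/\delta)$, i.e.\ $(6e\delta')^B\le \delta/2$. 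Chaining the displays yields $\Pr[\text{fewer than }g\text{ good rounds}]\le \delta/2$, which is the claim.

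The only step requiring care is the stochastic-domination argument: although $k_r$ is itself a random variable determined by earlier rounds, the point is that the conditional probability that round $r$ is bad is at most $3\delta'$ no matter what that history is (the per-component failure probability is $\le\delta'$ for every value of $k_r$, and round $r$ uses independent randomness), which is exactly what licenses coupling the indicators to an i.i.d.\ sequence and invoking a binomial tail bound rather than having to track the dependence. Everything after that is a one-line estimate plus plugging in the definition of $R$; the $\max$ with $\lceil\log_{3/2}n\rceil$ in that definition is present precisely to force $B \ge g$ and hence $R \le 2B$, which is what keeps the base $3e\delta' R/(B+1) \le 6e\delta'$ of the tail bound below $1$.
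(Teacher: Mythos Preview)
Your proof is correct and follows essentially the same approach as the paper: bound each round's bad-probability by $3\delta'$ via Markov, then apply a binomial tail bound $\binom{R}{B}(3\delta')^{B}\le(6e\delta')^B\le\delta/2$ using $R\le 2B$ and the definition of $B$. You are in fact a bit more careful than the paper in two places---you justify the stochastic-domination/coupling step explicitly (the paper is terse here), and you use $B+1$ rather than $B$ as the threshold for the bad-round count---but these are refinements of the same argument rather than a different one.
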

\begin{proof}
As mentioned above, the probability a round is bad is at most $3\delta'$ by Markov's inequality. Thus the probability of not having the desired number of good rounds is at most
\begin{align*}
\binom{R}{R - \lceil \log_{3/2}n\rceil} (3 \delta')^{R - \lceil \log_{3/2}n\rceil} &\le (3e\delta'(1 + \frac{\lceil \log_{3/2} n\rceil}{R - \lceil \log_{3/2} n\rceil}))^{R - \lceil \log_{3/2} n\rceil}\\
{}&\le (6e\delta')^{\log(2/\delta)/\log(1/(6e\delta'))}\\
{}&= \delta/2 .
\end{align*}
\end{proof}

The above yields the following theorem.

\begin{theorem}
The AGM sketch achieves success probability $1-\delta$ using $O(n\log(n/\delta)\log^2 n)$ bits of space.
\end{theorem}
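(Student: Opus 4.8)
The plan is to assemble the three facts already in hand --- the space bound~\eqref{eqn:upper-space}, Lemma~\ref{lem:bad-rounds}, and the \suppfind{1} guarantee of Theorem~\ref{thm:suppfind} --- into a single statement. Recall that we may assume $\delta < 1/\mathrm{poly}(n)$, since otherwise $O(n\log(n/\delta)\log^2 n) = O(n\log^3 n)$ and the analysis of~\cite{AhnGM12} already suffices.

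First I would establish correctness by bounding two bad events, each by $\delta/2$. Let $\mathcal{E}_1$ be the event that some query the query algorithm performs, to some $D_{s,r}$, returns a pair that is not an edge of $G$. The key point is that the partition $S$ at the start of round $r$ is a deterministic function of $G$ and of $\Pi_1,\dots,\Pi_{r-1}$ only, hence is independent of $\Pi_r$; so for each of the at most $nR$ queries, Theorem~\ref{thm:suppfind} applied with $\delta'' = \delta/(2nR)$ bounds the chance of a spurious output by $\delta''$, and a union bound gives $\Pr[\mathcal{E}_1]\le\delta/2$. Let $\mathcal{E}_2$ be the event that fewer than $\lceil\log_{3/2}n\rceil$ of the $R$ rounds are good; Lemma~\ref{lem:bad-rounds} gives $\Pr[\mathcal{E}_2]\le\delta/2$ directly. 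I would then argue that off of $\mathcal{E}_1\cup\mathcal{E}_2$ the algorithm outputs a spanning forest: off $\mathcal{E}_1$ every edge ever added to $F$ is a genuine edge of $G$ leaving its queried component, so $F$ is always a subgraph of $G$ and every component merge is legitimate; off $\mathcal{E}_2$, the discussion preceding Lemma~\ref{lem:bad-rounds} shows each good round shrinks the number of non-maximal parts of $S$ to at most $2/3$ of its former value, so after $\lceil\log_{3/2}n\rceil$ good rounds this count is at most $n(2/3)^{\lceil\log_{3/2}n\rceil}\le 1$, and since a lone non-maximal part is impossible (its outgoing edge would place it inside the connected component occupied by another, maximal, part) the count is $0$, i.e.\ the components of $F$ coincide with those of $G$. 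Hence $F$ yields a spanning forest of $G$, and the success probability is at least $1 - \Pr[\mathcal{E}_1] - \Pr[\mathcal{E}_2] \ge 1 - \delta$.

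Next I would verify the space bound by substituting the chosen parameters into~\eqref{eqn:upper-space}. With $\log(1/\delta') = \Theta(\log(n/\delta)/\log n)$ (capped at a constant) one has $\log(n/\delta)/\log(1/\delta') = \Theta(\log n)$, whence $R = O(\log n + \log(1/\delta)/\log(1/\delta')) = O(\log n)$ (using $\log(1/\delta)\le\log(n/\delta)$), and the per-sketch factor $\log(1/\delta')\log n + \log(nR/\delta) = O(\log(n/\delta))$ (the additive $\log R = O(\log\log n)$ being lower order). Plugging these in,~\eqref{eqn:upper-space} becomes $O(n\cdot\log n\cdot\log(n/\delta)\cdot\log n) = O(n\log(n/\delta)\log^2 n)$ --- which is precisely the optimization $\beta = \alpha\log(1/\delta')$ already carried out right before Lemma~\ref{lem:bad-rounds}.

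I do not expect any genuine obstacle, since all the real content is in Theorem~\ref{thm:suppfind} and Lemma~\ref{lem:bad-rounds}. The only subtlety deserving care is the conditional-independence bookkeeping behind the union bound for $\mathcal{E}_1$ --- namely that within round $r$ the collection of components being queried depends on the randomness of earlier rounds only, so that the $\delta''$ guarantee may be applied query by query --- together with the analogous per-round conditioning that already underlies Lemma~\ref{lem:bad-rounds}; beyond that, the proof is pure substitution.
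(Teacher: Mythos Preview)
Your proposal is correct and mirrors the paper's own argument essentially step for step: the paper also splits the failure probability into the $\delta/2$ union-bound event that some $D_{s,r}$ outputs a non-edge and the $\delta/2$ event (Lemma~\ref{lem:bad-rounds}) of too few good rounds, and then substitutes the parameter choices into~\eqref{eqn:upper-space} exactly as you do. The only difference is that you spell out the conditional-independence bookkeeping justifying the union bound on~$\mathcal{E}_1$, which the paper leaves implicit.
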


We note that the above sketch can easily be implemented in the distributed sketching model of Section~\ref{sec_dist} by having each vertex $u$ simply send the memory contents of $D_{u,r}$ for $r=1,\ldots,R$ to the referee as a message, who can then run the query algorithm. Thus we also have the following corollary.

\begin{corollary}
In the distributed sketching model with shared public randomness, for any $\delta\in(0,1)$ panning forest can be solved with the maximum message length being at most $O(\log(n/\delta)\log^2 n)$ bits.
\end{corollary}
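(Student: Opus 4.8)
The plan is to observe that the modified AGM sketch of Figure~\ref{fig:agm}, which we just analyzed, is already a distributed sketching protocol in disguise. The essential feature is that the per-vertex data structures $D_{u,r}$, $r=1,\ldots,R$, are \emph{linear sketches} $\Pi_r z_u$ of the signed edge-incidence vector $z_u\in\R^{\binom n2}$ of vertex $u$, and $z_u$ is a function solely of $u$'s neighborhood. So the first step is to describe the protocol: using the shared public randomness to fix the sketch matrices $\Pi_1,\ldots,\Pi_R$ (together with the internal random bits of the \suppfind{1} data structures) identically for all parties, each vertex $u$ locally forms the message $M_u := (\Pi_1 z_u,\ldots,\Pi_R z_u)$ from the \suppfind{1}$(\delta',\delta/(2nR))$ data structures and sends it to the referee.

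Next I would argue correctness. The referee needs, for each set $s\subseteq[n]$ appearing as a connected component during the \textbf{query}() routine, the data structure $D_{s,r}$ whose memory is $\Pi_r z_s$ with $z_s=\sum_{u\in s}z_u$; by linearity $\Pi_r z_s=\sum_{u\in s}\Pi_r z_u$, which the referee reconstructs from the received messages $\{M_u\}_u$. Hence the referee can execute \textbf{query}() exactly as written in Figure~\ref{fig:agm}, and by the theorem we just proved it outputs a correct spanning forest with probability $1-\delta$.

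Finally I would bound $\max_u|M_u|$. Each $M_u$ is $R$ copies of a \suppfind{1}$(\delta',\delta/(2nR))$ sketch, which by Theorem~\ref{thm:suppfind} with $k=1$ (so $t=\max\{1,\log(1/\delta')\}$) uses $O((t\log n+\log(nR/\delta))\log n)$ bits; multiplying by $R$ and substituting the values of $R$ and $\delta'$ from Figure~\ref{fig:agm} is the identical computation that produced the $O(n\log(n/\delta)\log^2 n)$ total-space bound, only without the leading factor of $n$, so $|M_u|=O(\log(n/\delta)\log^2 n)$ for every $u$, and the maximum message length is as claimed.

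There is no genuine obstacle here; all the real work was done in establishing the space bound and failure probability of the sketch (in particular Lemma~\ref{lem:bad-rounds}). The only points requiring a modicum of care are that the public coins must encode $\Pi_1,\ldots,\Pi_R$ and the data-structure randomness \emph{consistently} across all $n+1$ parties — which is exactly what the public-coin model permits — and that the claimed per-vertex bound is the total-space bound of the preceding theorem divided by $n$, which is legitimate because each vertex's message is precisely its own $D_{u,\cdot}$ contribution and these contributions are all of the same order.
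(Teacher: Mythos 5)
Your proposal is correct and matches the paper's argument, which likewise observes that each vertex can simply send the memory contents of its $D_{u,r}$ sketches (linear in the edge-incidence vector, hence computable from its neighborhood alone) and the referee runs the query algorithm by summing sketches over components. The per-vertex length bound is indeed just the total-space bound of the preceding theorem without the factor of $n$.
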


\section{Proof of Lemma~\ref{lem:ur}}\label{app_urlb}
\paragraph{Hard input distribution $\cD_\ur$.}
Let $m=\sqrt{U\log\frac{1}{\delta}}$ be the size of set $S$, $\alpha=\frac{20}{\log 1/\delta}$.
Let $r_i=\left\lfloor m\cdot (1-(1-\alpha)^i)\right\rfloor$ for $i=0,\ldots,R-1$ be all possible sizes of set $T$, where $R=\left\lfloor\frac1{20\alpha}\log (\alpha m)\right\rfloor$.
In our hard distribution $\cD_\ur$, Alice's input set $S$ is a uniformly random subset of $[U]$ of size $m$.
Then we sample a uniformly random integer $i\in [0, R-1]$, and sample a random subset $T\subseteq S$ of size $r_i$.

\bigskip

To prove the lemma, we are going to use a randomized encoding scheme to encode a random set $S$ of size $m$.
The encoding and decoding procedures will have access to shared random bits, and the encoding procedure has access to additional private random bits.
Then we show that the decoding procedure always reconstructs $S$, but on the other hand, the encoding does not reveal much information about $S$.

Similar to the notations in Section~\ref{sec_str}, let $\cD'[S]$ denote the marginal distribution of $S$ assuming $(S,T)$ is sampled from $\cD'$, for any given $T$ let $\cD'[S\mid T]$ denote the distribution of $S$ conditioned on $T$.

\paragraph{Encoding.} Given a set $S$ drawn from $\cD'[S]$, we use the following encoding procedure to generate $\ENC(S)$.
\begin{enumerate}
	\item Sample a random set $T_0$ from $\cD'[T\mid S]$, and write down $T_0$.
	\item Sample a random message $M$ using the $\eta$-protocol $P$ for input pair $(S, T_0)$, and write down $M$.
	\item Sample a uniformly random permutation $\pi$ over $[U]$ using a \emph{public} random string. 
	\item Set $T=T_0$, $A=\emptyset$, let $i_0$ be the integer such that $|T_0|=r_{i_0}$ (such $i_0$ exists, since $\DKL(\cD'||\cD_\ur)$ is finite and hence $\supp(\cD')\subseteq \supp(\cD_\ur)$).
	\item For $i=i_0,\ldots,R-1$, do the following:
	\begin{enumerate}
		\item Evaluate the output function $O$ of $P$ on input $T$ and message $M$, let $x_i$ be its value;\label{step_a}
		\item If $x_i\in S\setminus T$, $A:=A\cup \{x_i\}$ and $T:=T\cup\{x_i\}$, write down $1$;
		\item Otherwise write down $0$;
		\item Fill $T$ up to $r_{i+1}$ elements (let $r_R=m$) according to $\pi$, i.e., find all elements $x$ in $S\setminus T$, and add the $r_{i+1}-|T|$ with smallest $\pi(x)$ to $T$.
	\end{enumerate}
	\item Write down the set $S\setminus (A\cup T_0)$.
\end{enumerate}

Note that with our setting of parameters, $r_{i+1}-r_i\geq 1$.
Although $M$ is generated for input pair $(S, T_0)$, we still use it on all later inputs $T$ in Step~\ref{step_a}.

\paragraph{Decoding.} The following decoding procedure reconstructs $S$.
\begin{enumerate}
	\item Read set $T_0$, message $M$ and set $S\setminus (A\cup T_0)$.
	\item Set $T=T_0$ and $\overline{A}=S\setminus (A\cup T_0)$, let $i_0$ be the integer such that $|T_0|=r_{i_0}$.
	\item For $i=i_0,\ldots,R-1$, do the following:
	\begin{enumerate}
		\item Evaluate the output function $O$ of $P$ on input $T$ and message $M$, let $x_i$ be its value;
		\item If the next bit is $1$, $T:=T\cup\{x_i\}$;
		\item Fill $T$ up to $r_{i+1}$ elements (let $r_R=m$) according to $\pi$ and $\overline{A}$, i.e., find all elements $x$ in $\overline{A}\setminus T$, and add the $r_{i+1}-|T|$ with smallest $\pi(x)$ to $T$.
	\end{enumerate}
	\item Output $T$.
\end{enumerate}

\paragraph{Analysis.} It is straightforward to verify that for any set $S$, the decoding procedure always successfully reconstructs $S$. 
In the following, we are going to estimate the amount of information $\ENC(S)$ reveals about $S$ (conditioned on the public random bits $\pi$): $I(\ENC(S) ; S\mid \pi)$.
For each step of the encoding procedure, which we reproduce below, we write down its contribution to $I(\ENC(S) ; S\mid \pi)$:
\begin{enumerate}
	\item This step writes down a random set $T_0$ sampled from $\cD'[T\mid S]$: $I_{\cD'}(S;T)$;
	\item This step writes down the message $M$. Since all $M$, $S$ and $T_0$ are independent of the public random bits $\pi$, this step contributes $I(M; S\mid T_0, \pi)=\icost$;
	\item No bit is written in this step;
	\item No bit is written in this step;
	\item Exactly $R$ bits are written, contributing at most $R$;
	\item The entropy of the bits written in this step is at most $H(|A|)+\E_{A}[\log \binom{U-|T_0|}{m-|T_0|-|A|}]$, where $|A|\leq R$.
\end{enumerate}

The following claim asserts that $I_{\cD'}(S;T)$ is small.
\begin{claim}\label{cl_ist}
$I_{\cD'}(S;T)\leq \eta+\log \binom{U}{m}-\E_{|T|\sim\cD'[|T|]}\log\binom{U-|T|}{m-|T|}$.
\end{claim}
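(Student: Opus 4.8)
The plan is to express both sides in terms of entropies and then use the chain rule together with the hypothesis controlling $\E_{t\sim\cD'[T]}\DKL(\cD'[S\mid T=t]\|\cD_\ur[S\mid T=t])$. First I would write
$I_{\cD'}(S;T) = H_{\cD'}(S) - H_{\cD'}(S\mid T)$.
The key point is that under $\cD_\ur$, conditioned on any fixed $T=t$ of size $r_i$, the set $S$ is a uniformly random superset of $t$ of size $m$ inside $[U]$, so $H_{\cD_\ur}(S\mid T=t) = \log\binom{U-|t|}{m-|t|}$, and averaging over $t$ gives $H_{\cD_\ur}(S\mid T) = \E_{|T|\sim\cD'[|T|]}\log\binom{U-|T|}{m-|T|}$ — here I use that $\cD'$ is supported inside $\supp(\cD_\ur)$, so every $t$ with positive $\cD'$-mass has $|t| = r_i$ for some $i$, and the inner uniform structure of $\cD_\ur[S\mid T=t]$ depends only on $|t|$.

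Next I would bound the two $\cD'$-entropies separately. For $H_{\cD'}(S)$: since $\cD'[S]$ is a distribution supported on $m$-subsets of $[U]$ (again by the support containment, as $\cD_\ur[S]$ is uniform over such subsets), we simply have $H_{\cD'}(S) \le \log\binom{U}{m}$. For $H_{\cD'}(S\mid T)$: the standard identity relating conditional entropy to KL divergence gives, for each fixed $t$,
$H_{\cD'}(S\mid T=t) = H_{\cD_\ur}(S\mid T=t) - \DKL(\cD'[S\mid T=t]\|\cD_\ur[S\mid T=t])$
whenever $\cD_\ur[S\mid T=t]$ is uniform on its support and $\supp(\cD'[S\mid T=t])\subseteq\supp(\cD_\ur[S\mid T=t])$; indeed $H_{\cD'}(S\mid T=t) = -\E_{\cD'}[\log \cD'(S\mid t)] = -\E_{\cD'}[\log\cD_\ur(S\mid t)] - \DKL(\cdot\|\cdot) = \log\binom{U-|t|}{m-|t|} - \DKL(\cdot\|\cdot)$. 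Averaging over $t\sim\cD'[T]$ and using the hypothesis $\E_{t}\DKL(\cD'[S\mid T=t]\|\cD_\ur[S\mid T=t])\le\eta$ yields
$H_{\cD'}(S\mid T) \ge \E_{|T|\sim\cD'[|T|]}\log\binom{U-|T|}{m-|T|} - \eta$.
Combining the upper bound on $H_{\cD'}(S)$ with this lower bound on $H_{\cD'}(S\mid T)$ gives exactly $I_{\cD'}(S;T)\le \eta + \log\binom{U}{m} - \E_{|T|\sim\cD'[|T|]}\log\binom{U-|T|}{m-|T|}$.

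The main subtlety — not really an obstacle, but the place requiring care — is justifying the support containments so that all the $\log$ terms are finite and the KL-to-entropy identity is valid: I need $\supp(\cD')\subseteq\supp(\cD_\ur)$ (for the joint, hence for the marginals and conditionals), which follows because the hypotheses assert the relevant KL divergences are finite, and a finite KL divergence forces support containment. Once that is in hand, everything reduces to the uniformity of $\cD_\ur[S]$ over $m$-subsets and of $\cD_\ur[S\mid T=t]$ over size-$m$ supersets of $t$, both immediate from the definition of $\cD_\ur$, plus one invocation of Jensen/linearity to pass the average through. I expect no heavy computation here; the statement is essentially an accounting identity modulo the $\eta$ slack coming from the divergence bound.
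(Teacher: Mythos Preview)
Your proposal is correct and follows essentially the same approach as the paper: write $I_{\cD'}(S;T)=H_{\cD'}(S)-H_{\cD'}(S\mid T)$, bound $H_{\cD'}(S)\le\log\binom{U}{m}$, and use the identity $H_q(X)=H_p(X)-\DKL(q\|p)$ for uniform $p$ together with the hypothesis $\E_t\DKL(\cD'[S\mid T=t]\|\cD_\ur[S\mid T=t])\le\eta$ to lower-bound $H_{\cD'}(S\mid T)$. The paper's proof is exactly this, stated slightly more tersely.
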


We defer the proof of the claim to the end of section. 
Since $S$ can be reconstructed from $\ENC(S)$ and $\pi$, we have $H(S\mid \ENC(S), \pi)=0$.
Assuming Claim~\ref{cl_ist}, we have
\begin{align*}
	H(S)&=I(\ENC(S); S\mid \pi) \\
	&=I(T_0;S\mid \pi)+I(M;S\mid T_0,\pi)+I(\ENC(S) \setminus (T_0,M);S\mid (T_0,M),\pi) \\
	&\leq I_{\cD'}(S;T_0)+I_{\cD'}(M;S\mid T)+H(\ENC(S)\setminus (T_0,M)) \\
	&\leq \eta+\log \binom{U}{m}-\E_{|T_0|}\log\binom{U-|T_0|}{m-|T_0|} +\icost+R+\overbrace{H(|A|)}^{\le \log R} + \E_{|T_0|,|A|}\log \binom{U-|T_0|}{m-|T_0|-|A|}\\
	&\leq \icost+O(R)+\log \binom{U}{m}-\E_{|A|,|T_0|} \left(\log \binom{U-|T_0|}{m-|T_0|}-\log \binom{U-|T_0|}{m-|T_0|-|A|}\right)
\end{align*}

We also have
\begin{align*}
	\log \binom{U-|T_0|}{m-|T_0|}-\log \binom{U-|T_0|}{m-|T_0|-|A|}&=\log \frac{(m-|T_0|-|A|)!(U-m+|A|)!}{(m-|T_0|)!(U-m)!} \\
	&= \log \frac{(U-m+1)(U-m+2)\cdots(U-m+|A|)}{(m-|T_0|)(m-|T_0|-1)\cdots (m-|T_0|-|A|+1)} \\
	&\geq \log \frac{(U-m)^{|A|}}{m^{|A|}} \\
	&=|A|\log \frac{U-m}{m}.
\end{align*}

Therefore, $H(S)\leq \icost+O(R)+\log \binom{U}{m}-(\E |A|)\cdot \log \frac{U-m}{m}$.

On the other hand, since $S$ is drawn from $\cD'[S]$, $\DKL(\cD'[S]\|\cD_\ur[S])\leq O(1)$ and $\cD_\ur[S]$ is uniform, $H(S)\geq \log\binom{U}{m}-O(1)$.
Thus, $\icost\geq (\E|A|)\cdot \log\frac{U-m}{m}-O(R)$. It suffices to lower bound $\E|A|$.
By linearity of expectation, $\E(|A|\mid i_0)=\sum_{i=i_0}^{R-1} \Pr(x_i\in A)$, where $x_i$ is the output in the $i$-th encoding round.

Before we wrap up our analysis, we state a few claims and provide some intuition. Recall that in Step~\ref{step_a}, the encoder uses $M$ for input pair $(S, T_0)$ to compute the function value on other $T$. Observe that since $M$ does not depend too much on $T_0$ and $\cD'$ is close to $\cD_\ur$, the error probability on a random input $T$ should also be small, which we state in the following claim.
\begin{claim}\label{cl_err_prob}
Given a random $S,T_0,M$ according to the input distribution $\cD'$ and the encoding procedure, for a random $T$ sampled from $\cD_\ur[T\mid S]$, $x=O(M, T)$ is in $S\setminus T$ with probability at least $1-O(\delta)$.
\end{claim}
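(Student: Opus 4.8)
The plan is to compare the distribution over $(S,T,M)$ that actually arises in the claim with the distribution over $(S,T,M)$ for which $P$ was designed to have low error, namely $(S,T)\sim\cD'$ and $M\sim M_{S,T}$. In the claim, $(S,T)$ has the marginal $S\sim\cD'[S]$ (equivalently, since $\DKL(\cD'[S]\|\cD_\ur[S])\le O(1)$ and $\cD_\ur[S]$ is uniform, $S$ roughly uniform of size $m$), then $T\sim\cD_\ur[T\mid S]$, and crucially $M$ is generated from the pair $(S,T_0)$ with $T_0\sim\cD'[T\mid S]$ — so $M$ is \emph{not} drawn from $M_{S,T}$ but from $M_{S,T_0}$. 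The key point is that $P$ is a one-way $\eta$-protocol, i.e. $I(M;T\mid S)\le\eta$ under $\cD'$, and $\eta\le O(\delta^2)$. So the message reveals almost nothing about $T$ given $S$, which is exactly what lets us swap $T_0$ for a freshly drawn $T$.

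The key steps, in order: (1) Bound the error of $P$ under its \emph{own} distribution: $\Pr_{(S,T)\sim\cD',\,M\sim M_{S,T}}[O(M,T)\notin S\setminus T]\le\delta$. (2) Show the error does not increase much when we replace the input distribution $\cD'$ by $\cD_\ur$: use the hypotheses $\DKL(\cD'[S]\|\cD_\ur[S])\le O(1)$, $\DKL(\cD'[T]\|\cD_\ur[T])\le O(1)$, and $\E_{s}\DKL(\cD'[T\mid S=s]\|\cD_\ur[T\mid S=s])\le\eta$ (and the symmetric statements) together with Pinsker's inequality to bound the statistical distance between the relevant joint distributions by $O(\sqrt{\eta})+$ (a small constant depending on the $O(1)$ bounds). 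Here one must be slightly careful: the $O(1)$ divergence bounds only give $O(1)$ statistical distance, which is not good enough on its own, so I would instead argue about the error probability directly, writing it as an expectation over $S$ (whose marginal under $\cD'$ is within $O(1)$ KL-divergence of uniform), and for each fixed $S$ bound $|\Pr_{T\sim\cD'[T\mid S]}[\cdot]-\Pr_{T\sim\cD_\ur[T\mid S]}[\cdot]|\le\sqrt{\eta_S/2}$ where $\E_S\eta_S\le\eta$, so the contribution is $O(\sqrt\eta)=O(\delta)$ after Jensen. (3) Handle the swap of $T_0$ for $T$ in the message: since $I_{\cD'}(M;T\mid S)\le\eta$, for a typical $S$ the conditional distributions $\cD'[M\mid S,T_0]$ and $\cD'[M\mid S]$ are within $O(\sqrt\eta)$ statistical distance on average over $T_0$, so using $M$ generated from $(S,T_0)$ rather than from $(S,T)$ changes the event probability by only $O(\sqrt\eta)=O(\delta)$; this step also needs the transfer from $\cD'$ to $\cD_\ur$ on the distribution of $T$ given $S$, handled as in step~(2). (4) Combine: the error probability in the claim is at most $\delta + O(\delta) + O(\delta) = O(\delta)$.

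The main obstacle is step~(2)/(3) bookkeeping: the hypotheses provide an $O(1)$ (not $o(1)$) bound on the \emph{marginal} divergences $\DKL(\cD'[S]\|\cD_\ur[S])$ etc., so I cannot simply say the two distributions are statistically close; I must route the argument so that the only place the $O(1)$ bound is used is to control $H(S)$ or to substitute a uniform $S$ inside an expectation where the quantity being averaged is a \emph{probability} (hence bounded in $[0,1]$) and where the real smallness comes from the $\eta$-level conditional divergences combined with $\eta\le O(\delta^2)$. Concretely, the cleanest route is: fix $S$; bound the error for that fixed $S$ using the protocol guarantee plus the $\eta$-closeness of $\cD'[T\mid S]$ to $\cD_\ur[T\mid S]$ and of $\cD'[M\mid S,T_0]$ to a single $S$-dependent message distribution; then take expectation over $S\sim\cD'[S]$, noting that the protocol's \emph{average} error over $\cD'$ is $\le\delta$ and that the per-$S$ corrections are $O(\sqrt{\eta_S})$ with $\E_S\eta_S=O(\eta)$. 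Getting the right order of conditioning so that every $O(1)$-divergence quantity multiplies something already bounded by $1$, while every unbounded quantity multiplies an $\eta$-small divergence, is the part that requires care.
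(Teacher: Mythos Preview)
Your proposal is correct and follows essentially the same route as the paper: use $I(M;T_0\mid S)\le\eta$ together with Pinsker and Jensen to show that replacing $M\sim M_{S,T}$ by $M\sim M_{S,T_0}$ costs only $O(\sqrt\eta)=O(\delta)$ in error probability, and then use $\E_{s}\DKL(\cD'[T\mid S=s]\|\cD_\ur[T\mid S=s])\le\eta$ (again with Pinsker and Jensen) to swap $T\sim\cD'[T\mid S]$ for $T\sim\cD_\ur[T\mid S]$ at an additional $O(\sqrt\eta)=O(\delta)$ cost. One simplification: your worry about the $O(1)$ bounds on $\DKL(\cD'[S]\|\cD_\ur[S])$ is unnecessary here, since in the claim $S$ is drawn from $\cD'[S]$ throughout and never needs to be transferred to $\cD_\ur[S]$; only the conditional $\eta$-level hypotheses are used, exactly as you conclude in your ``cleanest route.''
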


We have $x_i\in A$ if and only if protocol $P$ outputs a correct answer on the set $T$ of round $i$, which has size $r_i$.
If $T$ was a uniformly random subset of $S$ of size $r_i$, this probability would be $\Pr(\textrm{$P$ is correct}\mid |T|=r_i)$.
However, $T$ might not be uniform, since it depends on the outputs of the previous rounds.

The process of generating $T$ for round $i$ can be viewed as follows: if $x_{i_0}$ is a correct output, add a uniformly random subset of $S$ of size $r_{i_0+1}-r_{i_0}$ which contains $x_{i_0}$ to $T$ (i.e., a uniformly random subset conditioned on it containing $x_{i_0}$), otherwise, add a uniformly random subset of $S$ of size $r_{i_0+1}-r_{i_0}$ to $T$; if $x_{i_0+1}$ is a correct output, add a random subset of size $r_{i_0+2}-r_{i_0+1}$ containing $x_{i_0+1}$ to $T$, otherwise, add a random subset of size $r_{i_0+2}-r_{i_0+1}$; and so on.
If all rounds were adding uniformly random subsets to $T$, $T$ would have been uniform. 
But in a subset of the rounds, we are adding random subsets containing certain elements.
We claim that $T$ is actually not far from uniform. The proof of the following claim is similar to that of \cite[Lemma 5]{KapralovNPWWY17}.
\begin{claim}\label{cl_t0}
	Suppose $T_0$ is sampled according to $\cD_\ur[T\mid S]$. Then for any $S$, in round $i\geq i_0$, for any $T'$, $\Pr(T=T'\mid i_0)\leq \frac{1}{\binom{m}{r_i}}\cdot \delta^{-0.9}$.
\end{claim}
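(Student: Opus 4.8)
The plan is to mimic the encoding-based argument of \cite[Lemma~5]{KapralovNPWWY17}. Fix $S$, the index $i_0$, the target round $i\ge i_0$, and a target set $T'\subseteq S$ with $|T'|=r_i$; it suffices to fix the message $M$ and the starting set $T_0$ as well, bound $\Pr(T=T'\mid S,T_0,M,i_0)$, and then maximize over $M$ and $T_0$ (recall that, conditioned on $i_0$, the hypothesis $T_0\sim\cD_\ur[T\mid S]$ makes $T_0$ a uniformly random $r_{i_0}$-subset of $S$, and that only $T_0\subseteq T'$ can contribute). Once $S,T_0,M,i_0$ are fixed, the entire encoding becomes a deterministic function of the public permutation $\pi$, and in fact only of the ordering $\sigma$ that $\pi$ induces on $W:=S\setminus T_0$; and $\sigma$ is a uniformly random ordering of the $|W|=m-r_{i_0}$ elements of $W$.

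The crux is the following structural observation: a ``good'' round never collides with the portion of $\sigma$ already consumed. Indeed, if round $j$ outputs $x_j=O(M,T_j)\in S\setminus T_j$, then $x_j\notin T_j$, whereas everything read from $\sigma$ so far lies in $T_j$, so $x_j$ is strictly ahead of the read pointer. Hence over the whole run the elements entering $T$ through the ``fill'' steps are precisely the first $r_i-r_{i_0}-a$ elements of $\sigma$ restricted to $W\setminus F$, where $F$ is the (random) set of forced elements and $a=|F|\le i-i_0$; in particular $T=T_0\cup F\cup(\text{first }r_i-r_{i_0}-a\text{ elements of }\sigma|_{W\setminus F})$. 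Since restricting the uniform ordering $\sigma$ to the fixed set $W\setminus F$ yields a uniform ordering of $W\setminus F$, and (with more care, exactly as in \cite{KapralovNPWWY17}) conditioning on the value of the forced set $F$ does not skew this restricted ordering by more than a $\delta^{-o(1)}$ factor, the conditional distribution of $T$ given the forced set is essentially uniform over the $r_i$-subsets of $S$ containing $T_0\cup F$; thus, up to this point,
\begin{equation*}
\Pr(T=T'\mid S,T_0,M,i_0)\ \lesssim\ \E_F\Big[\binom{m-r_{i_0}-|F|}{\,r_i-r_{i_0}-|F|\,}^{-1}\Big]\ \le\ \binom{m-r_{i_0}-(i-i_0)}{\,r_i-r_{i_0}-(i-i_0)\,}^{-1},
\end{equation*}
the last step using that the reciprocal binomial is nondecreasing in $|F|$ and $|F|\le i-i_0$.

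It remains to bound the blow-up $\binom{m-r_{i_0}}{r_i-r_{i_0}}\big/\binom{m-r_{i_0}-(i-i_0)}{r_i-r_{i_0}-(i-i_0)}=\prod_{l=0}^{i-i_0-1}\frac{m-r_{i_0}-l}{r_i-r_{i_0}-l}$. The geometric law $r_j=\lfloor m(1-(1-\alpha)^j)\rfloor$ gives $m-r_{i_0}=\Theta(m(1-\alpha)^{i_0})$ and $r_i-r_{i_0}=\Theta\big(m(1-\alpha)^{i_0}(1-(1-\alpha)^{i-i_0})\big)$, while the standing hypothesis $\delta\ge 2^{-U^{1-\epsilon}}$ forces $\alpha m$ to be polynomially large, so $r_{j+1}-r_j\gg 1$ for all $j\le R$; hence $i-i_0\ll r_i-r_{i_0}$, the floors are harmless, and the product is at most $(1+o(1))\big(1-(1-\alpha)^{i-i_0}\big)^{-(i-i_0)}$. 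The elementary inequality $\sup_{k\ge1}\big(1-(1-\alpha)^{k}\big)^{-k}=e^{O(1/\alpha)}$ (the supremum occurring near $k=\Theta(1/\alpha)$) then bounds this by $e^{O(1/\alpha)}$, and with $\alpha=20/\log(1/\delta)$ we get $e^{O(\log(1/\delta)/20)}\le\delta^{-0.9}$, the absolute constant in the exponent being small enough for $\delta<\delta_0$. Finally, using the identity $\binom m{r_{i_0}}\binom{m-r_{i_0}}{r_i-r_{i_0}}=\binom m{r_i}\binom{r_i}{r_{i_0}}$ together with the fact that there are $\binom{r_i}{r_{i_0}}$ sets $T_0\subseteq T'$ of size $r_{i_0}$, each occurring with probability $1/\binom m{r_{i_0}}$, yields $\Pr(T=T'\mid i_0)\le\delta^{-0.9}/\binom m{r_i}$. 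The main obstacle is the structural step of the second paragraph: beyond checking that forced elements always lie ahead of the read pointer, one must verify that conditioning on the forced set leaves the read prefix of $\sigma|_{W\setminus F}$ close to uniform, which is the part of the argument that has to follow \cite{KapralovNPWWY17} closely; everything after that is the routine parameter computation sketched above.
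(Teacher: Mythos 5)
Your skeleton is right — fix $T_0\subseteq T'$ and $M$, observe that a forced element $x_j=O(M,T_j)$ always lies strictly beyond the portion of $\pi$ already consumed (so the filled elements form a prefix of $\pi$ restricted to $W\setminus F$), bound the resulting product of ratios by $e^{O(1/\alpha)}\le\delta^{-0.9}$ using the geometric decay of the $r_j$'s, and recombine via $\binom{m}{r_{i_0}}\binom{m-r_{i_0}}{r_i-r_{i_0}}=\binom{m}{r_i}\binom{r_i}{r_{i_0}}$. That recombination and the parameter calculation at the end are essentially the same as the paper's (the paper's product is $\prod_{j=i_0}^{i-1}\frac{m-r_j}{r_i-r_j}=\prod_{k=1}^{i-i_0}(1-(1-\alpha)^k)^{-1}$ rather than your $(1-(1-\alpha)^{i-i_0})^{-(i-i_0)}$, but both are $e^{O(1/\alpha)}$; also your ``$1+o(1)$'' for the floor/shift corrections is really a factor $e^{O((i-i_0)^2/(r_i-r_{i_0}))}$, which is not $1+o(1)$ throughout the allowed range of $\delta$, though it is still $\delta^{-o(1)}$ and hence harmless).

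The genuine gap is the step you yourself flag as the main obstacle: bounding $\Pr\bigl(T=T'\mid F=F_0\bigr)$ by $\delta^{-o(1)}\binom{m-r_{i_0}-|F_0|}{r_i-r_{i_0}-|F_0|}^{-1}$. The event $\{F=F_0\}$ is a complicated function of the ordering $\sigma$ (whether $x_j$ gets forced depends on which elements were filled before round $j$, which depends on $\sigma$), so conditioning on it can severely bias the read prefix: e.g.\ if $O(M,\cdot)$ forces $x_{j+1}=w^{**}$ only when some designated $w^{\#}$ was filled in round $j$, then conditioning on $w^{**}\in F$ forces $w^{\#}$ into the round-$j$ prefix, inflating singleton probabilities by roughly $\frac{m-r_j}{r_{j+1}-r_j}\approx 1/\alpha$ per round; compounded over up to $R=\Theta(\frac1\alpha\log(\alpha m))$ rounds this naive accounting gives $(1/\alpha)^{R}$, which vastly exceeds $\delta^{-0.9}$. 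So the $\delta^{-o(1)}$ claim is not something you can wave at \cite{KapralovNPWWY17}; it is the entire content of the lemma, and in your global-conditioning formulation it is at best much harder and possibly not true as stated. The paper sidesteps this entirely by working round by round: conditioned on the history, the forced element is \emph{deterministic} and the fill is an \emph{exactly} uniform subset of the remaining elements, so each round contributes a conditional probability of at most $\binom{r_i-r_j-1}{r_{j+1}-r_j-1}/\binom{m-r_j-1}{r_{j+1}-r_j-1}$ (the worst case being that the forced element lands in $T'\setminus T_j$), and the product telescopes exactly to $\frac{1}{\binom{m}{r_i}}\prod_{j=i_0}^{i-1}\frac{m-r_j}{r_i-r_j}$ with no approximate-uniformity statement needed. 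I'd recommend restructuring your argument around that sequential conditioning rather than conditioning on the final forced set $F$.
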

That is, the probability of each singleton event may only increase by a factor of $\delta^{-0.9}$.

Fix an $s\in\supp(\cD'[S])$, we first analyze $\E (|A|\mid S=s)$ \emph{assuming} that $T_0$ were drawn according to $\cD_\ur$. Then, assuming the claim, we have for any $i\geq i_0$ that
\[
	\Pr(x_i\notin A\mid i_0, S=s)\leq \Pr_{T\sim \cD_\ur[T\mid S=s]}(P \textrm{ is \emph{incorrect} on } (s,T)\text{ when using } M\mid |T|=r_i,S=s)\cdot \delta^{-0.9}.
\]
Therefore, we have
\begin{align*}
\E_{T_0\sim \cD_\ur[T\mid S=s]}(|A|\mid S) &= \sum_{i_0=0}^{R-1} \Pr(i_0)\cdot \E (|A| \mid i_0,S=s)\\
&{}\geq \sum_{i_0=0}^{R-1} \Pr(i_0) \cdot (R-i_0-\sum_{i=i_0}^{R-1}\Pr_{T\sim \cD_\ur[T\mid S=s]}(P\textrm{ is \emph{incorrect} using $M$}\mid |T|=r_i,S=s)\cdot \delta^{-0.9})\\
&{}\geq \sum_{i_0=0}^{R-1} \Pr(i_0) \cdot (R - i_0 - \Pr_{T\sim \cD_\ur[T\mid S=s]}(P\textrm{ is \emph{incorrect} using $M$}\mid S=s)\cdot R\cdot \delta^{-0.9})\\
&{} = R/2 - \Pr_{T\sim \cD_\ur[T\mid S=s]}(P\textrm{ is \emph{incorrect} on $(s,T)$ using $M$})\cdot R\cdot \delta^{-0.9}.
\end{align*}

Now, note that for a random $S\sim \cD'[S]$, we have
\[
	\E_S\DKL(\cD'[T|S]\|\cD_\ur[T|S])\leq \eta
\]
by the premise of the lemma, and
\[
	\E_S \Pr_{T\sim \cD_\ur[T\mid S]}(P\textrm{ is \emph{incorrect} on $(S,T)$ using $M$})\leq O(\delta)
\]
by Claim~\ref{cl_err_prob}.
Then by Markov's inequality and union bound, for a constant fraction of $s$ according to $\cD'[S]$, we have both
\[
	\DKL(\cD'[T|S=s]\|\cD_\ur[T|S=s])\leq O(\eta)
\]
and
\[
	\Pr_{T\sim \cD_\ur[T\mid S=s]}(P\textrm{ is \emph{incorrect} on $(s,T)$ using $M$})\leq O(\delta).
\]

For any such $s$, by Pinsker's inequality, $|\cD'[T\mid S=s]-\cD_\ur[T\mid S=s]|\leq O(\sqrt{\eta})$. Note also $|A|\in[0, R]$ always. Thus for such $s$,
\begin{align*}
	\E_{T_0\sim\cD'[T\mid S=s]} (|A|\mid S=s)&\geq \E_{T_0\sim\cD_\ur[T\mid S=s]} (|A|\mid S=s) - O(R\cdot \sqrt{\eta}) \\
	&\geq R/2-\Pr_{T\sim \cD_\ur[T\mid S=s]}(P\textrm{ is \emph{incorrect} on $(s,T)$ using $M$})\cdot R\cdot \delta^{-0.9}-O(R\cdot \sqrt{\eta}) \\
	&\geq R/2-O(R\cdot \delta^{0.1}+R\cdot\sqrt{\eta}).
\end{align*}

As long as $2^{-U^{1-\epsilon}}\leq \delta<\delta_0$, $\E(|A|\mid S=s)\geq \Omega(R)$.
Since $|A|$ is always nonnegative, and such $s$ occurs with a constant probability, we have $\E |A|\geq \Omega(R)$.
Therefore, the information cost $\icost$ is at least $\Omega\left(R\cdot \log \frac{U-m}{m}\right)=\Omega(\log\frac{1}{\delta}\log^2 U)$.

Now we prove the three claims.

\begin{proof}[Proof of Claim~\ref{cl_ist}]
By the premise of the lemma, we have
\[
	\E_{t\sim \cD'[T]}\DKL(\cD'[S\mid T=t]\|\cD_{\ur}[S\mid T=t])\leq \eta.
\]
$\cD_\ur[S\mid T=t]$ is a uniform distribution with entropy $\log\binom{U-|t|}{m-|t|}$.
Hence, we have
\begin{align*}
	H_{\cD'}(S\mid T)&=\E_{t\sim\cD'[T]} H_{\cD'}(S\mid T=t) \\
	&\geq \E_{|T|\sim \cD'[|T|]}\log\binom{U-|T|}{m-|T|}-\eta,
\end{align*}
where we use the fact that $\DKL(q\|p)=H(p)-H(q)$ for uniform $p$.

Finally, by definition,
\begin{align*}
	I_{\cD'}(S;T)&=H_{\cD'}(S)-H_{\cD'}(S\mid T) \\
	&\leq \log \binom{U}{m}-\E_{|T|\sim \cD'[|T|]}\log\binom{U-|T|}{m-|T|}+\eta.
\end{align*}
This proves the claim.
\end{proof}

\begin{proof}[Proof of Claim~\ref{cl_err_prob}]
Since $P$ is a $\eta$-protocol, by definition, $I(M;T_0\mid S)\leq \eta$.
By Pinsker's inequality, and the fact that $I_p(X;Y\mid Z)=\E_{y,z}\DKL(p[X\mid Y=y,Z=z]\| p[X\mid Z=z])$,
\begin{align*}
	I(M;T_0\mid S)&=\E_{t_0,s}\DKLver{\cD'[M\mid T_0=t_0,S=s]}{\cD'[M\mid S=s]} \\
	&\geq \Omega(\E_{t_0,s}|\cD'[M\mid T_0=t_0,S=s]-\cD'[M\mid S=s]|^2) \\
	&\geq \Omega\left(\left(\E_{t_0,s}|\cD'[M\mid T_0=t_0,S=s]-\cD'[M\mid S=s]|\right)^2\right).
\end{align*}
The expected statistical distance $\E|\cD'[M\mid T_0,S]-\cD'[M\mid S]|\leq O(\sqrt{\eta})\leq O(\delta)$.
By triangle inequality, for $T\sim \cD'[T\mid S]$, $\E|\cD'[M\mid T_0,S]-\cD'[M\mid T,S]|\leq O(\delta)$.
Since the error probability of $P$ is at most $\delta$, it implies that even if we use $M$ generated from another set $T_0$, since the statistical distance is small, the error probability is at most $O(\delta)$.

Finally, for $T\sim \cD_\ur[T\mid S]$, since $\E|\cD_\ur[T\mid S]-\cD'[T\mid S]|\leq O(\delta)$, the error probability for $(S, T)$ is also upper bounded by $O(\delta)$.
\end{proof}

\begin{proof}[Proof of Claim~\ref{cl_t0}]
Let us upper bound the probability that $T=T'$ for any $T'$ in round $i$:
\allowdisplaybreaks
\begin{align*}
	\Pr(T=T'\mid i_0) &\leq \frac{\binom{r_i}{r_{i_0}}}{\binom{m}{r_{i_0}}}\cdot \prod_{j=i_0}^{i-1}\frac{\binom{r_i-r_j-1}{r_{j+1}-r_j-1}}{\binom{m-r_j-1}{r_{j+1}-r_j-1}} \\
	&= \frac{r_i!\cdot (m-r_{i_0})!}{m!\cdot (r_i-r_{i_0})!}\cdot\prod_{j=i_0}^{i-1}\frac{(r_i-r_j-1)!(m-r_{j+1})!}{(r_i-r_{j+1})!(m-r_j-1)!} \\
	&= \frac{r_i!(m-r_i)!}{(r_i-r_i)!m!}\prod_{j=i_0}^{i-1}\frac{(r_i-r_j-1)!(m-r_j)!}{(r_i-r_j)!(m-r_j-1)!} \\
	&= \frac{1}{\binom{m}{r_i}}\prod_{j=i_0}^{i-1}\frac{m-r_j}{r_i-r_j} \\
	&\leq \frac{1}{\binom{m}{r_i}}\prod_{j=0}^{i-1}\frac{m-m(1-(1-\alpha)^j)+1}{m(1-(1-\alpha)^j)-m(1-(1-\alpha)^i)-1} \\
	&\leq \frac{1+o(1)}{\binom{m}{r_i}}\prod_{j=0}^{i-1}\frac{(1-\alpha)^j}{(1-\alpha)^j-(1-\alpha)^i} \\
	&= \frac{1+o(1)}{\binom{m}{r_i}}\prod_{j=1}^{i}\frac{1}{1-(1-\alpha)^j}.
\end{align*}

Note that $(1-\alpha)^j\leq 1-\frac{1}{3}\alpha j$ as long as $1\leq j\leq 1/\alpha$.
We have
\[
	\prod_{1\leq j\leq 1/\alpha}\frac{1}{1-(1-\alpha)^j}\leq \prod_{1\leq j\leq 1/\alpha}\frac{3}{\alpha j}\leq (3/\alpha)^{1/\alpha}\cdot (e\alpha)^{1/\alpha}=(3e)^{1/\alpha}.
\]
On the other hand, when $j>1/\alpha$, $\frac{1}{1-(1-\alpha)^j}\leq e^{2(1-\alpha)^j}$.
Hence,
\[
	\prod_{j>1/\alpha}\frac{1}{1-(1-\alpha)^j}\leq e^{2\sum_{j>1/\alpha}(1-\alpha)^j}\leq e^{2/(e\alpha)}.
\]
Therefore, we have $\Pr(T=T')\leq \frac{1+o(1)}{\binom{m}{r_i}}\cdot (3e\cdot e^{2/e})^{1/\alpha}\leq \frac{1}{\binom{m}{r_i}}\cdot \delta^{-0.9}$.
\end{proof}

\end{document}